\def\BState{\State\hskip-\ALG@thistlm}
\newtheorem{theorem}{Theorem}[section]
\newtheorem{lemma}[theorem]{Lemma}
\newcommand\norm[1]{\left\lVert#1\right\rVert}
\newcommand\numberthis{\addtocounter{equation}{1}\tag{\theequation}}
\def\R{\mathbb{R}}
\def\BL{\bm{L}}
\def\BD{\bm{D}}
\def\BS{\bm{S}}
\def\lb{\left(}
\def\rb{\right)}
\def\ln{\left\|}
\def\rn{\right\|}
\def\lab{\left|}
\def\rab{\right|}
\newcommand{\KW}[1]{}
\begin{document}

\title{Accelerated Alternating Projections for Robust Principal Component Analysis}
\author[1]{\rm HanQin Cai}
\author[2]{\rm Jian-Feng Cai}
\author[3]{\rm Ke Wei}
\affil[1]{Department of Mathematics,
       University of California, Los Angeles\\
       
       Los Angeles, California, USA}
\affil[2]{Department of Mathematics, Hong Kong University of Science and Technology\\

Clear Water Bay, Kowloon, Hong Kong SAR, China}
\affil[3]{School of Data Science, Fudan University, Shanghai, China}
\affil[ ]{\text {$^1$hqcai@math.ucla.edu \qquad \qquad $^2$jfcai@ust.hk \qquad \qquad $^3$kewei@fudan.edu.cn}}
\maketitle

\begin{abstract}
We study robust PCA for the fully observed setting, which is about separating a low rank matrix $\BL$ and a sparse matrix $\BS$ from their sum $\BD=\BL+\BS$. In this paper, a new algorithm, dubbed accelerated alternating projections, is introduced  for robust PCA 
which significantly improves the computational efficiency of the existing alternating projections proposed in \citep{netrapalli2014non} when updating the low rank factor. The acceleration is achieved by first projecting a matrix onto some low dimensional subspace before obtaining a new estimate of the low rank matrix via truncated SVD.
Exact recovery guarantee has been established which shows linear convergence of the proposed algorithm. Empirical performance evaluations establish the advantage of our algorithm over other state-of-the-art algorithms for robust PCA.\end{abstract}


\section{Introduction}
Robust principal component analysis (RPCA) appears in a wide range of applications, including video and voice background subtraction  \citep{li2004statistical,huang2012singing}, sparse graphs clustering \citep{chen2012clustering}, 3D reconstruction \citep{mobahi2011holistic}, and fault isolation \citep{tharrault2008fault}. Suppose we are given a sum of a 
low rank matrix and a sparse matrix, denoted $\BD=\BL+\BS$. The goal of RPCA is to reconstruct $\BL$ and $\BS$ simultaneously from $\BD$. As a concrete example, for  foreground-background separation in video processing, $\bm{L}$ represents  static background through all the frames of a video which should be low rank while 
$\bm{S}$ represents moving objects which can be assumed to be sparse since typically they will not block a large portion of the screen for a long time. 

RPCA can be achieved by seeking a low rank matrix $\bm{L'}$ and a sparse matrix $\bm{S'}$ such that their sum fits the measurement matrix $\BD$ as well as possible:
 \begin{equation} \label{eq:non-convex model 2}
\min_{\bm{L'},\bm{S'}\in\mathbb{R}^{m\times n}} \|\bm{D}-\bm{L'}-\bm{S'}\|_F \quad \textnormal{ subject to } {rank}(\bm{L'}) \leq r \textnormal{ and } \|\bm{S'}\|_0 \leq |\Omega|,
\end{equation}
where $r$ denotes the rank of the underlying low rank matrix, $\Omega$ denotes the support set of the underlying sparse matrix, and $\|\bm{S'}\|_0$ counts the number 
of non-zero entries in $\bm{S'}$.
Compared to the traditional principal component analysis (PCA) which computes a low rank approximation of a data matrix,   RPCA is less sensitive to outliers since it includes a sparse part in the formulation. 
Since the seminal works of \citep{wright2009robust,candes2011robust,chandrasekaran2011rank}, RPCA has received intensive investigations both from theoretical and algorithmic aspects.  Noticing that 
\eqref{eq:non-convex model 2} is a non-convex problem, some of the earlier works  focus on the following convex relaxation of RPCA: 
\begin{equation} \label{eq:convex model}
\min_{\bm{L'},\bm{S'}\in\mathbb{R}^{m\times n}} \|\bm{L'}\|_*+\lambda\|\bm{S'}\|_1 \quad \textnormal{ subject to } \bm{L'} + \bm{S'}=\bm{D},
\end{equation}
where $\|\cdot\|_*$ is the nuclear norm ({\em viz.} trace norm) of matrices, $\lambda$ is the regularization parameter, and $\|\cdot\|_1$ denotes the $\ell_1$-norm of the vectors obtained by stacking the columns of associated matrices. 
Under some mild conditions, it has been proven that the RPCA problem can be solved exactly by the aforementioned  convex relaxation \cite{candes2011robust,chandrasekaran2011rank}.
However, a limitation of the convex relaxation based approach is that the resulting semidefinite programming is computationally rather expensive to solve, even for medium size matrices. Alternative to the convex relaxation, many non-convex algorithms have been designed to target \eqref{eq:non-convex model 2} directly. This line of research will be reviewed in more detail in Section~\ref{subsec:related work} after our approach has been introduced. 

This paper targets the non-convex optimization for RPCA directly. The main contributions of this work are two-fold. Firstly, we propose a new algorithm,  accelerated alternating projections (AccAltProj), for RPCA, which is substantially faster than  other  state-of-the-art algorithms. Secondly,  exact recovery of accelerated alternating projections has been established for the fixed sparsity model, where we assume the ratio of the number of non-zero entries in each row and column of $\BS$ is less than a threshold. 

\subsection{Assumptions}
It is clear that  the RPCA problem is ill-posed without any additional conditions. Common assumptions are that $\BL$ cannot be too sparse and $\BS$ cannot be locally too dense, which are formalized in \nameref{assume:Inco} and \nameref{assume:Sparse}, respectively. 
\paragraph*{A1}\label{assume:Inco} 
\textit{The underlying low rank matrix $\bm{L}\in \mathbb{R}^{m\times n}$ is a rank-$r$ matrix with ${\mu}$-incoherence, that is
\begin{equation*}
\max_i \|\bm{e}_i^T \bm{U}\|_2\leq \sqrt{\frac{\mu r}{m}}, \quad\textnormal{and}\quad \max_j \|\bm{e}_j^T \bm{V}\|_2\leq \sqrt{\frac{\mu r}{n}}
\end{equation*}
hold for a positive numerical constant $1\leq\mu\leq\frac{\min\{m,n\}}{r}$, where $\bm{L}=\bm{U}\bm{\Sigma} \bm{V}^T$ is the SVD of $\bm{L}$. }

Assumption \nameref{assume:Inco} was first introduced in \citep{candes2009exact} for low rank matrix completion, and now it is a very standard assumption for related low rank reconstruction problems. It basically states that the left and right singular vectors of $\BL$ are weakly correlated with the canonical basis, which implies $\BL$ cannot be a very sparse matrix.

\paragraph*{A2}\label{assume:Sparse}
\textit{The underlying sparse matrix $\bm{S}\in \mathbb{R}^{m\times n}$ is $\alpha$-sparse. That is, $\bm{S}$ has at most $\alpha n$  non-zero entries in each row, and at most $\alpha m$ non-zero entries in each column.  In the other words, for all $1\leq i \leq m, 1\leq j \leq n$, 
\begin{equation}\label{eq:p_model}
\|\bm{e}_i^T \bm{S}\|_0 \leq \alpha n\quad\mbox{and} \quad \|\bm{S}\bm{e}_j\|_0 \leq \alpha m.
\end{equation}
In this paper, we assume\footnote{The standard notion ``$\lesssim$'' in \eqref{eq:condition_on_p} means there exists an absolute numerical constant $C>0$ such that $\alpha$ can be upper bounded by $C$ times the right hand side. }
\begin{equation}\label{eq:condition_on_p}
\alpha\lesssim\min \left\{ \frac{1}{\mu r^2 \kappa^3}, \frac{1}{\mu^{1.5} r^2\kappa},\frac{1}{\mu^2r^2}\right\},
\end{equation}
where $\kappa$ is the condition number of $\bm{L}$}.


Assumption \nameref{assume:Sparse} states that the non-zero entries of the sparse matrix $\BS$ cannot concentrate in a few rows or columns, so there does not exist a low rank component in $\BS$. If the indices of the support set $\Omega$ are sampled independently  from the Bernoulli distribution with the associated parameter being slightly smaller than $\alpha$, by the Chernoff inequality, one can easily show that  \eqref{eq:p_model} holds with high probability. 

\subsection{Organization and Notation of the Paper } \label{subsec:notation}
The rest of the paper is organized as follows. In the remainder of this section, we introduce standard notation  that is used throughout the paper. Section~\ref{subsec:proposed algorithms} presents the proposed algorithm and discusses how to implement it efficiently. The theoretical recovery guarantee of the proposed algorithm is presented in Section~\ref{subsec:guaranteed results}, followed by   a review of prior art for RPCA. In Section~\ref{sec:experience}, we present the numerical simulations of our algorithm.  
Section~\ref{sec:proofs} contains all the mathematical proofs of our main theoretical result. We conclude this paper with future directions in Section~\ref{sec:discussion}.

In this paper, vectors are denoted by bold lowercase letters (e.g., $\bm{x}$), matrices are denoted by bold capital letters (e.g., $\bm{X}$), and operators are denoted by calligraphic letters (e.g., $\mathcal{H}$). In particular, $\bm{e}_i$ denotes the $i^{th}$ canonical  basis vector, $\bm{I}$ denotes the identity matrix, and $\mathcal{I}$ denotes the identity operator. For a vector $\bm{x}$, $\|\bm{x}\|_0$ counts the number of non-zero entries in $\bm{x}$, and $\|\bm{x}\|_2$ denotes the $\ell_2$ norm of $\bm{x}$. For a matrix $\bm{X}$, $[\bm{X}]_{ij}$ denotes its $(i,j)^{th}$ entry, $\sigma_i(\bm{X})$ denotes its $i^{th}$ singular value, $\|\bm{X}\|_\infty=\max_{ij} |[\bm{X}]_{ij} |$ denotes the maximum magnitude of its entries, $\|\bm{X}\|_2=\sigma_1(\bm{X})$ denotes its spectral norm, $\|\bm{X}\|_F=\sqrt{\sum_i \sigma_i^2(\bm{X})}$ denotes its Frobenius norm, and $\|\bm{X}\|_*=\sum_i \sigma_i(\bm{X})$ denotes its nuclear norm. The inner product of two real valued vectors is defined as $\langle\bm{x},\bm{y}\rangle=\bm{x}^T\bm{y}$, and the inner product of two real valued matrices is defined as $\langle\bm{X},\bm{Y}\rangle=Trace(\bm{X}^T\bm{Y})$, where $(\cdot)^T$ represents the transpose of a vector or matrix.

Additionally, we sometimes use the shorthand $\sigma_i^A$ to denote the $i^{th}$ singular value of a matrix $\bm{A}$. Note that $\kappa=\sigma_{1}^L/\sigma_{r}^L$ always denotes the condition number of the underlying rank-$r$ matrix $\bm{L}$, and $\Omega=supp(\bm{S})$ is always referred to as the support of the underlying sparse matrix $\bm{S}$. At the $k^{th}$ iteration of the proposed algorithm, the estimates of the low rank matrix and the sparse matrix are denoted by $\bm{L}_k$ and $\bm{S}_k$, respectively.

\section{Algorithm and Theoretical Results} \label{sec:algo and results}
In this section, we present the new algorithm and its recovery guarantee. For ease of exposition, we assume all matrices are square (i.e., $m=n$), but emphasize that nothing is special about this assumption and all the results can be easily extended to rectangular matrices.
\subsection{Proposed Algorithm}  \label{subsec:proposed algorithms}
Alternating projections is a  minimization approach that has been successfully used in many fields, including image processing \citep{wang2008new,chan2000convergence,o2007alternating}, matrix completion \citep{keshavan2012efficient,jain2013low,hardt2013provable,tannerwei2016asd}, phase retrieval \citep{netrapalli2013phase,cai2017fast,zhang2017phase}, and many others \citep{peters2009interference,agarwal2014learning,yu2016alternating,pu2017complexity}. 
A  non-convex algorithm based on alternating projections, namely AltProj, is presented in \citep{netrapalli2014non} for RPCA accompanied with a theoretical recovery guarantee.  In each iteration, AltProj first updates $\bm{L}$ by projecting $\bm{D}-\bm{S}$ onto the space of rank-$r$ matrices, denoted  $\mathcal{M}_r$, and then updates $\bm{S}$ by projecting $\bm{D}-\bm{L}$ onto the space of sparse matrices, denoted $\mathcal{S}$; see  the left plot of Figure~\ref{fig:illustration} for an illustration. Regarding to the implementation of AltProj, 
the projection of a matrix onto the space of low rank matrices can be computed by the singular value decomposition (SVD) followed by truncating out small singular values, while the projection of a matrix onto the space of sparse matrices can be computed by the hard thresholding operator.
As a non-convex algorithm which targets \eqref{eq:non-convex model 2} directly, AltProj is computationally much more efficient than solving the convex relaxation problem \eqref{eq:convex model} using semidefinite programming (SDP). However, when projecting $\bm{D}-\bm{S}$ onto the low rank matrix manifold, AltProj requires to compute the SVD of a full size matrix, which is computationally expensive. Inspired by the work in  \citep{vandereycken2013low, wei2016guarantees_completion,wei2016guarantees_recovery}, we propose an accelerated algorithm for RPCA, coined accelerated alternating projections (AccAltProj), to circumvent the high computational cost of the SVD. The new algorithm is able to reduce the per-iteration computational cost of AltProj significantly, while  a theoretical guarantee can be similarly established.

\begin{figure}[t]
\subfloat[Illustration of AltProj\label{fig:AltProj}]
  {\includegraphics[width=.50\linewidth]{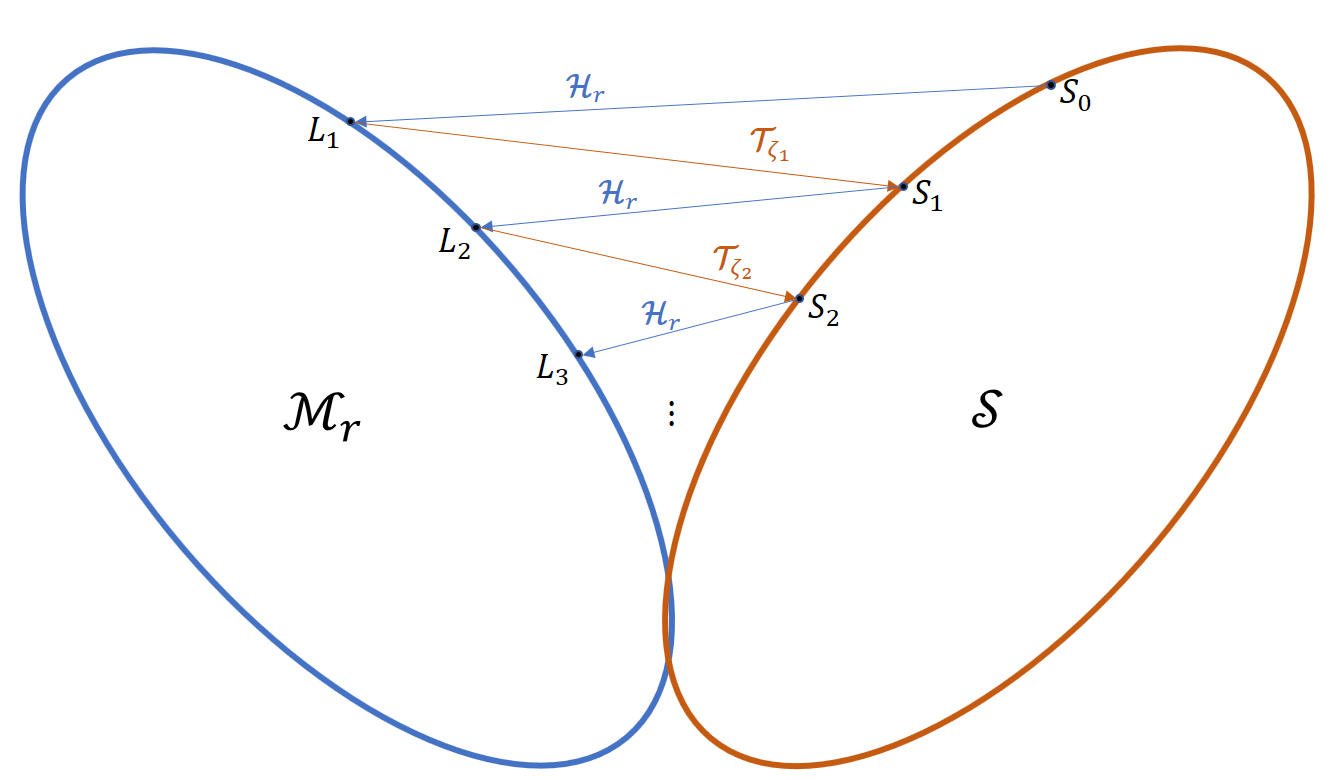}
  }\hfill
\subfloat[Illustration of AccAltProj\label{fig:AccAltProj}]
  {\includegraphics[width=.50\linewidth]{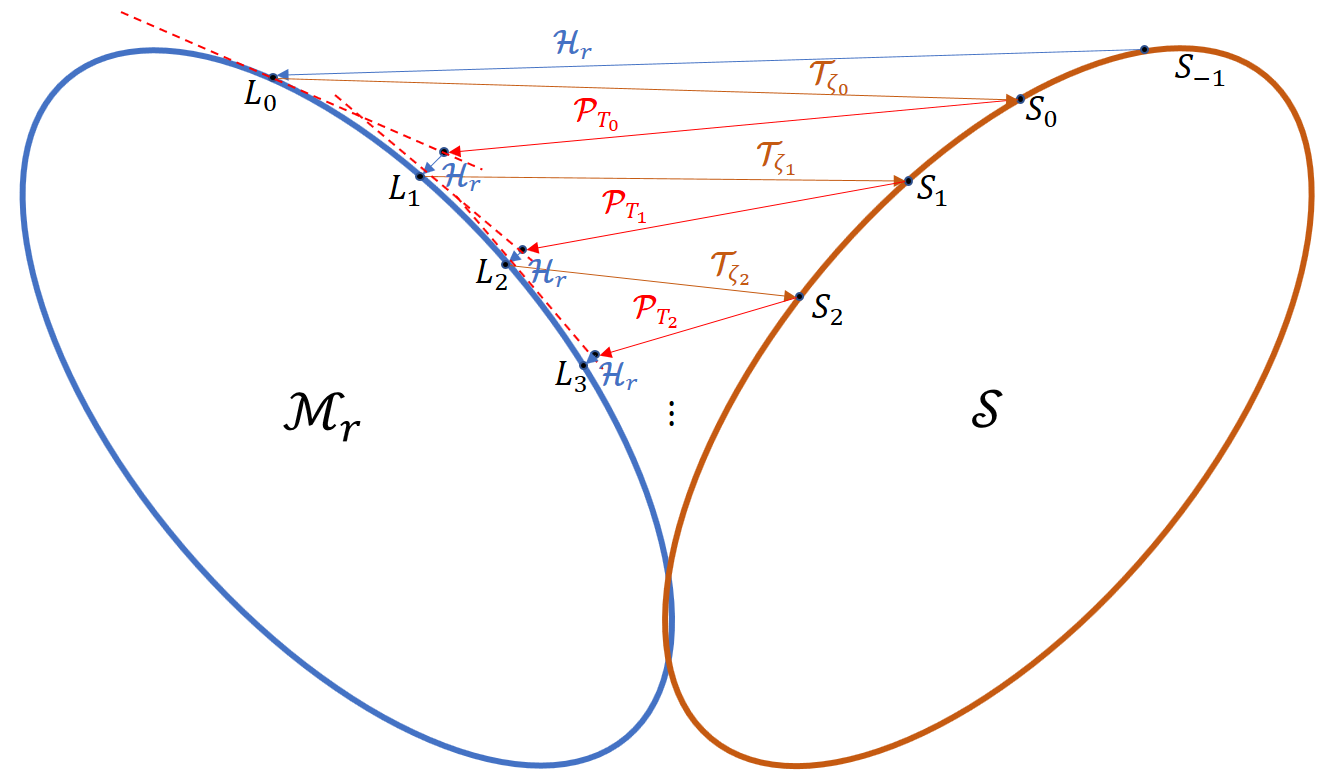}
  }\hfill
\caption{Visual comparison between  AltProj and AccAltProj, where $\mathcal{M}_r$ denotes the manifold of rank-$r$ matrices and $\mathcal{S}$ denotes the set of sparse matrices. The red dash line in \protect\subref{fig:AccAltProj} represents the tangent space of $\mathcal{M}_r$ at $\bm{L}_k$. In fact, each circle represents a sum of a low rank matrix and a sparse matrix, but with the component on one circle fixed when projecting onto the other circle. For conciseness, the trim stage, i.e., $\widetilde{\bm{L}}_k$, is not included in the plot for AccAltProj.}\label{fig:illustration}
\end{figure}

Our algorithm consists of two phases: initialization and projections onto $\mathcal{M}_r$ and $\mathcal{S}$ alternatively. We begin our discussion with the second phase, which is described in Algorithm~\ref{Algo:Algo1}. For geometric comparison between AltProj and AccAltProj, see Figure~\ref{fig:illustration}.

\begin{algorithm}[t]
\caption{Robust PCA by Accelerated Alternating Projections (AccAltProj)}\label{Algo:Algo1}
\begin{algorithmic}[1]
\State \textbf{Input:} $\bm{D}=\bm{L}+\bm{S}$: matrix to be split; $r$: rank of $\bm{L}$; $\epsilon$: target precision level; $\beta$: thresholding parameter; $\gamma$: target converge rate; $\mu$: incoherence parameter of $\bm{L}$.
\State \textbf{Initialization} 
\State $k=0$
\While{ \texttt{<$\|\bm{D}-\bm{L}_{k}-\bm{S}_{k}\|_F/\|\bm{D}\|_F \geq \epsilon$>} }
\State $\widetilde{\bm{L}}_{k}=\textnormal{Trim}(\bm{L}_{k},\mu)$
\State $\bm{L}_{k+1}=\mathcal{H}_r(\mathcal{P}_{\widetilde{T}_{k}}(\bm{D}-\bm{S}_{k}))$
\State $\zeta_{k+1}= \beta\left(\sigma_{r+1}\left(\mathcal{P}_{\widetilde{T}_{k}}(\bm{D}-\bm{S}_{k})\right) + \gamma^{k+1} \sigma_{1}\left(\mathcal{P}_{\widetilde{T}_{k}}(\bm{D}-\bm{S}_{k})\right)\right) $
\State $\bm{S}_{k+1}=\mathcal{T}_{\zeta_{k+1}}(\bm{D}-\bm{L}_{k+1})$
\State $k=k+1$
\EndWhile{\textbf{end while}}
\State \textbf{Output:} $\bm{L}_k$, $\bm{S}_k$
\end{algorithmic}
\end{algorithm}

\begin{algorithm}[t]
\caption{Trim}\label{Algo:Trim}
\begin{algorithmic}[1]
\State \textbf{Input:} $\bm{L}=\bm{U}\bm{\Sigma} \bm{V}^T$: matrix to be trimmed; $\mu$: target incoherence level.
\State $c_{\mu}=\sqrt{\frac{\mu r}{n}}$
\For{\texttt{<$i=1$ to $m$>}}
        \State $\bm{A}^{(i)}=\min\{1,\frac{c_{\mu}}{\|\bm{U}^{(i)}\|}\}\bm{U}^{(i)}$
\EndFor{\textbf{end for}}
\For{\texttt{<$j=1$ to $n$>}}
        \State $\bm{B}^{(j)}=\min\{1,\frac{c_{\mu}}{\|\bm{V}^{(j)}\|}\}\bm{V}^{(i)}$
\EndFor{\textbf{end for}}
\State \textbf{Output:} $\widetilde{\bm{L}}=\bm{A}\bm{\Sigma} \bm{B}$
\end{algorithmic}
\end{algorithm}

Let $(\BL_k,\BS_k)$ be a pair of current estimates. 
At the $(k+1)^{th}$ iteration, AccAltProj first trims $\BL_k$  into an incoherent matrix $\widetilde{\bm{L}}_k$ using Algorithm~\ref{Algo:Trim}. 
Noting that $\widetilde{\bm{L}}_k$ is still a rank-$r$ matrix, so its left and right singular vectors define an $(2n-r)r$-dimensional subspace \citep{vandereycken2013low}, 
\begin{equation}  \label{eq:tangent space tilde k}
\widetilde{T}_k=\{\widetilde{\bm{U}}_k\bm{A}^T+\bm{B}\widetilde{\bm{V}}_k^T ~|~\bm{A},\bm{B}\in\mathbb{R}^{n\times r} \},
\end{equation}
where $\widetilde{\bm{L}}_k=\widetilde{\bm{U}}_k\widetilde{\bm{\Sigma}}_k\widetilde{\bm{V}}_k^T$ is  the SVD of $\widetilde{\bm{L}}_k$\footnote{In practice, we only need the trimmed orthogonal matrices $\widetilde{\bm{U}}_k$ and $\widetilde{\bm{V}}_k$ for the projection $\mathcal{P}_{\widetilde{T}_k}$, and they can be computed efficiently via a QR decomposition. The entire matrix $\widetilde{\bm{L}}_k$ should never be formed in an efficient implementation of AccAltProj.}. Given a matrix $\bm{Z}\in\mathbb{R}^{n\times n}$, it can be easily verified that the projections of $\bm{Z}$ onto the subspace $\widetilde{T}_k$ and its orthogonal complement are given by
\begin{equation}  \label{eq:projection onto tangent space tilde k}
\mathcal{P}_{\widetilde{T}_k} \bm{Z}=\widetilde{\bm{U}}_k\widetilde{\bm{U}}_k^T\bm{Z}+\bm{Z}\widetilde{\bm{V}}_k\widetilde{\bm{V}}_k^T-\widetilde{\bm{U}}_k\widetilde{\bm{U}}_k^T\bm{Z}\widetilde{\bm{V}}_k\widetilde{\bm{V}}_k^T
\end{equation}
and
\begin{equation}  \label{eq:projection onto perpendicular space tilde k}
(\mathcal{I}-\mathcal{P}_{\widetilde{T}_k}) \bm{Z}=(\bm{I}-\widetilde{\bm{U}}_k\widetilde{\bm{U}}_k^T)\bm{Z}(\bm{I}-\widetilde{\bm{V}}_k\widetilde{\bm{V}}_k^T).
\end{equation}

As stated previously, AltProj truncates the SVD of $\bm{D}-\bm{S}_k$ directly to get a new estimate of $\BL$. { In contrast, AccAltProj
first projects $\bm{D}-\bm{S}_k$  onto the low dimensional subspace $\widetilde{T}_k$, and then projects the intermediate matrix onto the rank-$r$ matrix manifold $\mathcal{M}_r$ using the truncated SVD.} That is, 
$$\bm{L}_{k+1}=\mathcal{H}_r(\mathcal{P}_{\widetilde{T}_{k}}(\bm{D}-\bm{S}_{k})),$$
where $\mathcal{H}_r$ computes the best rank-$r$ approximation of a matrix, 
\begin{equation}
\mathcal{H}_r(\bm{Z}):=\bm{Q}\bm{\Lambda}_r\bm{P}^T \textnormal{ where }\bm{Z}=\bm{Q\Lambda P}^T\textnormal{ is its SVD and $[\bm{\Lambda}_r]_{ii}:=\begin{cases} [\bm{\Lambda}]_{ii} & i\leq r\\ 0& \mbox{otherwise}. \end{cases}$}
\end{equation}
Before proceeding, it is worth noting that the set of rank-$r$ matrices $\mathcal{M}_r$ form a smooth manifold of dimension $(2n-r)r$, and $\widetilde{T}_k$ is indeed the tangent space of $\mathcal{M}_r$ at $\widetilde{\bm{L}}_k$  \citep{vandereycken2013low}. Matrix manifold algorithms based on the tangent space of low dimensional spaces have been widely studied in the literature, see for example \citep{ngo2012scaled,mishra2012riemannian,vandereycken2013low,mishra2014r3mc,mishra2014fixed,wei2016guarantees_completion,wei2016guarantees_recovery} and references therein.  In particular, we invite  readers to explore the book  \citep{absil2009optimization} for more details about the  differential geometry ideas behind manifold algorithms. 

One can see that a SVD is still needed to obtain the new estimate $\bm{L}_{k+1}$. Nevertheless, it can be computed in a very efficient way \citep{vandereycken2013low,wei2016guarantees_completion,wei2016guarantees_recovery}. Let $(\bm{I}-\widetilde{\bm{U}}_k\widetilde{\bm{U}}_k^T)(\bm{D}-\bm{S}_k)\widetilde{\bm{V}}_k=\bm{Q}_1\bm{R}_1$ and $(\bm{I}-\widetilde{\bm{V}}_k\widetilde{\bm{V}}_k^T)(\bm{D}-\bm{S}_k)\widetilde{\bm{U}}_k=\bm{Q}_2\bm{R}_2$ be the QR decompositions of $(\bm{I}-\widetilde{\bm{U}}_k\widetilde{\bm{U}}_k^T)(\bm{D}-\bm{S}_k)\widetilde{\bm{V}}_k$ and $(\bm{I}-\widetilde{\bm{V}}_k\widetilde{\bm{V}}_k^T)(\bm{D}-\bm{S}_k)\widetilde{\bm{U}}_k$, respectively. Note that $(\bm{I}-\widetilde{\bm{U}}_k\widetilde{\bm{U}}_k^T)(\bm{D}-\bm{S}_k)\widetilde{\bm{V}}_k$ and $(\bm{I}-\widetilde{\bm{V}}_k\widetilde{\bm{V}}_k^T)(\bm{D}-\bm{S}_k)\widetilde{\bm{U}}_k$ can be computed by one matrix-matrix subtraction between an $n\times n$ matrix and an $n\times n$ matrix, two matrix-matrix multiplications between an $n\times n$ matrix and an $n\times r$ matrix, and a few matrix-matrix multiplications between a $r\times n$ and an $n\times r$ or  between an $n\times r$ matrix and a $r\times r$ matrix. Moreover,
A little algebra gives
\begin{align*}
\mathcal{P}_{\widetilde{T}_{k}} (\bm{D}-\bm{S}_k) &= \widetilde{\bm{U}}_k\widetilde{\bm{U}}_k^T(\bm{D}-\bm{S}_k)+(\bm{D}-\bm{S}_k)\widetilde{\bm{V}}_k\widetilde{\bm{V}}_k^T-\widetilde{\bm{U}}_k\widetilde{\bm{U}}_k^T(\bm{D}-\bm{S}_k)\widetilde{\bm{V}}_k\widetilde{\bm{V}}_k^T  \cr
			&=\widetilde{\bm{U}}_k\widetilde{\bm{U}}_k^T(\bm{D}-\bm{S}_k)(\bm{I}-\widetilde{\bm{V}}_k\widetilde{\bm{V}}_k^T)+(\bm{I}-\widetilde{\bm{U}}_k\widetilde{\bm{U}}_k^T)(\bm{D}-\bm{S}_k)\widetilde{\bm{V}}_k\widetilde{\bm{V}}_k^T+\widetilde{\bm{U}}_k\widetilde{\bm{U}}_k^T(\bm{D}-\bm{S}_k)\widetilde{\bm{V}}_k\widetilde{\bm{V}}_k^T  \cr
			&=\widetilde{\bm{U}}_k\bm{R}_2^T\bm{Q}_2^T+\bm{Q}_1\bm{R}_1\widetilde{\bm{V}}_k^T+\widetilde{\bm{U}}_k\widetilde{\bm{U}}_k^T(\bm{D}-\bm{S}_k)\widetilde{\bm{V}}_k\widetilde{\bm{V}}_k^T  \cr
			&=\begin{bmatrix}\widetilde{\bm{U}}_k & \bm{Q}_1\end{bmatrix} \begin{bmatrix}  \widetilde{\bm{U}}_k^T(\bm{D}-\bm{S}_k)\widetilde{\bm{V}}_k & \bm{R}_2^T  \\ \bm{R}_1 & \bm{0}   \end{bmatrix} \begin{bmatrix} \widetilde{\bm{V}}_k^T \\ \bm{Q}_2^T      \end{bmatrix}  \cr
			&:= \begin{bmatrix}\widetilde{\bm{U}}_k & \bm{Q}_1\end{bmatrix} \bm{M}_k \begin{bmatrix} \widetilde{\bm{V}}_k^T \\ \bm{Q}_2^T      \end{bmatrix},
\end{align*}
where the fourth line follows from the fact $\widetilde{\bm{U}}_k^T\bm{Q}_1=\widetilde{\bm{V}}_k^T\bm{Q}_2=\bm{0}$.
Let $\bm{M}_k = \bm{U}_{M_k}\bm{\Sigma}_{M_k}\bm{V}_{M_k}^T$ be the SVD of $\bm{M}_k$, which can be computed using $O(r^3)$ flops since $\bm{M}_k$ is a $2r\times 2r$ matrix. Then the  SVD of $\mathcal{P}_{\widetilde{T}_{k}} (\bm{D}-\bm{S}_k)=\widetilde{\bm{U}}_k\widetilde{\bm{\Sigma}}_k\widetilde{\bm{V}}_k^T$ can be computed by
\begin{equation}
\widetilde{\bm{U}}_{k+1}=\begin{bmatrix}\widetilde{\bm{U}}_k & \bm{Q}_1\end{bmatrix}\bm{U}_{M_k},\quad \widetilde{\bm{\Sigma}}_{k+1}=\bm{\Sigma}_{M_k},\quad\textnormal{and}\quad \widetilde{\bm{V}}_{k+1}=\begin{bmatrix}\widetilde{\bm{V}}_k & \bm{Q}_2\end{bmatrix}\bm{V}_{M_k}
\end{equation}
since both the matrices $\begin{bmatrix}\widetilde{\bm{U}}_k & \bm{Q}_1\end{bmatrix}$ and $\begin{bmatrix}\widetilde{\bm{V}}_k & \bm{Q}_2\end{bmatrix}$ are orthogonal. 
In summary,  the overall computational costs of $\mathcal{H}_r(\mathcal{P}_{\widetilde{T}_{k}}(\bm{D}-\bm{S}_{k}))$ lie in one matrix-matrix subtraction between an $n\times n$ matrix and an $n\times n$ matrix, two matrix-matrix multiplications between an $n\times n$ matrix and an $n\times r$ matrix, the QR decomposition of  two $n\times r$ matrices, an SVD of a $2r\times 2r$ matrix, and  a few matrix-matrix multiplications between a $r\times n$ matrix and an $n\times r$ matrix or between an $n\times r$ matrix and a $r\times r$ matrix, leading to a total of  $4n^2r+n^2+O(nr^2+r^3)$ flops. Thus, the dominant per iteration computational complexity of AccAltProj for updating the estimate of $\BL$ is the same as the novel gradient descent based approach introduced in \citep{yi2016fast}. In contrast, computing the best rank-$r$ approximation of a non-structured $n\times n$ matrix $\bm{D}-\bm{S}_k$ typically costs $O(n^2r)+n^2$ flops with a large hidden constant in front of $n^2r$.

After $\bm{L}_{k+1}$ is obtained, following the approach in \citep{netrapalli2014non}, we apply the hard thresholding operator to update the estimate of the sparse matrix, 
\begin{equation*}
\bm{S}_{k+1}=\mathcal{T}_{\zeta_{k+1}}(\bm{D}-\bm{L}_{k+1}),
\end{equation*}
where the thresholding operator $\mathcal{T}_{\zeta_{k+1}}$ is defined as
\begin{equation}
[\mathcal{T}_{\zeta_{k+1}}\bm{Z}]_{ij} =
\begin{cases}
[\bm{Z}]_{ij} & |[\bm{Z}]_{ij}| >\zeta_{k+1}\\
0  & \mbox{otherwise}
\end{cases}
\end{equation}
for any matrix $\bm{Z}\in\mathbb{R}^{m\times n}$. Notice that the thresholding value of $\zeta_{k+1}$ in Algorithm~\ref{Algo:Algo1} is chosen as
$$\zeta_{k+1}= \beta\left(\sigma_{r+1}\left(\mathcal{P}_{\widetilde{T}_{k}}(\bm{D}-\bm{S}_{k})\right) + \gamma^{k+1} \sigma_{1}\left(\mathcal{P}_{\widetilde{T}_{k}}(\bm{D}-\bm{S}_{k})\right)\right) ,$$
which relies on a tuning parameter $\beta>0$, a convergence rate parameter $0\leq\gamma<1$, and the singular values of $\mathcal{P}_{\widetilde{T}_{k}} (\bm{D}-\bm{S}_k)$. Since we have already obtained all the singular values of $\mathcal{P}_{\widetilde{T}_{k}} (\bm{D}-\bm{S}_k)$ when computing $\bm{L}_{k+1}$,  the extra cost of computing $\zeta_{k+1}$ is very marginal. Therefore, the cost of updating the estimate of $\BS$ is very low and insensitive to the sparsity of $\bm{S}$.

In this paper, a good initialization is achieved by two steps of  modified AltProj  when setting the input rank to $r$, see Algorithm~\ref{Algo:Init1}. With this initialization scheme, we can construct an initial guess that is sufficiently close to the ground truth and is inside the ``basin of attraction'' as detailed in the next subsection. Note that the thresholding parameter $\beta_{init}$ used in  Algorithm~\ref{Algo:Init1} is different from that in Algorithm~\ref{Algo:Algo1}.

\begin{algorithm}[htp]
\caption{Initialization by Two Steps of  AltProj}\label{Algo:Init1}
\begin{algorithmic}[1]
\State \textbf{Input:} $\bm{D}=\bm{L}+\bm{S}$: matrix to be split; $r$: rank of $\bm{L}$; $\beta_{init}, \beta$: thresholding parameters.
\State $\bm{L}_{-1}=\bm{0}$
\State $\zeta_{-1} = \beta_{init} \cdot \sigma_1^D$
\State $\bm{S}_{-1}=\mathcal{T}_{\zeta_{-1}}(\bm{D}-\bm{L}_{-1})$
\State $\bm{L}_0=\mathcal{H}_r(\bm{D}-\bm{S}_{-1})$
\State $\zeta_{0} = \beta \cdot \sigma_{1}(\bm{D}-\bm{S}_{-1})$
\State $\bm{S}_0=\mathcal{T}_{\zeta_{0}}(\bm{D}-\bm{L}_{0})$
\State \textbf{Output:} $\bm{L}_0$, $\bm{S}_0$
\end{algorithmic}
\end{algorithm}




\subsection{Theoretical Guarantee} \label{subsec:guaranteed results}




In this subsection, we present the theoretical recovery guarantee of AccAltProj (Algorithm~\ref{Algo:Algo1} together with Algorithm~\ref{Algo:Init1}).  
The following theorem establishes the 
local convergence of AccAltProj.

\begin{theorem}[Local Convergence of AccAltProj]  \label{thm:local convergence}
 Let $\bm{L}\in\mathbb{R}^{n\times n}$ and $\bm{S}\in\mathbb{R}^{n\times n}$ be two symmetric matrices satisfying Assumptions \nameref{assume:Inco} and \nameref{assume:Sparse}. If the  initial guesses $\bm{L}_0$ and $\bm{S}_0$ obey the following conditions:
\[
\|\bm{L}-\bm{L}_0\|_2 \leq 8\alpha\mu r \sigma_1^L,\quad
\|\bm{S}-\bm{S}_0\|_\infty \leq \frac{\mu r}{n} \sigma_1^L,\quad \textnormal{and} \quad
supp(\bm{S}_0)\subset \Omega,
\]
then 
the iterates of Algorithm \ref{Algo:Algo1} with parameters  $\beta =\frac{\mu r}{2n}$ and $\gamma\in\left(\frac{1}{\sqrt{12}},1\right)$ satisfy 
\[
\|\bm{L}-\bm{L}_k\|_2 \leq 8\alpha\mu r \gamma^k\sigma_1^L,\quad
\|\bm{S}-\bm{S}_k\|_\infty \leq \frac{\mu r}{n} \gamma^k\sigma_1^L,\quad \textnormal{and} \quad
supp(\bm{S}_k)\subset \Omega.
\]
\end{theorem}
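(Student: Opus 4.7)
The plan is to prove the theorem by induction on $k$, with the three displayed bounds as the inductive hypothesis. The base case is the assumption at $k=0$, and for the inductive step I would update $\bm{L}_{k+1}$ first and then $\bm{S}_{k+1}$, exploiting the coupling between the spectral bound on $\bm{L}_k-\bm{L}$ and the infinity bound on $\bm{S}_k-\bm{S}$.

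First I would dispose of the trim step: since $\bm{L}$ itself satisfies Assumption~A1, a Wedin/Davis--Kahan subspace perturbation argument shows that the trimmed singular vectors $\widetilde{\bm{U}}_k,\widetilde{\bm{V}}_k$ remain $O(\mu)$-incoherent while $\|\widetilde{\bm{L}}_k-\bm{L}\|_2 \lesssim \|\bm{L}_k-\bm{L}\|_2$. For the $\bm{L}$-update, use that $\bm{L}_{k+1}=\mathcal{H}_r(\mathcal{P}_{\widetilde{T}_k}(\bm{D}-\bm{S}_k))$ is the best rank-$r$ approximation and that $\bm{L}$ itself is rank $r$; the triangle inequality then gives
\[
\|\bm{L}-\bm{L}_{k+1}\|_2 \leq 2\,\|(\mathcal{I}-\mathcal{P}_{\widetilde{T}_k})\bm{L}\|_2 + 2\,\|\mathcal{P}_{\widetilde{T}_k}(\bm{S}-\bm{S}_k)\|_2.
\]
A tangent-space perturbation estimate at $\widetilde{\bm{L}}_k$ controls the first term by $\|\widetilde{\bm{L}}_k-\bm{L}\|_2^2/\sigma_r^L$, which is small by the inductive hypothesis. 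For the second term, $\bm{E}:=\bm{S}-\bm{S}_k$ is $\alpha$-sparse and supported in $\Omega$ (from the inductive $\operatorname{supp}(\bm{S}_k)\subset\Omega$ and Assumption~A2), and the explicit formula~\eqref{eq:projection onto tangent space tilde k} combined with the row-incoherence of $\widetilde{\bm{U}}_k,\widetilde{\bm{V}}_k$ yields $\|\mathcal{P}_{\widetilde{T}_k}\bm{E}\|_2 \lesssim \alpha n\sqrt{\mu r}\,\|\bm{E}\|_\infty$ via a Cauchy--Schwarz argument on the supported entries. Both pieces are then absorbed into $8\alpha\mu r\gamma^{k+1}\sigma_1^L$ thanks to the sparsity budget~\eqref{eq:condition_on_p}.

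For the $\bm{S}$-update write $\bm{D}-\bm{L}_{k+1}=\bm{S}-(\bm{L}_{k+1}-\bm{L})$, and convert the spectral bound on $\bm{L}_{k+1}-\bm{L}$ that was just proven into an infinity-norm bound through the incoherence of the singular vectors of $\bm{L}_{k+1}$ (another Wedin application to $\bm{L}$ versus $\bm{L}_{k+1}$). I would then certify that the algorithmic threshold satisfies
\[
\|\bm{L}-\bm{L}_{k+1}\|_\infty \leq \zeta_{k+1} \leq \tfrac{1}{2}\tfrac{\mu r}{n}\gamma^{k+1}\sigma_1^L,
\]
by bounding $\sigma_{r+1}(\mathcal{P}_{\widetilde{T}_k}(\bm{D}-\bm{S}_k)) \leq \|\mathcal{P}_{\widetilde{T}_k}(\bm{D}-\bm{S}_k)-\bm{L}\|_2$ (the quantity already estimated in the $\bm{L}$-step) and sandwiching $\sigma_1(\mathcal{P}_{\widetilde{T}_k}(\bm{D}-\bm{S}_k))$ between constants times $\sigma_1^L$ by Weyl's inequality. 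The lower bound on $\zeta_{k+1}$ kills every entry of $\bm{D}-\bm{L}_{k+1}$ outside $\Omega$, yielding $\operatorname{supp}(\bm{S}_{k+1})\subset\Omega$, while the upper bound together with the entrywise inequality $|[\mathcal{T}_\zeta(\bm{Z})-\bm{Z}]_{ij}|\leq \min(|[\bm{Z}]_{ij}|,\zeta)$ closes the infinity-norm bound.

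The main obstacle I anticipate is juggling the three sparsity regimes in~\eqref{eq:condition_on_p}: each of $1/(\mu r^2 \kappa^3)$, $1/(\mu^{1.5} r^2 \kappa)$, and $1/(\mu^2 r^2)$ arises from a distinct sub-estimate — respectively, the tangent-space perturbation after spectral-to-infinity conversion (where the factor $\kappa^3$ is born), the direct $\mathcal{P}_{\widetilde{T}_k}\bm{E}$ bound (which picks up a $\mu^{1.5}$), and the calibration of $\zeta_{k+1}$ relative to the entrywise scale $\|\bm{S}\|_\infty$ — and all three must fit inside the contraction factor $\gamma$ simultaneously. The trim step is the key structural device that makes this possible: without it the rows of $\widetilde{\bm{U}}_k$ could drift into spikes aligned with $\Omega$ and inflate $\|\mathcal{P}_{\widetilde{T}_k}\bm{E}\|_2$ beyond the $\alpha\mu r$ scale, breaking the induction.
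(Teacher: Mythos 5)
Your skeleton --- induction on $k$, trim, tangent-space projection, best rank-$r$ truncation for the $\bm{L}$-step, then threshold calibration for the $\bm{S}$-step --- matches the paper's, and your spectral-norm estimate $\|\bm{L}-\bm{L}_{k+1}\|_2\le 2\|(\mathcal{I}-\mathcal{P}_{\widetilde{T}_k})\bm{L}\|_2+2\|\mathcal{P}_{\widetilde{T}_k}(\bm{S}-\bm{S}_k)\|_2$ with the quadratic bound $\|(\mathcal{I}-\mathcal{P}_{\widetilde{T}_k})\bm{L}\|_2\le\|\widetilde{\bm{L}}_k-\bm{L}\|_2^2/\sigma_r^L$ is exactly Lemmas~\ref{lemma: secound_order_bound_I-P} and~\ref{lemma:Bound_of_L-L_k_2_norm}. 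But there is a genuine gap at the step ``convert the spectral bound on $\bm{L}_{k+1}-\bm{L}$ into an infinity-norm bound through the incoherence of the singular vectors of $\bm{L}_{k+1}$ (another Wedin application).'' Wedin/Davis--Kahan controls $\|\bm{U}_{k+1}\bm{U}_{k+1}^T-\bm{U}\bm{U}^T\|_2$ only by $\|\bm{Z}\|_2/\sigma_r^L=O(\tau)$, where $\bm{Z}=(\mathcal{P}_{\widetilde{T}_k}-\mathcal{I})\bm{L}+\mathcal{P}_{\widetilde{T}_k}(\bm{S}-\bm{S}_k)$; this is a constant independent of $n$, so the resulting row bound $\max_i\|\bm{e}_i^T\bm{U}_{k+1}\|_2\le\sqrt{\mu r/n}+O(\tau)$ does \emph{not} establish $O(\sqrt{\mu r/n})$ incoherence of $\bm{L}_{k+1}$ and therefore cannot deliver the $\frac{\mu r}{n}$ scaling needed for $\|\bm{L}-\bm{L}_{k+1}\|_\infty\le\zeta_{k+1}$. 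What is required is a row-wise ($\ell_{2,\infty}$) perturbation bound on the new eigenvectors, and the paper gets it by a different device (Lemma~\ref{lemma:Bound_of_L-L_k_infty_norm}): expand each top eigenvector of $\bm{L}+\bm{Z}$ in a Neumann series, write $\bm{L}_{k+1}=\sum_{a,b\ge0}\bm{Z}^a\bm{L}\bm{U}_{k+1}\bm{\Lambda}^{-(a+b+1)}\bm{U}_{k+1}^T\bm{L}\bm{Z}^b$, and bound every term entrywise via the incoherence-propagation estimate $\|\bm{e}_i^T\bm{Z}^a\bm{U}\|_2\le\max_l\sqrt{\mu r/n}\,(\sqrt{n}\|\bm{e}_l^T\bm{Z}\|_2)^a$ (Lemma~\ref{lemma:bound_power_vector_norm_with_incoherence}) together with a dedicated bound on the row norms of $\bm{Z}$ (the second claim of Lemma~\ref{lemma:norm_of_Z}). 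This is the technical heart of the proof and is missing from your plan.

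A second, quantitative gap: your proposed estimate $\|\mathcal{P}_{\widetilde{T}_k}\bm{E}\|_2\lesssim\alpha n\sqrt{\mu r}\,\|\bm{E}\|_\infty$ cannot close the induction. The recursion demands $2\|\mathcal{P}_{\widetilde{T}_k}(\bm{S}-\bm{S}_k)\|_2\le 8\alpha\mu r\gamma^{k+1}\sigma_1^L$ given $\|\bm{S}-\bm{S}_k\|_\infty\le\frac{\mu r}{n}\gamma^k\sigma_1^L$; since $\alpha$ enters linearly on both sides, the extra $\sqrt{\mu r}$ cannot be absorbed by the sparsity budget in \eqref{eq:condition_on_p} and would force $\gamma\gtrsim\sqrt{\mu r}$, which exceeds $1$ for moderate $\mu r$. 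The paper instead uses $\|\mathcal{P}_{T}\bm{E}\|_2\le\sqrt{4/3}\,\|\bm{E}\|_2\le\sqrt{4/3}\,\alpha n\|\bm{E}\|_\infty$ (Lemmas~\ref{lemma:P_T X 2-norm ineq} and~\ref{lemma:bound of sparse matrix}); the sharp constant $\sqrt{4/3}$ is precisely what produces the stated admissible range $\gamma\in(1/\sqrt{12},1)$, as $\sqrt{4/3}/4=1/\sqrt{12}$. The remaining ingredients of your plan (trim preserving incoherence, Weyl for the singular values of $\mathcal{P}_{\widetilde{T}_k}(\bm{D}-\bm{S}_k)$, and the thresholding logic for support containment) are sound and agree with Lemmas~\ref{lemma:trimmed bound}, \ref{lemma:Bound_eigenvalues}, and~\ref{lemma:Bound_of_S-S_k}.
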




The next theorem states that the initial guesses obtained from Algorithm \ref{Algo:Init1} fulfill the conditions required in Theorem~\ref{thm:local convergence}.

\begin{theorem}[Guaranteed Initialization]\label{thm:initialization bound} Let $\bm{L}\in\mathbb{R}^{n\times n}$ and $\bm{S}\in\mathbb{R}^{n\times n}$ be two symmetric matrices satisfying Assumptions \nameref{assume:Inco} and \nameref{assume:Sparse}, respectively. If the thresholding parameters obey $\frac{\mu r\sigma_1^L}{n\sigma_1^D}\leq\beta_{init}\leq\frac{3\mu r\sigma_1^L}{n\sigma_1^D}$ and $\beta=\frac{{\mu r}}{2n}$, then the outputs of  Algorithm \ref{Algo:Init1} satisfy
\[
\|\bm{L}-\bm{L}_0\|_2 \leq 8\alpha\mu r \sigma_1^L,\quad
\|\bm{S}-\bm{S}_0\|_\infty \leq \frac{\mu r}{n} \sigma_1^L,\quad \textnormal{and} \quad
supp(\bm{S}_0)\subset \Omega.
\]
\end{theorem}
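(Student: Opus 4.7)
The plan is to follow the three lines of Algorithm~\ref{Algo:Init1} in order, tracking bounds on $\bm{S}-\bm{S}_{-1}$, then $\bm{L}-\bm{L}_0$, and finally $\bm{S}-\bm{S}_0$. First I analyze the initial hard-thresholding $\bm{S}_{-1}=\mathcal{T}_{\zeta_{-1}}(\bm{D})$. By \nameref{assume:Inco}, $\|\bm{L}\|_\infty \le \tfrac{\mu r}{n}\sigma_1^L$, and the hypothesis $\beta_{init}\sigma_1^D \ge \tfrac{\mu r}{n}\sigma_1^L$ guarantees $\zeta_{-1}\ge \|\bm{L}\|_\infty$. Consequently every off-support entry $(i,j)\in\Omega^c$ satisfies $|[\bm{D}]_{ij}|=|[\bm{L}]_{ij}|\le\zeta_{-1}$ and is killed, yielding $supp(\bm{S}_{-1})\subset\Omega$. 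On $\Omega$ the usual kept-versus-killed case split, combined with $\zeta_{-1}\le\tfrac{3\mu r}{n}\sigma_1^L$, gives $\|\bm{S}-\bm{S}_{-1}\|_\infty \le \tfrac{4\mu r}{n}\sigma_1^L$.

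Since $\bm{S}-\bm{S}_{-1}$ is supported in $\Omega$, it inherits the $\alpha$-sparse row/column structure of $\bm{S}$, so the standard max-row-sum estimate delivers $\|\bm{S}-\bm{S}_{-1}\|_2 \le \alpha n\,\|\bm{S}-\bm{S}_{-1}\|_\infty \le 4\alpha\mu r\sigma_1^L$. Writing $\bm{D}-\bm{S}_{-1}=\bm{L}+(\bm{S}-\bm{S}_{-1})$ and using the best rank-$r$ approximation property of $\bm{L}_0=\mathcal{H}_r(\bm{D}-\bm{S}_{-1})$ against the rank-$r$ feasible competitor $\bm{L}$, Eckart--Young yields $\|\bm{L}_0-(\bm{D}-\bm{S}_{-1})\|_2 \le \|\bm{S}-\bm{S}_{-1}\|_2$, and a triangle inequality then gives
$$\|\bm{L}-\bm{L}_0\|_2 \le 2\|\bm{S}-\bm{S}_{-1}\|_2 \le 8\alpha\mu r\sigma_1^L,$$
which is the first claimed bound.

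The main obstacle is upgrading this spectral bound to an entrywise bound $\|\bm{L}-\bm{L}_0\|_\infty \lesssim \tfrac{\mu r}{n}\sigma_1^L$, which drives the analysis of the second thresholding. Writing $\bm{L}_0 = \bm{U}_0\bm{U}_0^T(\bm{L}+(\bm{S}-\bm{S}_{-1}))$, where $\bm{U}_0$ is the top-$r$ left singular basis of $\bm{D}-\bm{S}_{-1}$, I would analyze
$$\bm{L}-\bm{L}_0 = (\bm{I}-\bm{U}_0\bm{U}_0^T)\bm{L} - \bm{U}_0\bm{U}_0^T(\bm{S}-\bm{S}_{-1}).$$
Under \eqref{eq:condition_on_p} the perturbation $\|\bm{S}-\bm{S}_{-1}\|_2$ is small compared with the spectral gap $\sigma_r^L$, so a Davis--Kahan / sin-$\Theta$ argument shows $\bm{U}_0$ inherits approximate incoherence from $\bm{U}$, namely $\max_i\|\bm{U}_0^T\bm{e}_i\|_2 \lesssim \sqrt{\mu r/n}$; alternatively this can be read off directly from the identity $\bm{U}_0=(\bm{L}+(\bm{S}-\bm{S}_{-1}))\bm{V}_0\bm{\Sigma}_0^{-1}$ using the incoherence of $\bm{L}$ and the row-sparsity of $\bm{S}-\bm{S}_{-1}$. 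The first summand is then bounded entry-wise by decomposing $(\bm{I}-\bm{U}_0\bm{U}_0^T)\bm{U}$ and invoking the original $\mu$-incoherence of $\bm{V}$, while the second summand exploits the $\alpha$-sparse row structure of $\bm{S}-\bm{S}_{-1}$ together with its $\ell_\infty$ bound. The condition $\alpha\lesssim 1/(\mu^2 r^2)$ is what makes the resulting product small enough.

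Once $\|\bm{L}-\bm{L}_0\|_\infty$ is controlled, the final step $\bm{S}_0=\mathcal{T}_{\zeta_0}(\bm{D}-\bm{L}_0)$ is treated exactly as in the first step. Weyl's inequality gives $\sigma_1(\bm{D}-\bm{S}_{-1}) = \sigma_1^L(1+O(\alpha\mu r))$, so $\zeta_0=\tfrac{\mu r}{2n}\sigma_1(\bm{D}-\bm{S}_{-1})$ dominates the just-established $\|\bm{L}-\bm{L}_0\|_\infty$. For $(i,j)\in\Omega^c$ this forces $[\bm{S}_0]_{ij}=0$ and hence $supp(\bm{S}_0)\subset\Omega$; on $\Omega$ the kept-versus-killed dichotomy delivers $|[\bm{S}-\bm{S}_0]_{ij}| \le \zeta_0 + \|\bm{L}-\bm{L}_0\|_\infty \le \tfrac{\mu r}{n}\sigma_1^L$, closing the three conclusions.
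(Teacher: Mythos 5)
Your steps (i), (ii), and (iv) track the paper's proof exactly: the first thresholding analysis, the Eckart--Young/triangle-inequality bound $\|\bm{L}-\bm{L}_0\|_2\le 2\|\bm{S}-\bm{S}_{-1}\|_2\le 8\alpha\mu r\sigma_1^L$, and the final case split for $\bm{S}_0$ all match. The gap is in step (iii), which you correctly flag as the main obstacle but do not actually close. You need $\|\bm{L}-\bm{L}_0\|_\infty\le\frac{\mu r}{4n}\sigma_1^L$ (a small \emph{fraction} of $\frac{\mu r}{n}\sigma_1^L$, carrying an $\alpha\mu r$ prefactor), and neither of your two proposed routes delivers this for the term $(\bm{I}-\bm{U}_0\bm{U}_0^T)\bm{L}$. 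Davis--Kahan controls $\|\bm{U}\bm{U}^T-\bm{U}_0\bm{U}_0^T\|_2=O(\alpha\mu r\kappa)$, a spectral-norm quantity with no $1/\sqrt{n}$ decay, so it cannot certify row-wise incoherence of $\bm{U}_0$ at scale $\sqrt{\mu r/n}$, let alone row-wise closeness of $\bm{U}_0$ to $\bm{U}$. Your second route (the identity $\bm{U}_0=(\bm{L}+\bm{E})\bm{U}_0\bm{\Lambda}^{-1}$ with a self-bounding argument) does yield $\max_i\|\bm{e}_i^T\bm{U}_0\|_2\lesssim\kappa\sqrt{\mu r/n}$, which handles the $\bm{U}_0\bm{U}_0^T(\bm{S}-\bm{S}_{-1})$ term, but incoherence of $\bm{U}_0$ alone does not make $\bm{e}_i^T(\bm{I}-\bm{U}_0\bm{U}_0^T)\bm{U}$ small: bounding it by $\|\bm{e}_i^T\bm{U}\|_2+\|\bm{e}_i^T\bm{U}_0\|_2\|\bm{U}_0^T\bm{U}\|_2$ gives only $O(\kappa\sqrt{\mu r/n})$ with no $\alpha$ factor, so the resulting entrywise bound on $(\bm{I}-\bm{U}_0\bm{U}_0^T)\bm{L}$ is $O(\kappa\frac{\mu r}{n}\sigma_1^L)$ --- too large by a factor of order $1/(\alpha\mu r)$. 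What you would actually need is an $\ell_{2,\infty}$ (row-wise) perturbation bound for the top-$r$ eigenspace, which is a substantially harder statement than anything you have assembled.

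The paper avoids this entirely by a Neumann-series expansion of the leading eigenvectors: writing $\bm{E}=\bm{S}-\bm{S}_{-1}$ and $\bm{u}_i=(\bm{I}-\bm{E}/\lambda_i)^{-1}(\bm{L}/\lambda_i)\bm{u}_i$, it obtains $\bm{L}_0=\sum_{a,b\ge0}\bm{E}^a\bm{L}\bm{U}_0\bm{\Lambda}^{-(a+b+1)}\bm{U}_0^T\bm{L}\bm{E}^b$. Every term is then sandwiched between $\bm{e}_i^T\bm{E}^a\bm{U}$ and $\bm{U}^T\bm{E}^b\bm{e}_j$ --- rows of the \emph{true} incoherent basis hit by powers of the sparse error --- and Lemma~\ref{init:lemma:bound_power_vector_norm_with_incoherence} gives $\|\bm{e}_i^T\bm{E}^a\bm{U}\|_2\le\sqrt{\mu r/n}(\alpha n\|\bm{E}\|_\infty)^a$. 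The zeroth-order term picks up its $\alpha$ factor because the middle factor $\|\bm{L}\bm{U}_0\bm{\Lambda}^{-1}\bm{U}_0^T\bm{L}-\bm{L}\|_2\le 5\|\bm{E}\|_2\le5\alpha n\|\bm{E}\|_\infty$, and the geometric series of higher-order terms converges by the eigenvalue bounds from Weyl's inequality. This expansion is precisely the mechanism that captures the entrywise cancellation your triangle-inequality decomposition destroys, and it is the missing idea in your proposal.
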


The proofs of Theorems \ref{thm:local convergence} and \ref{thm:initialization bound} are presented in Section~\ref{sec:proofs}. The convergence of AccAltProj follows immediately by combining the above  two theorems together.

For conciseness, the main theorems are stated for symmetric matrices. 
However, similar results can be established for nonsymmetric matrix recovery problems as they can be cast as  problems with respect to  symmetric  augmented matrices, as suggested in \citep{netrapalli2014non}.
Without loss of generality, assume $dm\leq n < (d+1)m$ for some $d\geq 1$ and construct $\overline{\BL}$ and $\overline{\BS}$ as
\begin{equation*} \setstretch{1.5}
\overline{\BL}:=
\begin{bmatrix}\,
\smash{
\underbrace{
\begin{matrix}
\bm{0} &\cdots & \bm{0} \\
\vdots &\ddots &\vdots  \\
\bm{0} &\cdots &\bm{0}  \\
\BL^T  &\cdots &\BL^T   
\end{matrix}}_{d \textnormal{ times}}  }
\begin{matrix}
~&\BL\\
~&\vdots\\
~&\BL\\
~&\bm{0}
\end{matrix}
\vphantom{
  \begin{matrix}
  \smash[b]{\vphantom{\Big|}}
  0\\\vdots\\0\\0
  \smash[t]{\vphantom{\Big|}}
  \end{matrix}
}
\,\,\end{bmatrix}
\begin{matrix}
\left.
\vphantom{\begin{matrix}\BL\\\vdots\\\BL\end{matrix}}
\right\rbrace{\scriptstyle {d \textnormal{ times}}}\\~\\
\end{matrix}, \qquad
\overline{\BS}:=
\begin{bmatrix}\,
\smash{
\underbrace{
\begin{matrix}
\bm{0} &\cdots & \bm{0} \\
\vdots &\ddots &\vdots  \\
\bm{0} &\cdots &\bm{0}  \\
\BS^T  &\cdots &\BS^T   
\end{matrix}}_{d \textnormal{ times}}  }
\begin{matrix}
~&\BS\\
~&\vdots\\
~&\BS\\
~&\bm{0}
\end{matrix}
\vphantom{
  \begin{matrix}
  \smash[b]{\vphantom{\Big|}}
  0\\\vdots\\0\\0
  \smash[t]{\vphantom{\Big|}}
  \end{matrix}
}
\,\,\end{bmatrix}
\begin{matrix}
\left.
\vphantom{\begin{matrix}\BS\\\vdots\\\BS\end{matrix}}
\right\rbrace{\scriptstyle {d \textnormal{ times}}}\\~\\
\end{matrix}.
\end{equation*}
\\\\
Then it is not hard to see that $\overline{\BL}$ is $O(\mu)$-incoherent, and $\overline{\BS}$ is $O(\alpha)$-sparse, with the hidden constants being independent of $d$. Moreover, based on the connection between the SVD of the augmented matrix and that of the original one, it can be easily verified that at the $k^{th}$ iteration  the estimates returned by AccAltProj with input $\overline{\BD}=\overline{\BL}+\overline{\BS}$ have the form \begin{equation*} \setstretch{1.5}
\overline{\BL}_k=
\begin{bmatrix}\,
\smash{
\underbrace{
\begin{matrix}
\bm{0} &\cdots & \bm{0} \\
\vdots &\ddots &\vdots  \\
\bm{0} &\cdots &\bm{0}  \\
\BL_k^T  &\cdots &\BL_k^T   
\end{matrix}}_{d \textnormal{ times}}  }
\begin{matrix}
~&\BL_k\\
~&\vdots\\
~&\BL_k\\
~&\bm{0}
\end{matrix}
\vphantom{
  \begin{matrix}
  \smash[b]{\vphantom{\Big|}}
  0\\\vdots\\0\\0
  \smash[t]{\vphantom{\Big|}}
  \end{matrix}
}
\,\,\end{bmatrix}
\begin{matrix}
\left.
\vphantom{\begin{matrix}\BL_k\\\vdots\\\BL_k\end{matrix}}
\right\rbrace{\scriptstyle {d \textnormal{ times}}}\\~\\
\end{matrix}, \qquad
\overline{\BS}_k=
\begin{bmatrix}\,
\smash{
\underbrace{
\begin{matrix}
\bm{0} &\cdots & \bm{0} \\
\vdots &\ddots &\vdots  \\
\bm{0} &\cdots &\bm{0}  \\
\BS_k^T  &\cdots &\BS_k^T   
\end{matrix}}_{d \textnormal{ times}}  }
\begin{matrix}
~&\BS_k\\
~&\vdots\\
~&\BS_k\\
~&\bm{0}
\end{matrix}
\vphantom{
  \begin{matrix}
  \smash[b]{\vphantom{\Big|}}
  0\\\vdots\\0\\0
  \smash[t]{\vphantom{\Big|}}
  \end{matrix}
}
\,\,\end{bmatrix}
\begin{matrix}
\left.
\vphantom{\begin{matrix}\BS_k\\\vdots\\\BS_k\end{matrix}}
\right\rbrace{\scriptstyle {d \textnormal{ times}}}\\~\\
\end{matrix},
\end{equation*}
\\\\
where $\BL_k,\BS_k$ are the the $k^{th}$ estimates returned by AccAltProj with input $\BD=\bm{L}+\bm{S}$.


\subsection{Related Work}  \label{subsec:related work}
As mentioned earlier, convex relaxation based  methods for RPCA have higher computational complexity and slower convergence rate which are not applicable for high dimensional problems. 
In fact, the convergence rate of the algorithm for computing the solution to the SDP formulation of RPCA  \citep{candes2011robust,chandrasekaran2011rank,xu2010robust} is sub-linear with the per iteration 
computational complexity being $O(n^3)$. By contrast, AccAltProj only requires $O(\log(1/\epsilon))$ iterations to achieve an accuracy of $\epsilon$, and the dominant per iteration computational cost is $O(rn^2)$.

There have been many other algorithms which are designed to solve the non-convex RPCA problem directly. 
In \citep{WSLer2013}, an alternating minimization algorithm was proposed for  \eqref{eq:non-convex model 2} based on the factorization model of 
low rank matrices. However, only convergence to fixed points was established there. 
In \citep{gu2016low}, the authors developed an alternating minimization algorithm for RPCA, which allows the sparsity level $\alpha$ to be $O(1/(\mu^{2/3}r^{2/3}n))$  for successful recovery, which is more stringent than our result when $r\ll n$. A projected gradient descent algorithm was proposed in  \cite{chen2015fast} for the special case of positive semidefinite matrices based on the $\ell_1$-norm of each row of the underlying sparse matrix, which is not very practical.  
 

In Table~\ref{tab:algo compare}, we compare AccAltProj with the other two competitive non-convex algorithms for RPCA: AltProj from \citep{netrapalli2014non} and non-convex gradient descent (GD) from \citep{yi2016fast}. GD attempts to reconstruct the low rank matrix by minimizing an objective function which contains the prior knowledge of the sparse matrix. The table displays the computational complexity of each algorithm for updating the estimates of the low rank matrix and the sparse matrix, as well as the convergence rate  and the theoretical tolerance for the number of non-zero entries in the sparse matrix.

From the table, we can see that AccAltProj achieves the same linear convergence rate as AltProj, which is faster than GD. Moreover, AccAltProj has the lowest per iteration computational complexity for updating both the estimates of $\BL$ and $\BS$ (ties with AltProj for updating the sparse part).
It is worth emphasizing that the acceleration stage in AccAltProj which first projects $\bm{D}-\bm{S}_k$ onto a low dimensional subspace reduces the computational cost of the SVD in AltProj dramatically. 
Overall, 
AccAltProj will be substantially faster than AltProj and GD, as confirmed by our numerical simulations in next section. The table also shows that the theoretical 
sparsity level that can be tolerated by AccAltProj is lower than that of GD and AltProj.  Our result looses an order in $r$ because we have
replaced the spectral norm by the Frobenius norm when considering the reduction of the reconstruction error in terms of the spectral norm. In addition, the condition number of the target matrix appears in the theoretical result because the current version
of AccAltProj deals with the fixed rank case which requires the initial guess is sufficiently
close to the target matrix for the theoretical analysis.
Nevertheless, we note that the sufficient condition regarding to $\alpha$ to guarantee the exact recovery of AccAltProj is highly pessimistic when compared with its empirical performance. Numerical investigations in next section show that AccAltProj can tolerate as large $\alpha$ as AltProj does under different energy  levels.

\begin{table}[htp]
\centering
\caption{Comparison of AccAltProj, AltProj and GD.} \label{tab:algo compare}
\makegapedcells
\setcellgapes{3pt}
\begin{tabular}{ |c||c|c|c|c| }
\hline
Algorithm  &   AccAltProj      &    AltProj      & GD \cr

\hhline {|=||=|=|=|=|}
Updating $\bm{S}$ & $\bm{O\left(n^2\right)}$        & $O\left(rn^2\right)$& $O\left(n^2+\alpha n^2\log(\alpha n)\right)$ \cr
Updating $\bm{L}$   & $\bm{O\left(rn^2\right)}$        & $O\left(r^2n^2\right)$ & $\bm{O\left(rn^2\right)}$                                               \cr
Tolerance of $\alpha$                                    & $O\left(\frac{1}{\max\{\mu r^2 \kappa^3,\mu^{1.5} r^2\kappa,\mu^2r^2\}}\right)$& $\bm{O\left(\frac{1}{\mu r}\right)}$& $O\left(\frac{1}{\max\{\mu r^{1.5}\kappa^{1.5},\mu r\kappa^2\}}\right)$        \cr
Iterations needed                                   &
$\bm{O\left(\log(\frac{1}{\epsilon})\right)}$  &  $\bm{O\left(\log(\frac{1}{\epsilon})\right)}$   &
$O\left(\kappa\log(\frac{1}{\epsilon})\right)$  \cr
\hline
\end{tabular}
\end{table}

\section{Numerical Expierments}  \label{sec:experience}
In this section, we present the empirical performance of our AccAltProj algorithm and compare it with the state-of-the-art AltProj algorithm from \citep{netrapalli2014non} and the leading 
gradient descent based algorithm (GD) from \citep{yi2016fast}. 
The tests are conducted  on a laptop equipped with 64-bit Windows 7, Intel i7-4712HQ (4 Cores at 2.3 GHz) and 16GB DDR3L-1600 RAM, and executed from MATLAB R2017a. We implement AltProj  by ourselves, while the codes for GD are downloaded from the author's website\footnote{Website: \url{www.yixinyang.org/code/RPCA_GD.zip}.}. Hand tuned parameters 
are used for these algorithms to achieve the best performance in the numerical comparison. The codes for AccAltProj can be found online:
\begin{center}
\url{https://github.com/caesarcai/AccAltProj_for_RPCA}.
\end{center}

Notice that the computation of an initial guess by Algorithm~\ref{Algo:Init1} requires the truncated SVD on a full size matrix. As is typical in the literature, we used the PROPACK library\footnote{Website: \url{sun.stanford.edu/~rmunk/PROPACK}.} for this task  when the size of $\bm{D}$ is large and $r$ is relatively small.  
To reduce the dependence of the theoretical result on the condition number of the underlying low rank matrix,  AltProj was originally designed to loop $r$ stages  for the input rank increasing from $1$ to $r$ and each stage contains a few number of iterations for a fixed rank. 
However, when the condition number is medium large which is the case in our tests, we have observed that AltProj achieves the best computational  efficiency when fixing the rank to $r$. Thus, to make fair comparison, we test AltProj when input rank is fixed, the same as the other two algorithms. 

\paragraph*{Synthetic Datasets} We follow the setup in \citep{netrapalli2014non} and \citep{yi2016fast} for the random tests on synthetic data.  
An $n\times n$ rank $r$ matrix $\bm{L}$ is formed via $\bm{L}=\bm{P}\bm{Q}^T$, where  $\bm{P},\bm{Q}\in\mathbb{R}^{n\times r}$ are two random matrices having their entries drawn i.i.d from the standard
normal distribution.
 The locations of the non-zero entries of the underlying sparse matrix $\bm{S}$ are sampled uniformly and independently without replacement, while the values of the non-zero entries are drawn i.i.d from the uniform distribution over the interval $[-c\cdot\mathbb{E}(|[\bm{L}]_{ij}|),c\cdot\mathbb{E}(|[\bm{L}]_{ij}|)]$ for some constant $c>0$. 
The relative computing error at the $k^{th}$ iteration of a single test is defined as
\begin{equation} \label{relative_err}
err_k=\frac{\|\bm{D}-\bm{L}_k-\bm{S}_k\|_F}{\|\bm{D}\|_F} .
\end{equation}
The test algorithms are terminated when either the relative computing error is smaller than a tolerance, $err_k<tol$, or a maximum number of $100$ iterations is reached. Recall that $\mu$ is the incoherence parameter of the low rank matrix $\BL$ and $\alpha$ is the sparsity parameter of the sparse matrix $\BS$. In the random tests, we use $1.1\mu$ in AltProj and AccAltProj, and use $1.1\alpha$ in GD.

Though we are only able to provide a theoretical guarantee for AccAltProj with trim in this paper, it can be easily seen that AccAltProj can also be implemented without the trim step. Thus, both AccAltProj with and without trim are tested. The parameters $\beta$ and $\beta_{init}$ are set to be
$\beta=\frac{1.1\mu r}{2\sqrt{mn}}$ and $\beta_{init}=\frac{1.1\mu r}{\sqrt{mn}}$ in our experiments, and $\gamma=0.5$ is used when $\alpha<0.55$ and $\gamma=0.65$ is used when $\alpha\geq 0.55$.

We first test the performance of the algorithms under different values of $\alpha$ for fixed $n=2500$ and $r=5$. Three different values of $c$ are investigated: $c\in\{0.2,1,5\}$, which represent three  different signal levels of $\BS$. For each value of $c$, $10$ different values of $\alpha$ from $0.3$ to $0.75$ are tested. We set $tol=10^{-6}$ in the stopping condition for all the test algorithms. The backtracking line search has been used in GD which can improve its recovery performance substantially in our tests. An algorithm is considered to have successfully reconstructed $\BL$ (or equivalently, $\BS$) if the low rank output of the algorithm $\BL_k$ satisfies 
\begin{align*}
\frac{\|\BL_k-\BL\|_F}{\|\BL\|_F}\leq 10^{-4}.
\end{align*}
The number of successful reconstructions for each algorithm out of $10$ random tests are presented in Table~\ref{tab:phase compare}. It is clear that AccAltProj (with and without trim) and AltProj exhibit similar behavior even though the theoretical requirement of AccAltProj with trim is a bit more stringent than that of AltProj, and they can tolerate larger values of $\alpha$ than GD when $c$ is small.

Next, we evaluate the runtime of the test algorithms. The computational results are plotted in Figure~\ref{synthetic_test_result} together with the setup corresponding to each plot.  
Figure~\ref{synthetic_test_result}\protect\subref{fig:varying_n2} shows that AccAltProj is substantially faster than AltProj and GD. In particular, when $n$ is large, it achieves about $10\times$ speedup. Figure~\ref{synthetic_test_result}\protect\subref{fig:varying_p2} shows that AccAltProj and AltProj are less sensitive to the sparsity  of $\bm{S}$. Notice that we have used a well-tuned fixed stepsize for GD here so that it can  achieve the best computational efficiency. Thus, GD fails to converge when $\alpha\geq 0.35$ which is smaller than the largest value of $\alpha$  for successful recovery corresponding to $c=1$   in Table~\ref{tab:phase compare}. Lastly, Figure~\ref{synthetic_test_result}\protect\subref{fig:conv_speed2} shows the lowest computational time of AccAltProj against the relative computing error. 
\begin{table}

\caption{Rate of success for AccAltProj with and without trim, AltProj, and GD for different values of $\alpha$.} \label{tab:phase compare} 

\centering

\begin{tabular}{ |c||c|c|c|c|c|c|c|c|c|c| } 

\hline

$c=0.2$             & 0.3 & 0.35 & 0.4 & 0.45& 0.5 & 0.55 & 0.6 & 0.65 & 0.7 & 0.75\cr

\hhline {|=||=|=|=|=|=|=|=|=|=|=|}

AccAltProj w/ trim  & 10  & 10   & 10  & 10  & 10  & 10   & 10  & 10   & 4   & 0   \cr
 
AccAltProj w/o trim & 10  & 10   & 10  & 10  & 10  & 10   & 10  & 10   & 4   & 0   \cr

AltProj             & 10  & 10   & 10  & 10  & 10  & 10   & 10  & 10   & 0   & 0   \cr

GD                  & 10  & 10   & 10  &  0  &  0  &  0   &  0  & 0    & 0   & 0   \cr

\hline

\end{tabular}

\vspace{5pt}

\begin{tabular}{ |c||c|c|c|c|c|c|c|c|c|c| } 

\hline

$c=1$               & 0.3 & 0.35 & 0.4 & 0.45& 0.5 & 0.55 & 0.6  & 0.65 & 0.7 & 0.75 \cr

\hhline {|=||=|=|=|=|=|=|=|=|=|=|}

AccAltProj w/ trim  & 10  & 10   & 10  & 10  & 10  & 10   & 10   & 9    & 0   & 0    \cr

AccAltProj w/o trim & 10  & 10   & 10  & 10  & 10  & 10   & 10   & 9    & 0   & 0    \cr

AltProj             & 10  & 10   & 10  & 10  & 10  & 10   & 10   & 8    & 0   & 0    \cr

GD                  & 10  & 10   & 10  & 10  &  9  &  0   &  0   & 0    & 0   & 0    \cr

\hline

\end{tabular}

\vspace{5pt}

\begin{tabular}{ |c||c|c|c|c|c|c|c|c|c|c| } 

\hline

$c=5$               & 0.3 & 0.35 & 0.4 & 0.45& 0.5 & 0.55 & 0.6 & 0.65  & 0.7 & 0.75\cr

\hhline {|=||=|=|=|=|=|=|=|=|=|=|}

AccAltProj w/ trim  & 10  & 10   & 10  & 10  & 10  & 10   & 10  & 5     & 0   & 0   \cr

AccAltProj w/o trim & 10  & 10   & 10  & 10  & 10  & 10   & 10  & 5     & 0   & 0   \cr

AltProj             & 10  & 10   & 10  & 10  & 10  & 10   & 10  & 3     & 0   & 0   \cr

GD                  & 10  & 10   & 10  & 10  & 10  & 10   &  7  & 0     & 0   & 0   \cr

\hline

\end{tabular}
\end{table}

\begin{figure}
\subfloat[Varying Dimension vs Runtime\label{fig:varying_n2}]
  {\includegraphics[width=.32\linewidth]{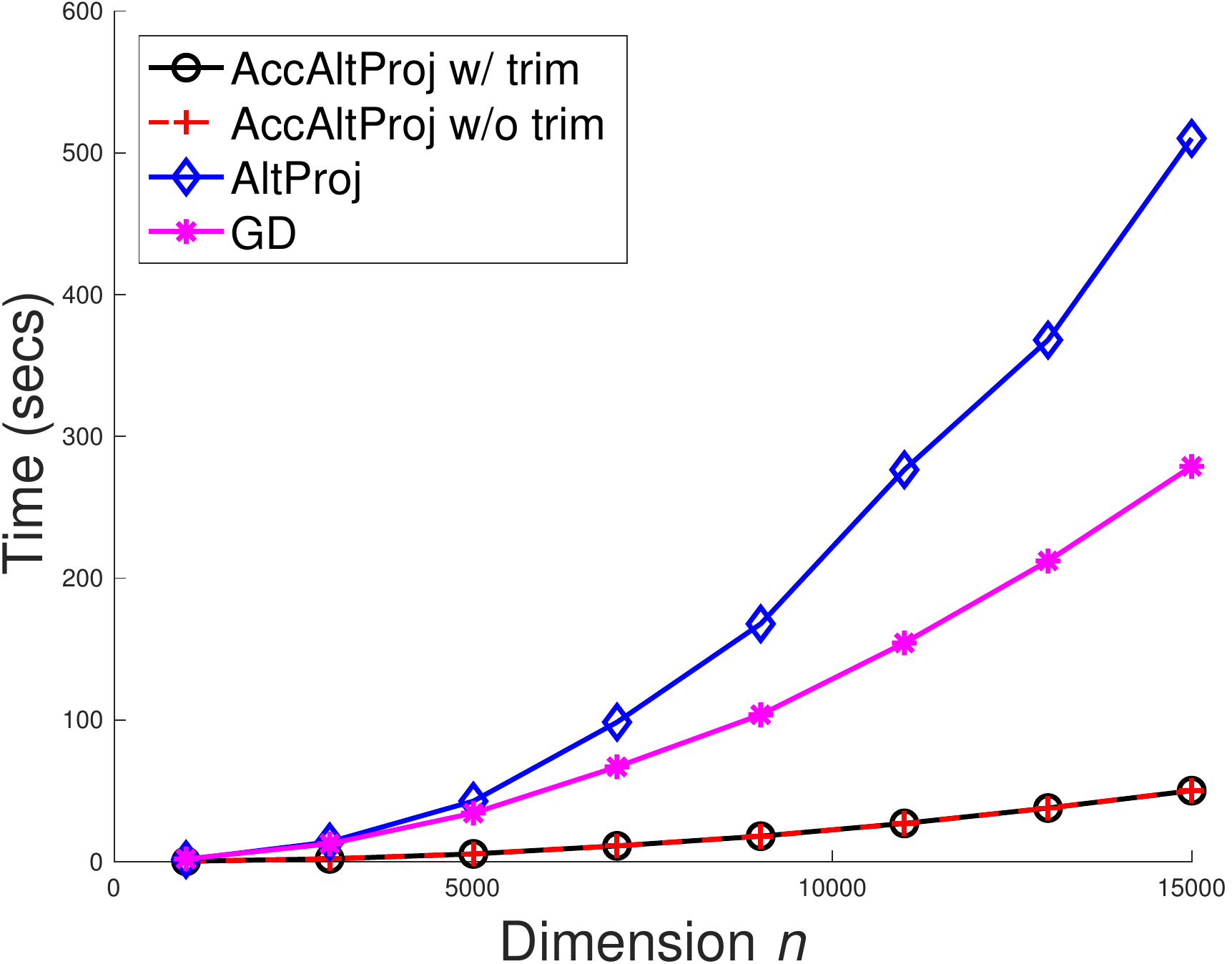}
  }\hfill
\subfloat[Varying Sparsity vs Runtime\label{fig:varying_p2}]
  {\includegraphics[width=.32\linewidth]{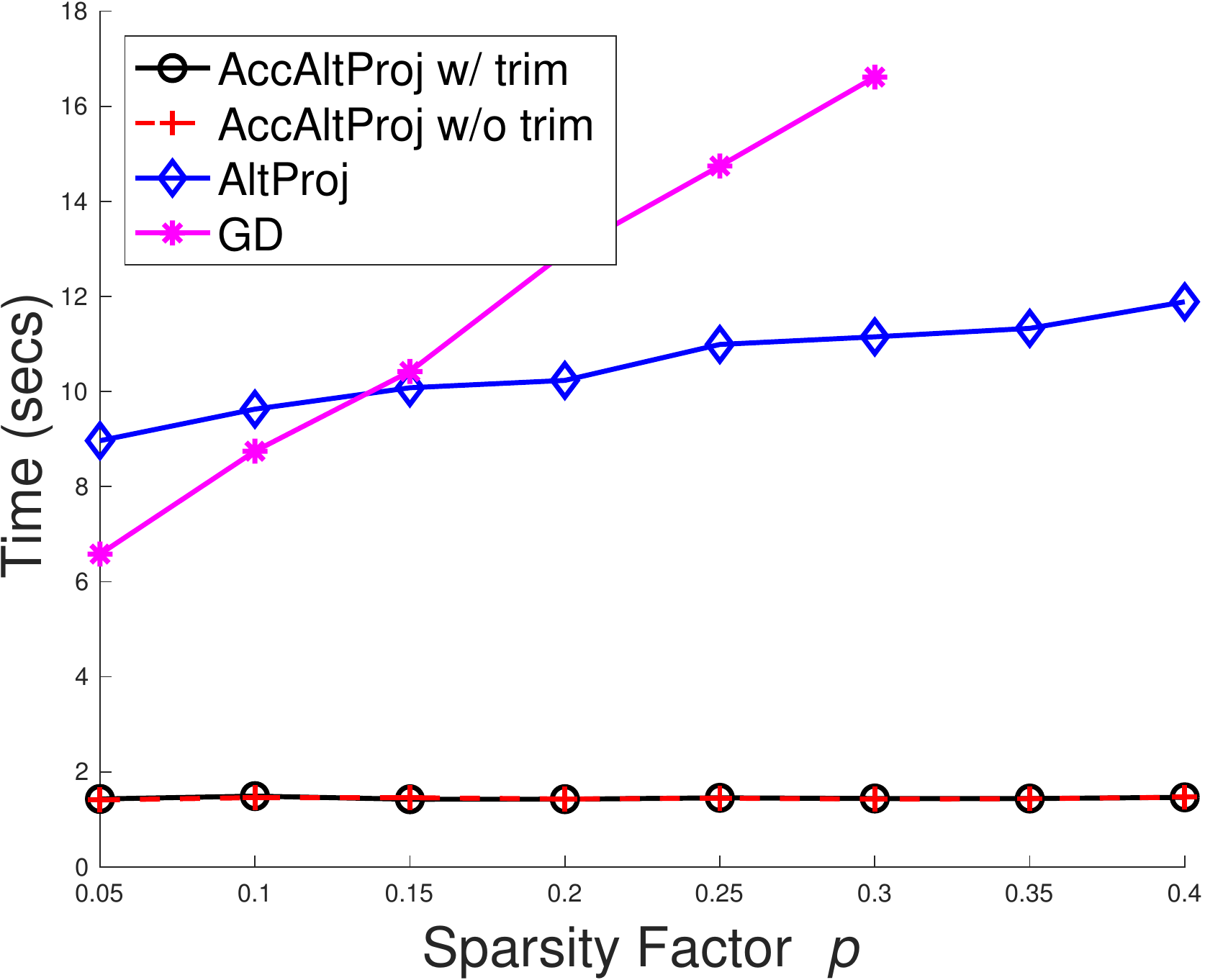}
  }\hfill
\subfloat[Relative Error vs Runtime\label{fig:conv_speed2}]
  {\includegraphics[width=.32\linewidth]{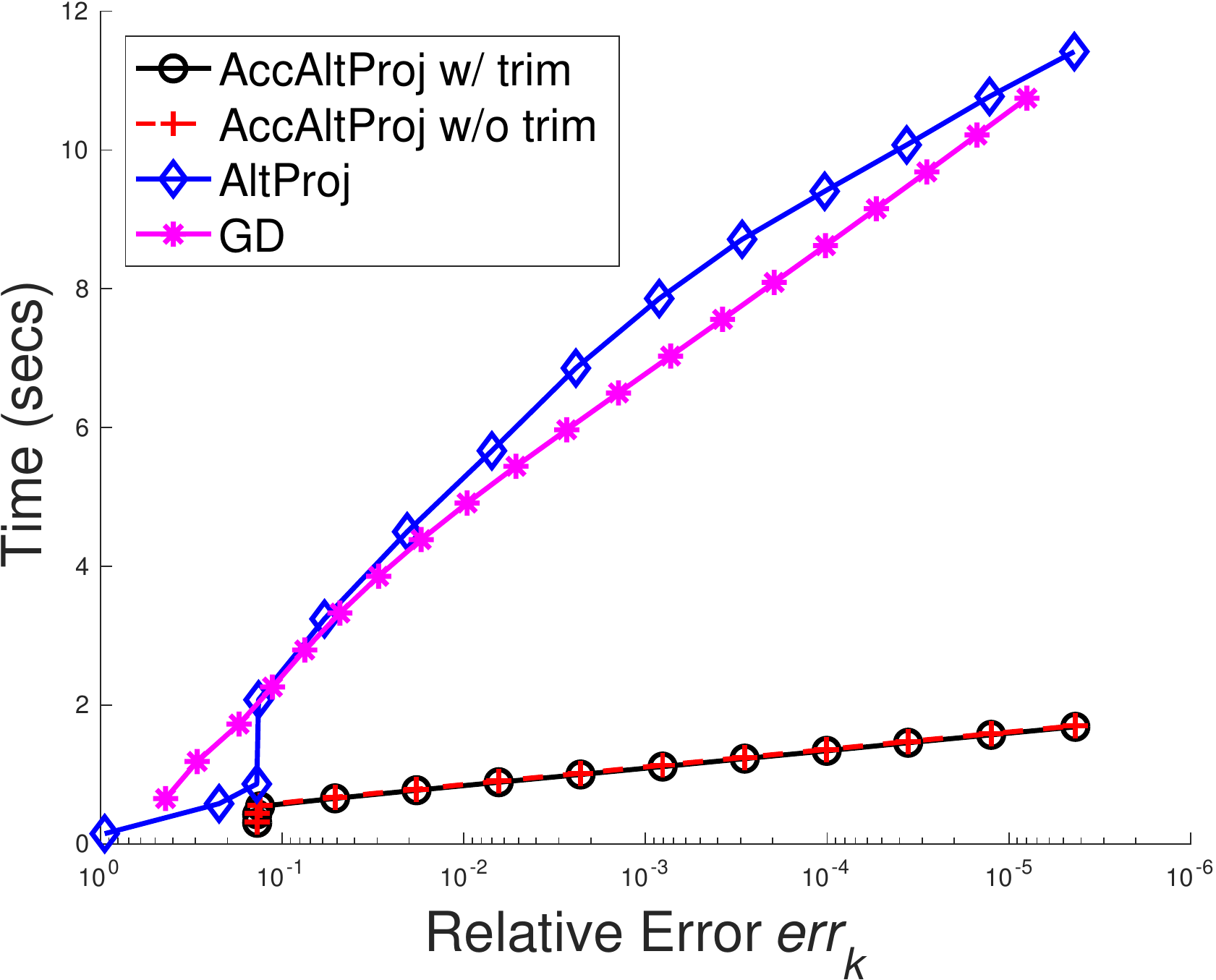}}
\caption{Runtime for synthetic datasets:
\protect\subref{fig:varying_n2} Varying dimension $n$ vs runtime, where $r=5$, $\alpha=0.1$, $c=1$, and $n$ varies from $1000$ to $15000$. The algorithms are terminated after $err_k<10^{-4}$ is satisfied.
\protect\subref{fig:varying_p2} Varying sparsity factor $\alpha$ vs runtime, where $r=5$, $c=1$ and $n=2500$. The  algorithms are terminated when either $err_k<10^{-4}$ or 100 number of iterations is reached, whichever comes first. \protect\subref{fig:conv_speed2} Relative error $err_k$ vs runtime, where  $r=5$, $\alpha=0.1$,  $c=1$, and $n=2500$. The algorithms are terminated after $err_k<10^{-5}$ is satisfied so that we can observe more iterations.}\label{synthetic_test_result}
\end{figure}

\paragraph*{Video Background Subtraction} 
In this section, we compare the performance of AccAltProj with and without trim, AltProj and GD on video background subtraction, a real world benchmark problem for RPCA. The task in background subtraction is  to separate  moving foreground objects from a static background. The two videos we have used for this test are \textit{Shoppingmall} and \textit{Restaurant} which can be found online\footnote{Website: \url{perception.i2r.a-star.edu.sg/bk_model/bk_index.html}.}.The size of each frame of \textit{Shoppingmall} is $256 \times 320$ and that of  \textit{Restaurant} is $120 \times 160$.  The total number of frames are $1000$ and $3055$ in \textit{Shoppingmall} and \textit{Restaurant}, respectively.  Each video can be represented by a matrix, where each column of the matrix is  a vectorized frame of the video.  Then, we apply each algorithm to decompose the matrix into a low rank part which represents the static background of the video and a sparse part which represents the moving objects in the video.  Since there is no ground truth for the incoherence parameter and the sparsity parameter, their values  are estimated by trial-and-error in the tests. We set $\gamma=0.7$ and $r=2$ in the decomposition of both videos, and $tol$ is set to $10^{-4}$ in the stopping criteria. All the four
algorithms can achieve desirable visual performance for the two tested videos and   we only present the decomposition results of  three selected frames for both AccAltProj with trim and without trim in Figure \ref{video_test_result}.  

Table~\ref{tab:video} contains the runtime of each algorithm. We can see that AccAltProj with and without trim are also  faster than AltProj and GD for the background subtraction experiments conducted here.   We also include the incoherence values of the output low rank matrices along the time axis.  It is worth noting that the incoherence parameter value of the  low rank output from AccAltProj with trim are smaller than that from AccAltProj without trim, which suggests the output backgrounds from AccAltProj with trim are more consistent through all the frames. Additionally, AccAltProj and AltProj have comparable output incoherence. 
\begin{figure}
\subfloat{\includegraphics[width=.19\linewidth]{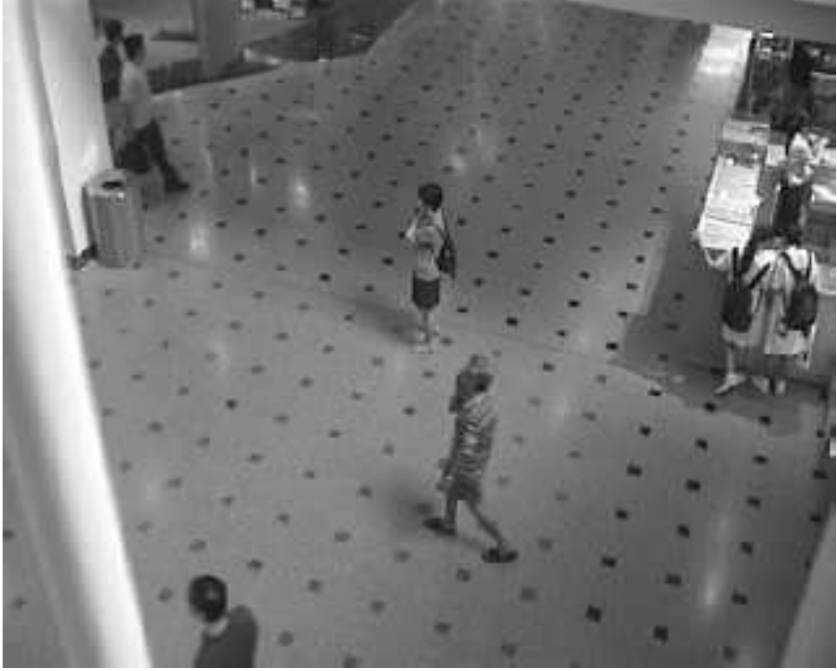}}\quad
\subfloat{\includegraphics[width=.19\linewidth]{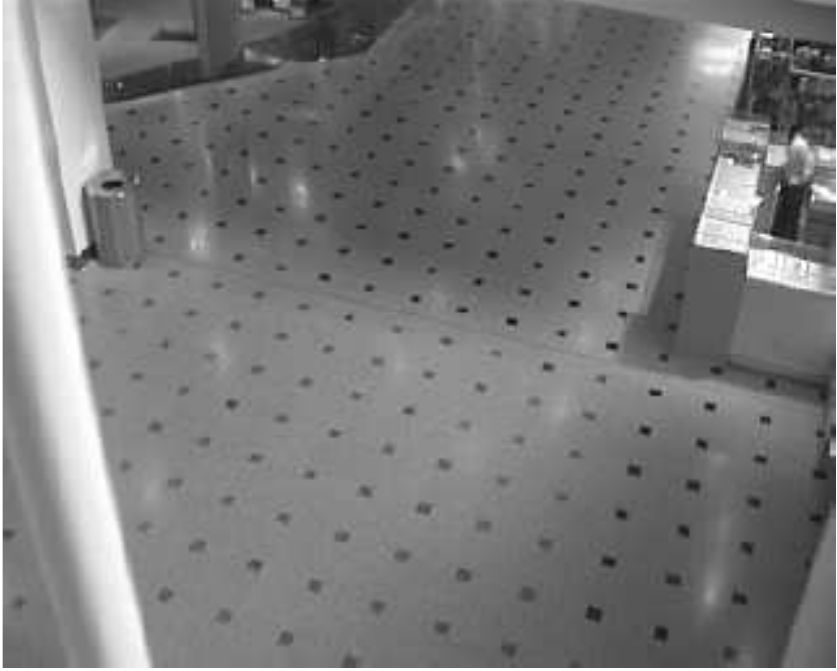}}
\subfloat{\includegraphics[width=.19\linewidth]{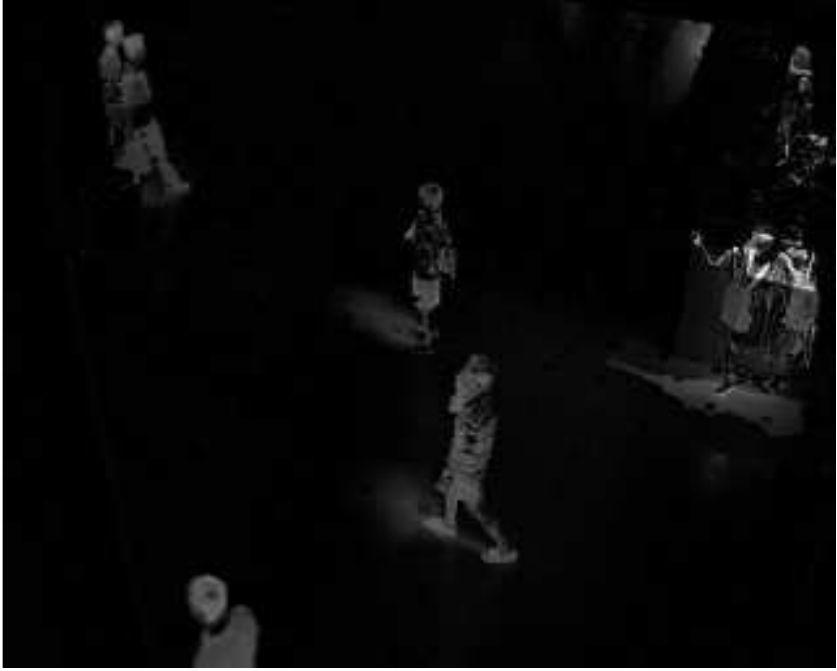}}
\subfloat{\includegraphics[width=.19\linewidth]{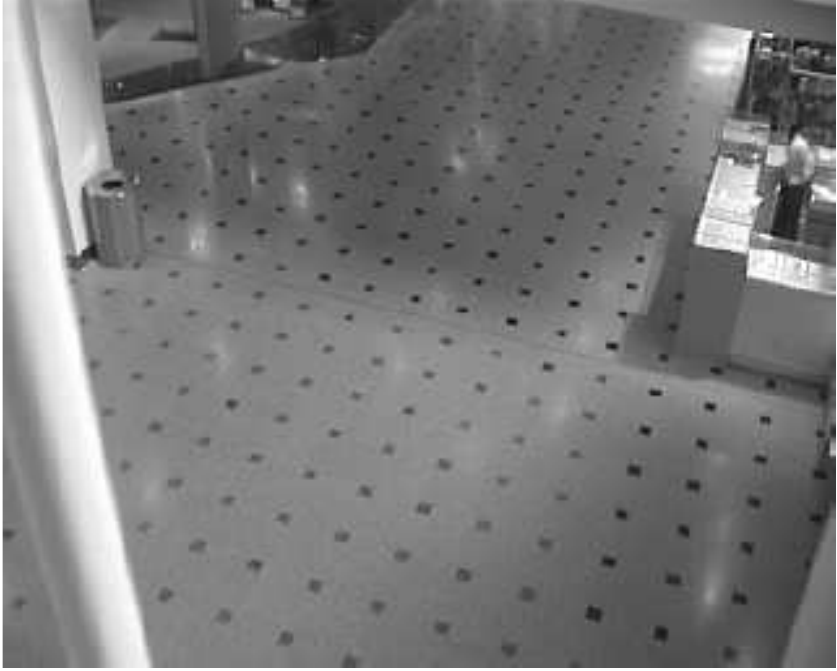}}
\subfloat{\includegraphics[width=.19\linewidth]{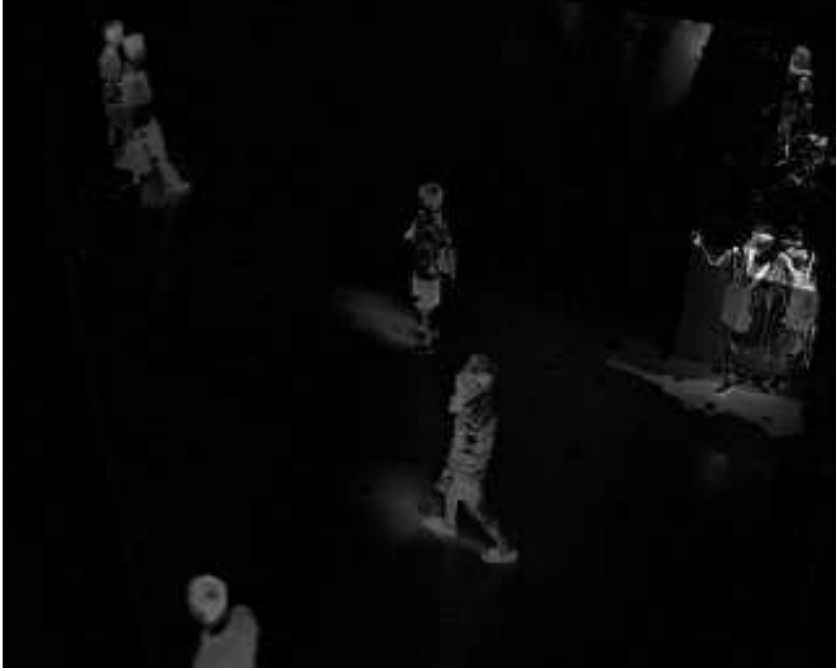}}\\
\subfloat{\includegraphics[width=.19\linewidth]{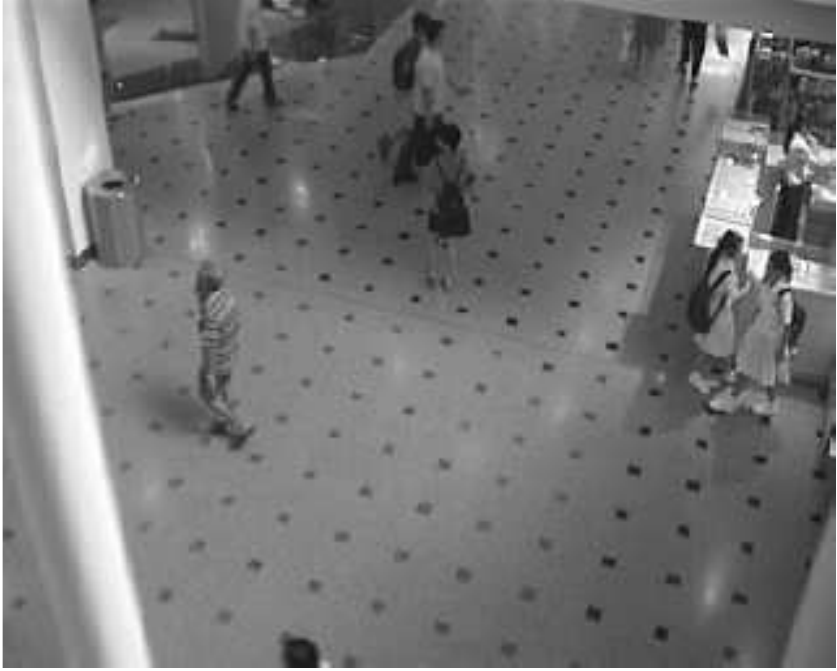}}\quad
\subfloat{\includegraphics[width=.19\linewidth]{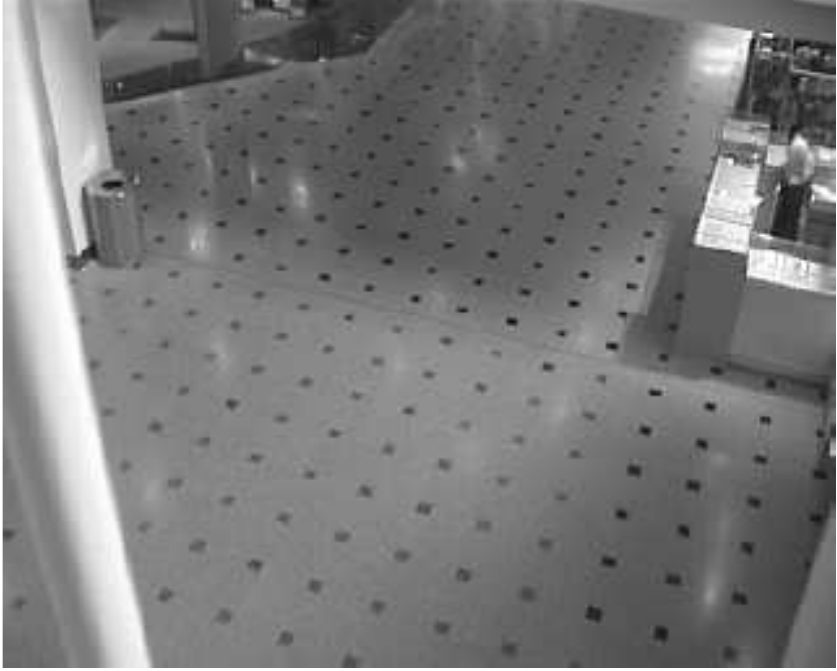}}
\subfloat{\includegraphics[width=.19\linewidth]{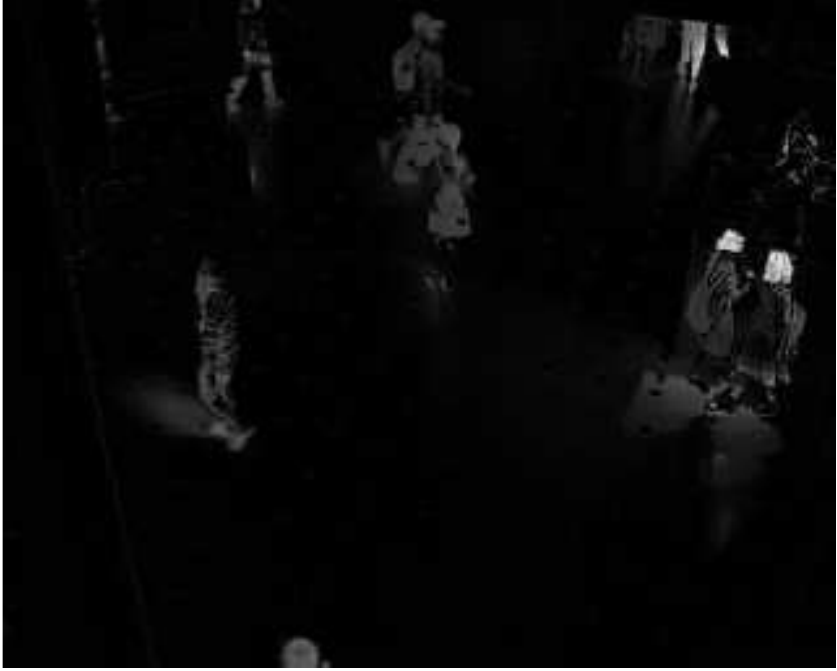}}
\subfloat{\includegraphics[width=.19\linewidth]{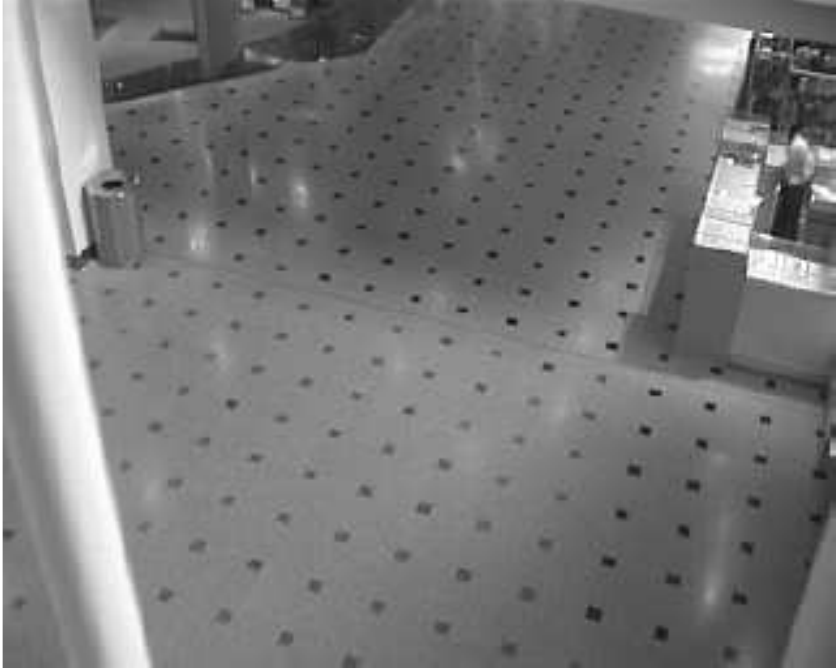}}
\subfloat{\includegraphics[width=.19\linewidth]{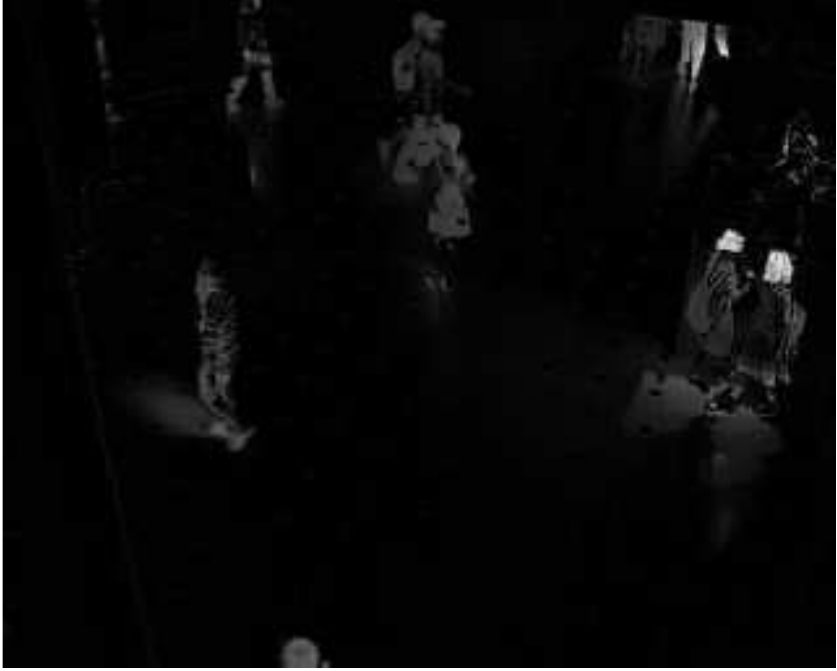}}\\
\subfloat{\includegraphics[width=.19\linewidth]{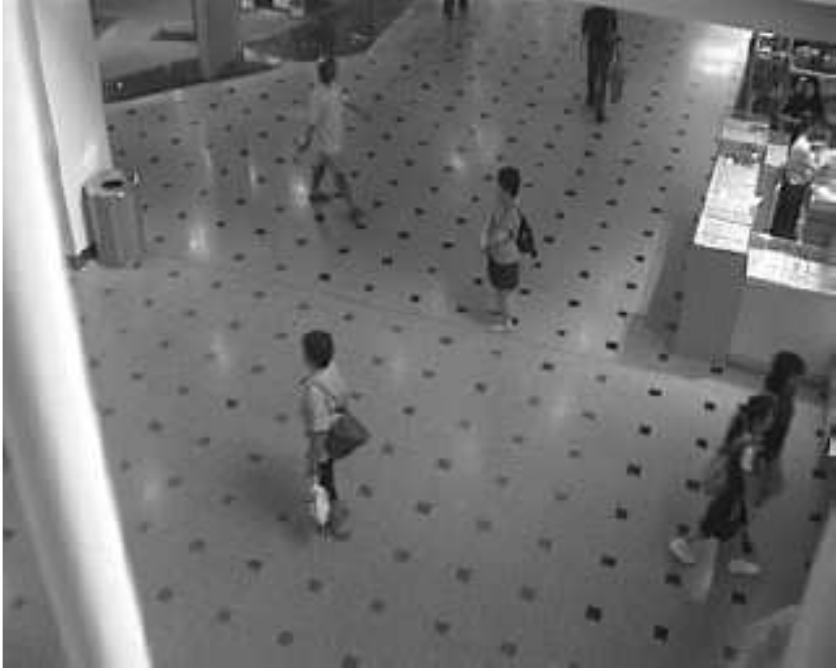}}\quad
\subfloat{\includegraphics[width=.19\linewidth]{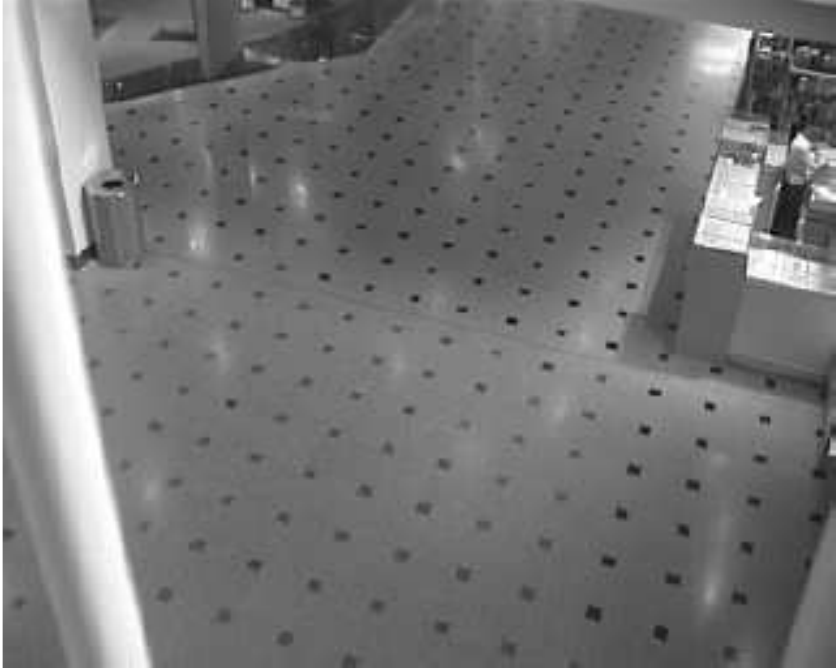}}
\subfloat{\includegraphics[width=.19\linewidth]{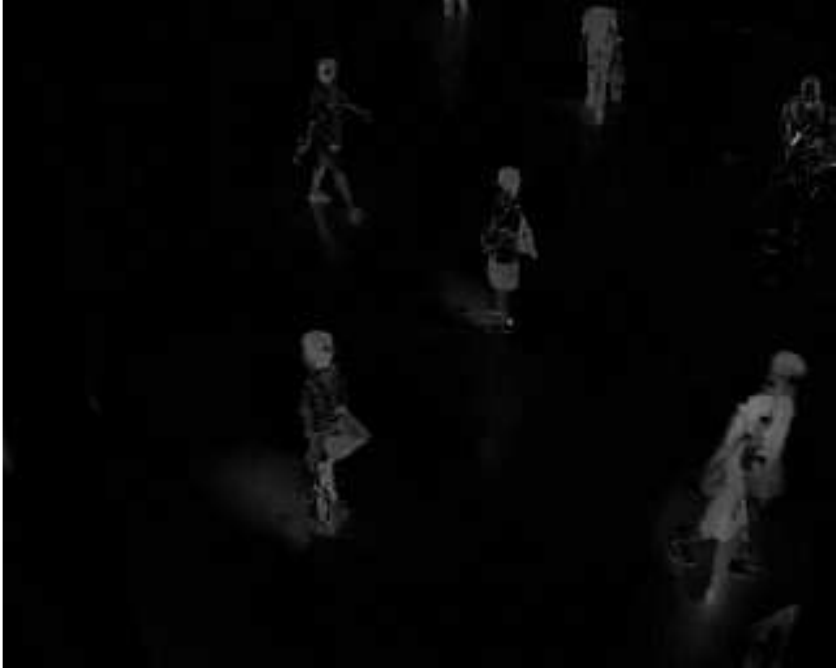}}
\subfloat{\includegraphics[width=.19\linewidth]{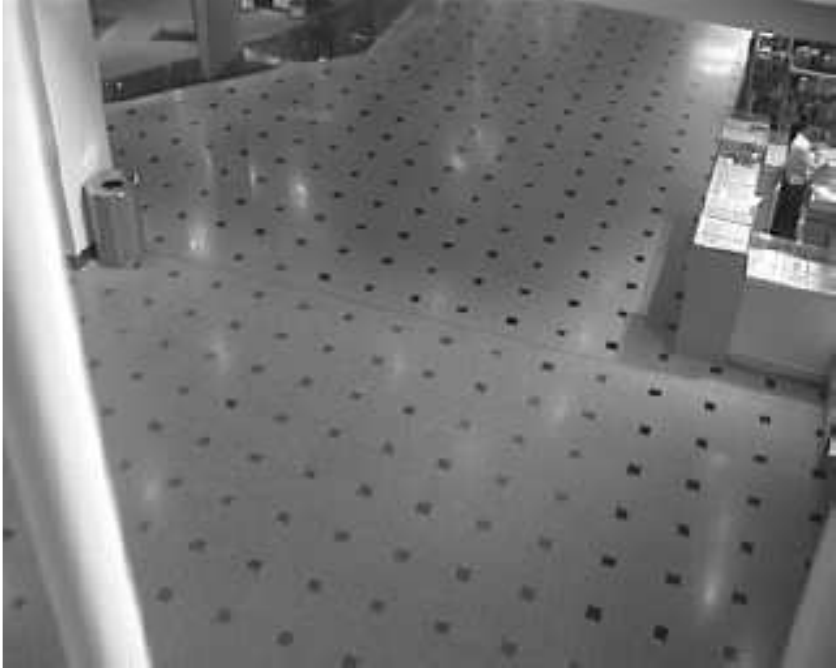}}
\subfloat{\includegraphics[width=.19\linewidth]{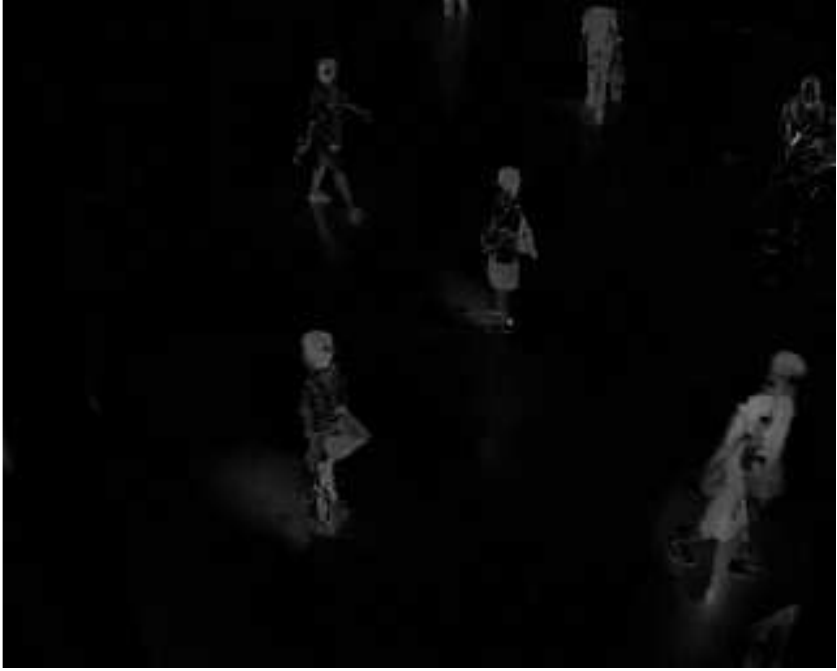}}\\
\subfloat{\includegraphics[width=.19\linewidth]{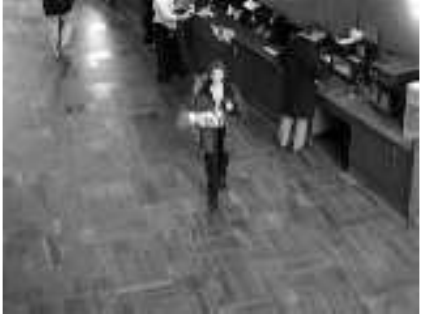}}\quad
\subfloat{\includegraphics[width=.19\linewidth]{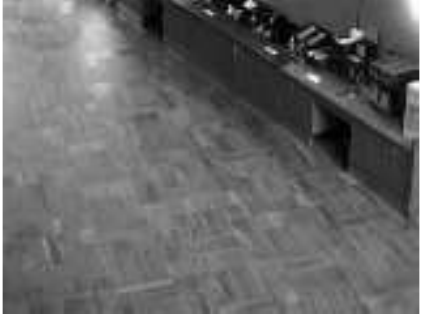}}
\subfloat{\includegraphics[width=.19\linewidth]{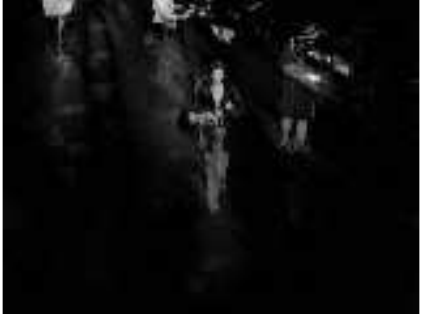}}
\subfloat{\includegraphics[width=.19\linewidth]{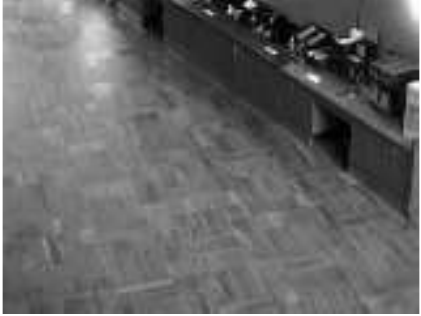}}
\subfloat{\includegraphics[width=.19\linewidth]{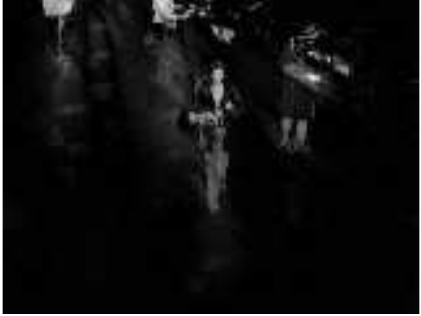}}\\
\subfloat{\includegraphics[width=.19\linewidth]{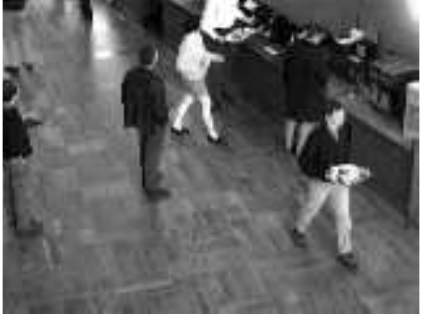}}\quad
\subfloat{\includegraphics[width=.19\linewidth]{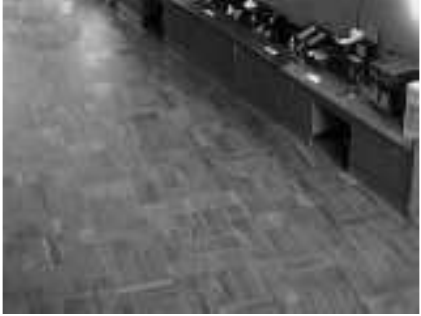}}
\subfloat{\includegraphics[width=.19\linewidth]{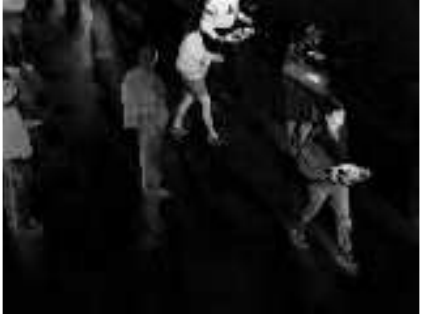}}
\subfloat{\includegraphics[width=.19\linewidth]{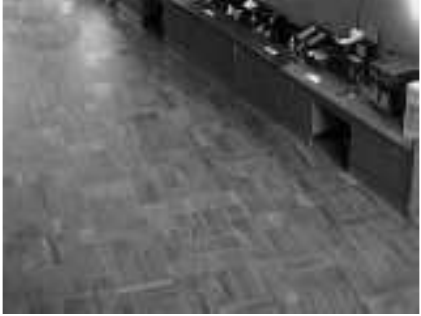}}
\subfloat{\includegraphics[width=.19\linewidth]{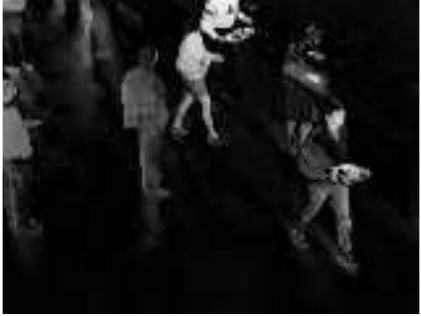}}\\
\subfloat{\includegraphics[width=.19\linewidth]{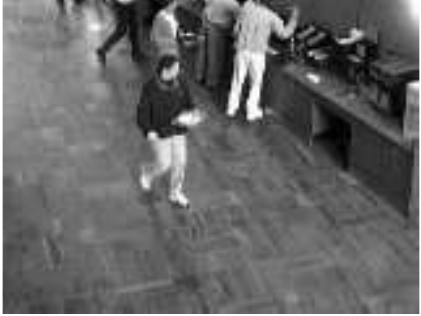}}\quad
\subfloat{\includegraphics[width=.19\linewidth]{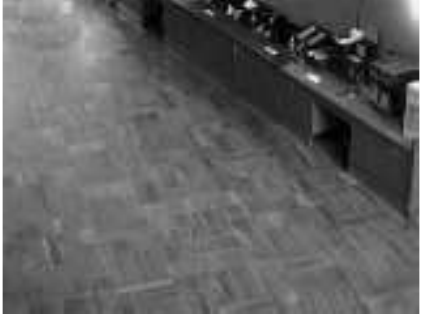}}
\subfloat{\includegraphics[width=.19\linewidth]{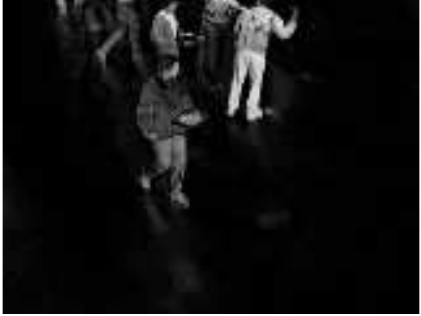}}
\subfloat{\includegraphics[width=.19\linewidth]{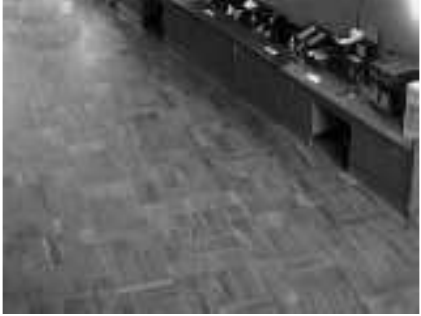}}
\subfloat{\includegraphics[width=.19\linewidth]{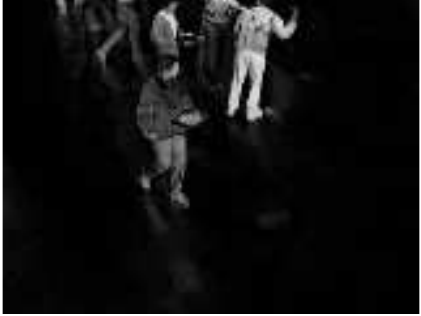}}\\
\caption{Video background subtraction: The top three rows correspond to three different frames from the video \textit{Shoppingmall}, while the bottom three rows are frames from the video \textit{Restaurant}. The first column contains the original frames,  the middle two columns are the separated background and foreground  outputs of AccAltProj with trim, and  the right two columns are the separated background and foreground  outputs of AccAltProj without trim.}\label{video_test_result}
\end{figure}

\begin{table}

\caption{Computational results for video background subtraction. Here ``S'' represents \textit{Shoppingmall}, ``R'' represents \textit{Restaurant}, and $\mu$ is the incoherence parameter of the output low rank matrices  along the time axis (i.e., among different frames).}\label{tab:video}

\centering

 \begin{tabular}{ |c||c|c|c|c|c|c|c|c| } 

\hline

~ &  \multicolumn{2}{c|}{AccAltProj w/ trim}  & \multicolumn{2}{c|}{AccAltProj w/o trim} & \multicolumn{2}{c|}{AltProj} & 
\multicolumn{2}{c|}{GD}\cr

\cline{2-9}

~                       & runtime &  $\mu$                & runtime &$\mu$  
& runtime &  $\mu$
& runtime &  $\mu$          \cr

\hhline {|=||=|=|=|=|=|=|=|=|}

S   &  $38.98s$    & $2.12$    & $38.79s$     & $2.26$ & $82.97s$ & $2.13$  &$161.1s$ & $2.85$\cr

R     &  $28.09s$    & $5.16$    & $27.94s$     & $5.25$ & $69.12s$ & $5.28$  & $107.3s$  & $6.07$\cr

\hline

\end{tabular}

\end{table}
\section{Proofs} \label{sec:proofs}
\subsection{Proof of Theorem \ref{thm:local convergence}} \label{subsec:proof of local convergence}
The proof of Theorem~\ref{thm:local convergence} follows a route established in \citep{netrapalli2014non}. Despite this, the details of the proof itself are nevertheless quite involved because there are two more operations (i.e., projection onto a tangent space and trim) in AccAltProj than in AltProj. 
Overall, the proof consists of two steps: 
\begin{itemize}
\item When $\ln\BL-\BL_k\rn_2$ and $\ln\BS-\BS_k\rn_\infty$ are sufficiently small, and $supp(\BS_k)\subset\Omega$, then $\ln\BL-\BL_{k+1}\rn_2$ deceases in some sense by a constant factor (see Lemma~\ref{lemma:Bound_of_L-L_k_2_norm}) and $\ln\BL-\BL_{k+1}\rn_\infty$ is small (see Lemma~\ref{lemma:Bound_of_L-L_k_infty_norm}).
\item When $\ln\BL-\BL_{k+1}\rn_\infty$ is sufficiently small, we can choose $\zeta_{k+1}$ such that $supp(\BS_{k+1})\subset\Omega$ and $\ln\BS-\BS_{k+1}\rn_\infty$  is small (see Lemma~\ref{lemma:Bound_of_S-S_k}).
\end{itemize}
These results will be presented in a set of lemmas. For ease of notation we define $\tau:=4\alpha\mu r\kappa$ and $\upsilon:=\tau(48\sqrt{\mu}r\kappa+\mu r)$ in the sequel. 
\begin{lemma}  \label{lemma:Weyls_inequality}
\textbf{(Weyl's inequality)} Let $\bm{A}, \bm{B}, \bm{C}\in \mathbb{R}^{n\times n}$ be the symmetric matrices such that $\bm{A} = \bm{B} + \bm{C}$. Then the inequality $$|\sigma_i^A-\sigma_i^B| \leq \|\bm{C}\|_2$$ holds for all $i$, where $\sigma_i^A$ and $\sigma_i^B$ represent the $i^{th}$ singular values of $\bm{A}$ and $\bm{B}$ respectively.
\end{lemma}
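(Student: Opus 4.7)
The plan is to prove the stated inequality via the Courant--Fischer min--max characterization of singular values. For any matrix $\bm{A}\in\mathbb{R}^{n\times n}$ with singular values $\sigma_1^A \geq \sigma_2^A \geq \cdots \geq \sigma_n^A \geq 0$, the variational formula
$$\sigma_i^A \;=\; \max_{\substack{V\subset\mathbb{R}^n \\ \dim V = i}}\ \min_{\substack{\bm{v}\in V \\ \|\bm{v}\|_2 = 1}}\ \|\bm{A}\bm{v}\|_2$$
holds (this is a standard consequence of the SVD, and does not actually require symmetry). This identity is the only tool needed.

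First I would establish the upper bound $\sigma_i^A \leq \sigma_i^B + \|\bm{C}\|_2$. Let $V^\star$ be an $i$-dimensional subspace that attains the maximum in the variational formula for $\bm{B}$, so that $\min_{\bm{v}\in V^\star,\ \|\bm{v}\|_2=1}\|\bm{B}\bm{v}\|_2 = \sigma_i^B$. For every unit vector $\bm{v}\in V^\star$, the triangle inequality gives
$$\|\bm{A}\bm{v}\|_2 \;=\; \|\bm{B}\bm{v}+\bm{C}\bm{v}\|_2 \;\leq\; \|\bm{B}\bm{v}\|_2 + \|\bm{C}\bm{v}\|_2 \;\leq\; \|\bm{B}\bm{v}\|_2 + \|\bm{C}\|_2,$$
and minimizing over unit $\bm{v}\in V^\star$ yields $\min_{\bm{v}} \|\bm{A}\bm{v}\|_2 \leq \sigma_i^B + \|\bm{C}\|_2$. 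Since $V^\star$ is a particular competitor in the max defining $\sigma_i^A$, the upper bound follows.

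The lower bound $\sigma_i^A \geq \sigma_i^B - \|\bm{C}\|_2$ then comes from the symmetric argument applied to the identity $\bm{B} = \bm{A} + (-\bm{C})$: the upper bound just proven, together with $\|-\bm{C}\|_2 = \|\bm{C}\|_2$, gives $\sigma_i^B \leq \sigma_i^A + \|\bm{C}\|_2$. Combining the two one-sided bounds yields $|\sigma_i^A - \sigma_i^B| \leq \|\bm{C}\|_2$ as claimed.

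There is essentially no obstacle here, since this is the classical Weyl inequality for singular values and the min--max argument is immediate once the variational characterization is in hand. The symmetry hypothesis stated in the lemma is not actually used in the singular-value version and is presumably inherited from the intended application of the lemma to perturbations of the symmetric augmented matrices introduced earlier in the paper.
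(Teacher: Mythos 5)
The paper itself offers no proof of this lemma (it simply cites a textbook), so you are supplying an argument where none is given; the min--max route you chose is the standard one, but the upper-bound half as written does not go through. Having fixed $V^\star$ to be a subspace attaining the maximum in the variational formula for $\bm{B}$, you correctly derive $\min_{\bm{v}\in V^\star,\ \|\bm{v}\|_2=1}\|\bm{A}\bm{v}\|_2\leq \sigma_i^B+\|\bm{C}\|_2$, but then conclude that ``since $V^\star$ is a particular competitor in the max defining $\sigma_i^A$, the upper bound follows.'' A particular competitor in a maximum only yields $\sigma_i^A\geq \min_{\bm{v}\in V^\star}\|\bm{A}\bm{v}\|_2$, i.e.\ a \emph{lower} bound on $\sigma_i^A$; combined with your estimate this merely says that two quantities both dominate $\min_{\bm{v}\in V^\star}\|\bm{A}\bm{v}\|_2$, and gives no comparison between $\sigma_i^A$ and $\sigma_i^B+\|\bm{C}\|_2$. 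What your choice of $V^\star$ actually proves---via the reverse triangle inequality $\|\bm{A}\bm{v}\|_2\geq\|\bm{B}\bm{v}\|_2-\|\bm{C}\|_2$---is the lower bound $\sigma_i^A\geq\sigma_i^B-\|\bm{C}\|_2$.

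To get the upper bound the subspace must be adapted to $\bm{A}$, not $\bm{B}$: for \emph{every} $i$-dimensional subspace $V$, evaluating at the unit vector $\bm{v}_0\in V$ that minimizes $\|\bm{B}\bm{v}\|_2$ gives
$\min_{\bm{v}\in V,\ \|\bm{v}\|_2=1}\|\bm{A}\bm{v}\|_2\leq\|\bm{A}\bm{v}_0\|_2\leq\|\bm{B}\bm{v}_0\|_2+\|\bm{C}\|_2\leq\sigma_i^B+\|\bm{C}\|_2$,
and taking the maximum over all such $V$ on the left yields $\sigma_i^A\leq\sigma_i^B+\|\bm{C}\|_2$. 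With that repair, the two one-sided bounds (or your symmetry trick applied to $\bm{B}=\bm{A}+(-\bm{C})$) combine exactly as you intend. Your closing remark is also correct: the symmetry hypothesis is not needed for the singular-value form of Weyl's inequality, and is present in the lemma only because the paper applies it to symmetric matrices.
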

\begin{proof}
This is a well-known result and the proof can be found in many  standard textbooks, see for example \cite{bhatia2013matrix}.
\end{proof}

\begin{lemma}  \label{lemma:bound of sparse matrix}
Let $\bm{S} \in \mathbb{R}^{n\times n}$ be a symmetric sparse matrix which satisfies Assumption \nameref{assume:Sparse}. Then, the inequality
\[
\|\bm{S}\|_2 \leq \alpha n\|\bm{S}\|_\infty
\]
holds, where $\alpha$ is the sparsity level of $\bm{S}$.
\end{lemma}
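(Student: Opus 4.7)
The plan is to bound the spectral norm by the geometric mean of the induced $1$-norm and induced $\infty$-norm of $\bm{S}$ (i.e.\ maximum column sum and maximum row sum of absolute values), and then exploit the per-row/per-column sparsity from Assumption A2 together with the entrywise bound $\|\bm{S}\|_\infty = \max_{ij}|[\bm{S}]_{ij}|$ to control each of those two induced norms by $\alpha n \|\bm{S}\|_\infty$.

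More concretely, the first step is to invoke the standard inequality $\|\bm{A}\|_2 \le \sqrt{\|\bm{A}\|_{1\to 1}\,\|\bm{A}\|_{\infty\to\infty}}$ for any matrix $\bm{A}$, where $\|\bm{A}\|_{1\to 1} = \max_j \sum_i |[\bm{A}]_{ij}|$ and $\|\bm{A}\|_{\infty\to\infty} = \max_i \sum_j |[\bm{A}]_{ij}|$. (This follows either from Riesz--Thorin interpolation, or more elementarily from the facts $\|\bm{A}\|_2^2 = \|\bm{A}^T\bm{A}\|_2 \le \|\bm{A}^T\bm{A}\|_{\infty\to\infty} \le \|\bm{A}^T\|_{\infty\to\infty}\|\bm{A}\|_{\infty\to\infty} = \|\bm{A}\|_{1\to 1}\|\bm{A}\|_{\infty\to\infty}$.) Since $\bm{S}$ is symmetric, the two induced norms coincide, so it suffices to bound just one of them.

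Next, for any row index $i$, Assumption A2 gives $\|\bm{e}_i^T\bm{S}\|_0 \le \alpha n$, so the row $\bm{e}_i^T\bm{S}$ has at most $\alpha n$ non-zero entries, each of absolute value at most $\|\bm{S}\|_\infty$. Therefore
\[
\sum_j |[\bm{S}]_{ij}| \;\le\; \alpha n \cdot \|\bm{S}\|_\infty,
\]
which yields $\|\bm{S}\|_{\infty\to\infty} \le \alpha n \|\bm{S}\|_\infty$, and likewise $\|\bm{S}\|_{1\to 1} \le \alpha n \|\bm{S}\|_\infty$ by the column sparsity bound (or by symmetry). Plugging into the interpolation inequality gives $\|\bm{S}\|_2 \le \alpha n \|\bm{S}\|_\infty$, as desired.

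There is no real obstacle here; the statement is essentially bookkeeping once the right classical inequality is invoked. The only thing worth flagging is the notational clash: $\|\cdot\|_\infty$ on the right-hand side is the entrywise max, not the operator $\infty$-norm, so the writeup should make that distinction explicit to avoid confusion when the interpolation bound is applied.
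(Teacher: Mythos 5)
Your proof is correct. Note that the paper does not actually supply its own argument for this lemma: it simply defers to Lemma 4 of \citep{netrapalli2014non}. The proof given there takes a different (equally elementary) route: it writes $\|\bm{S}\|_2=\max_{\|\bm{u}\|_2=\|\bm{v}\|_2=1}\bm{u}^T\bm{S}\bm{v}$, bounds $|\bm{u}^T\bm{S}\bm{v}|\le \|\bm{S}\|_\infty\sum_{(i,j)\in\Omega}|u_i||v_j|$, applies $|u_i||v_j|\le\frac{1}{2}(u_i^2+v_j^2)$, and then uses the fact that each row and each column of $\Omega$ contains at most $\alpha n$ entries to collapse the sum to $\alpha n$. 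Your route instead goes through the classical bound $\|\bm{S}\|_2\le\sqrt{\|\bm{S}\|_{1\to 1}\,\|\bm{S}\|_{\infty\to\infty}}$ and controls each induced norm by the row/column sparsity; both arguments use exactly the same two ingredients (entrywise bound plus the cap of $\alpha n$ nonzeros per row and column), and neither needs the symmetry hypothesis. The variational/AM--GM argument is fully self-contained, while yours outsources one step to a standard operator-norm interpolation inequality (which you correctly justify via $\|\bm{S}\|_2^2=\|\bm{S}^T\bm{S}\|_2\le\|\bm{S}\|_{1\to1}\|\bm{S}\|_{\infty\to\infty}$); your closing remark about the notational clash between the entrywise max norm $\|\cdot\|_\infty$ used in the statement and the operator $\infty$-norm is well taken and worth keeping.
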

\begin{proof}
The proof can be found in \citep[Lemma 4]{netrapalli2014non}.
\end{proof}

\begin{lemma}  \label{lemma:trimmed bound}
Let \textnormal{Trim} be the algorithm defined by Algorithm \ref{Algo:Trim}. If $\bm{L}_k\in \mathbb{R}^{n\times n}$ is a rank-$r$ matrix with
\begin{equation*}
\|\bm{L}_k-\bm{L}\|_2 \leq \frac{\sigma_r^L}{20\sqrt{r}} ,
\end{equation*}
then the trim output  with the level $\sqrt{\frac{\mu r}{n}}$ 
satisfies
\begin{equation}  \label{eq:trim bound F-norm}
\|\widetilde{\bm{L}}_{k}-\bm{L}\|_F\leq8\kappa \|\bm{L}_{k}-\bm{L}\|_F,
\end{equation}
\begin{equation}  \label{eq:trimmed incoh }
\max_i \|\bm{e}_i^T \widetilde{\bm{U}}_k\|_2\leq \frac{10}{9} \sqrt{\frac{\mu r}{n}}, \quad\textnormal{and}\quad \max_j \|\bm{e}_j^T \widetilde{\bm{V}}_k\|_2\leq \frac{10}{9}\sqrt{\frac{\mu r}{n}},
\end{equation}
where  $\widetilde{\bm{L}}_k=\widetilde{\bm{U}}_k\widetilde{\bm{\Sigma}}_k\widetilde{\bm{V}}_k^T$ is the SVD of $\widetilde{\bm{L}}_k$. Furthermore, it follows that
\begin{equation}  \label{eq:trim bound 2-norm}
\|\widetilde{\bm{L}}_{k}-\bm{L}\|_2\leq8\sqrt{2r}\kappa \|\bm{L}_{k}-\bm{L}\|_2.
\end{equation}
\end{lemma}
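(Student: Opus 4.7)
The plan is to first establish, via Weyl's inequality and a sin-$\Theta$/Wedin-type perturbation bound, that the singular vector matrices $\bm{U}_k$ and $\bm{V}_k$ of $\bm{L}_k$ must each be close (after an appropriate $r\times r$ rotation $\bm{R}_U,\bm{R}_V$) to the true singular vector matrices $\bm{U}$ and $\bm{V}$ in Frobenius norm, and then show that the Trim algorithm barely disturbs these factors. Concretely, by Lemma~\ref{lemma:Weyls_inequality} and the hypothesis $\|\bm{L}_k-\bm{L}\|_2\le\sigma_r^L/(20\sqrt r)$, we get $\sigma_r^{L_k}\ge(19/20)\sigma_r^L$, and a Wedin-style bound yields $\|\bm{U}_k-\bm{U}\bm{R}_U\|_F+\|\bm{V}_k-\bm{V}\bm{R}_V\|_F\lesssim\|\bm{L}_k-\bm{L}\|_F/\sigma_r^L\le\sqrt{2r}/(20\sqrt r)\cdot$const, which can be made $\le 1/10$.

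Next I would exploit the explicit form of Trim. Write $\bm{A}=\bm{D}_U\bm{U}_k$ with $\bm{D}_U$ diagonal having entries in $[0,1]$, and only the rows with $\|\bm{U}_k^{(i)}\|>c_\mu=\sqrt{\mu r/n}$ actually get scaled. For those rows, a direct calculation gives $\|\bm{A}^{(i)}-\bm{U}_k^{(i)}\|\le\|\bm{U}_k^{(i)}\|-c_\mu\le\|\bm{U}_k^{(i)}\|-\|(\bm{U}\bm{R}_U)^{(i)}\|\le\|\bm{e}_i^T(\bm{U}_k-\bm{U}\bm{R}_U)\|$, since $\bm{U}\bm{R}_U$ inherits $\mu$-incoherence from $\bm{U}$. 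Summing over $i$ (and trivially for unclipped rows) gives $\|\bm{A}-\bm{U}_k\|_F\le\|\bm{U}_k-\bm{U}\bm{R}_U\|_F$, and symmetrically for $\bm{B}$ versus $\bm{V}_k$.

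From here I would obtain \eqref{eq:trim bound F-norm} by the splitting $\widetilde{\bm L}_k-\bm{L}_k=(\bm{A}-\bm{U}_k)\bm{\Sigma}_k\bm{B}^T+\bm{U}_k\bm{\Sigma}_k(\bm{B}-\bm{V}_k)^T$, controlling $\|\bm{\Sigma}_k\|_2\le 2\sigma_1^L$ by Weyl and $\|\bm{A}\|_2,\|\bm{B}\|_2\le 1$ (since they are row-rescalings of orthogonal matrices by factors in $[0,1]$), so that $\|\widetilde{\bm L}_k-\bm{L}_k\|_F\lesssim\sigma_1^L\cdot\|\bm{L}_k-\bm{L}\|_F/\sigma_r^L=\kappa\|\bm{L}_k-\bm{L}\|_F$; then a final triangle inequality absorbs the extra $\|\bm{L}_k-\bm{L}\|_F$ term to yield the $8\kappa$ constant. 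For the spectral bound \eqref{eq:trim bound 2-norm}, since both $\widetilde{\bm L}_k$ and $\bm{L}$ have rank at most $r$, $\widetilde{\bm L}_k-\bm{L}$ has rank at most $2r$, so $\|\widetilde{\bm L}_k-\bm{L}\|_2\le\|\widetilde{\bm L}_k-\bm{L}\|_F\le 8\kappa\|\bm{L}_k-\bm{L}\|_F\le 8\sqrt{2r}\,\kappa\|\bm{L}_k-\bm{L}\|_2$.

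The incoherence conclusion \eqref{eq:trimmed incoh } is the most delicate step, and the main obstacle. The trick I would use is to note that $\operatorname{col}(\widetilde{\bm L}_k)\subseteq\operatorname{col}(\bm{A})$, so we may write $\widetilde{\bm U}_k=\bm{A}\bm{C}$ for some $r\times r$ matrix $\bm{C}$. Orthonormality $\widetilde{\bm U}_k^T\widetilde{\bm U}_k=\bm{I}$ forces $\|\bm{C}\|_2\le 1/\sigma_{\min}(\bm{A})$ (for any unit $\bm{x}$, $\sigma_{\min}(\bm{A})\|\bm{C}\bm{x}\|_2\le\|\bm{A}\bm{C}\bm{x}\|_2=1$). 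Since $\sigma_{\min}(\bm{A})\ge 1-\|\bm{A}-\bm{U}_k\|_F\ge 9/10$ by the previous step with constants tracked carefully, we get $\|\bm{C}\|_2\le 10/9$, and then
\[
\|\bm{e}_i^T\widetilde{\bm U}_k\|_2=\|\bm{A}^{(i)}\bm{C}\|_2\le\|\bm{A}^{(i)}\|_2\|\bm{C}\|_2\le c_\mu\cdot\tfrac{10}{9}=\tfrac{10}{9}\sqrt{\tfrac{\mu r}{n}};
\]
the identical argument for $\widetilde{\bm V}_k$ via $\bm{B}$ completes the proof. The delicate part is pinning down the Wedin-type constant so that $\|\bm{A}-\bm{U}_k\|_F\le 1/10$ actually follows from the stated hypothesis on $\|\bm{L}_k-\bm{L}\|_2$; any looseness there propagates into the incoherence constant $10/9$.
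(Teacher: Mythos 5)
Your proposal and the paper's proof diverge sharply in scope: the paper's own proof is essentially two lines --- it uses the rank-$2r$ bound $\|\bm{L}_k-\bm{L}\|_F\leq\sqrt{2r}\,\|\bm{L}_k-\bm{L}\|_2\leq\sigma_r^L/(10\sqrt{2})$ to verify the hypothesis of \citep[Lemma 4.10]{wei2016guarantees_completion}, cites that lemma for \eqref{eq:trim bound F-norm} and the incoherence bound, and then obtains \eqref{eq:trim bound 2-norm} exactly as you do via $\|\widetilde{\bm{L}}_k-\bm{L}\|_2\leq\|\widetilde{\bm{L}}_k-\bm{L}\|_F$. What you have written is, in effect, a self-contained reconstruction of that cited lemma, and its architecture is sound: Weyl plus a Wedin-type subspace bound, the row-wise observation that clipping moves each row of $\bm{U}_k$ by no more than its distance to the corresponding row of $\bm{U}\bm{R}_U$ (so $\|\bm{A}-\bm{U}_k\|_F\leq\|\bm{U}_k-\bm{U}\bm{R}_U\|_F$), the factor-wise splitting of $\widetilde{\bm{L}}_k-\bm{L}_k$, and the clean device $\widetilde{\bm{U}}_k=\bm{A}\bm{C}$ with $\|\bm{C}\|_2\leq1/\sigma_{\min}(\bm{A})$ for the incoherence of the trimmed factors. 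This is the standard route to such trim lemmas, and it buys a reader a proof that does not depend on an external reference.

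The one place the proposal does not close --- and you flag it yourself --- is the constant tracking, and it is not merely cosmetic. With the off-the-shelf Wedin bound one gets roughly
\[
\min_{\bm{R}_U}\|\bm{U}_k-\bm{U}\bm{R}_U\|_F\;\leq\;\frac{2\,\|\bm{L}_k-\bm{L}\|_F}{\sigma_r^{L_k}}\;\leq\;\frac{2\cdot\sigma_r^L/(10\sqrt{2})}{(19/20)\,\sigma_r^L}\;\approx\;0.149,
\]
which exceeds the $1/10$ you need for $\sigma_{\min}(\bm{A})\geq 9/10$; it would yield an incoherence constant near $1.18$ rather than $10/9\approx1.11$, and similarly the hidden constant in your Frobenius chain must be at most $7$ for the triangle inequality to land on $8\kappa$. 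Since $19/9$ and $8\sqrt{2r}\kappa$ are consumed literally in Lemma~\ref{lemma:norm_of_Z}, you would either need a sharper subspace perturbation inequality (or a direct argument as in the cited Lemma 4.10) or accept slightly degraded numerical constants propagating downstream. Everything else in the proposal is correct.
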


\begin{proof}
Since both $\bm{L}$ and $\bm{L}_k$ are rank-$r$ matrices, $\bm{L}_{k}-\bm{L}$ is  rank at most $2r$. So
\[
\|\bm{L}_k-\bm{L}\|_F\leq\sqrt{2r}\|\bm{L}_k-\bm{L}\|_2 \leq \sqrt{2r}\frac{\sigma_r^L}{20\sqrt{r}}=\frac{\sigma_r^L}{10\sqrt{2}} .
\]
Then, the first two parts  of the lemma, i.e., (\ref{eq:trim bound F-norm}) and (\ref{eq:trimmed incoh }), follow from  \citep[Lemma 4.10]{wei2016guarantees_completion}. Noting that $\|\widetilde{\bm{L}}_{k}-\bm{L}\|_2\leq \|\widetilde{\bm{L}}_{k}-\bm{L}\|_F$,
\eqref{eq:trim bound 2-norm} follows immediately. 
\end{proof}

\begin{lemma}  \label{lemma: secound_order_bound_I-P}
Let $\bm{L}=\bm{U}\bm{\Sigma}\bm{V}^T$ and $\widetilde{\bm{L}}_k=\widetilde{\bm{U}}_k\widetilde{\bm{\Sigma}}_k\widetilde{\bm{V}}_k^T$ be the SVD of two rank-$r$ matrices,
then
\begin{equation}  \label{eq:secound_order_bound_I-P 1}
\|\bm{U}\bm{U}^T-\widetilde{\bm{U}}\widetilde{\bm{U}}^T\|_2\leq \frac{\|\widetilde{\bm{L}}_{k}-\bm{L}\|_2}{\sigma_r^L}, \qquad \|\bm{V}\bm{V}^T-\widetilde{\bm{V}}\widetilde{\bm{V}}^T\|_2\leq \frac{\|\widetilde{\bm{L}}_{k}-\bm{L}\|_2}{\sigma_r^L},
\end{equation}
and
\begin{equation}   \label{eq:secound_order_bound_I-P 2}
\|(\mathcal{I}-\mathcal{P}_{\widetilde{T}_k}) \bm{L}\|_2\leq\frac{\|\widetilde{\bm{L}}_{k}-\bm{L}\|_2^2}{\sigma_r^L}.
\end{equation}
\end{lemma}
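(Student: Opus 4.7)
The plan is to handle the three bounds in sequence, with the first pair feeding into the proof of the quadratic bound.

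For the sin-theta bounds \eqref{eq:secound_order_bound_I-P 1}, my approach is to sandwich $\|(\bm{I}-\widetilde{\bm{U}}_k\widetilde{\bm{U}}_k^T)\bm{L}\|_2$ between two expressions and isolate the projector difference. On the one hand, since the columns of $\widetilde{\bm{L}}_k$ lie in $\mathrm{range}(\widetilde{\bm{U}}_k)$, we have $(\bm{I}-\widetilde{\bm{U}}_k\widetilde{\bm{U}}_k^T)\bm{L} = (\bm{I}-\widetilde{\bm{U}}_k\widetilde{\bm{U}}_k^T)(\bm{L}-\widetilde{\bm{L}}_k)$, giving the upper bound $\|(\bm{I}-\widetilde{\bm{U}}_k\widetilde{\bm{U}}_k^T)\bm{L}\|_2 \le \|\bm{L}-\widetilde{\bm{L}}_k\|_2$. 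On the other hand, using $\bm{L}=\bm{U}\bm{\Sigma}\bm{V}^T$ and the fact that right multiplication by the orthonormal $\bm{V}^T$ preserves spectral norms, together with the elementary inequality $\|\bm{A}\bm{B}\|_2 \ge \sigma_{\min}(\bm{B})\|\bm{A}\|_2$ for invertible square $\bm{B}$, we obtain $\|(\bm{I}-\widetilde{\bm{U}}_k\widetilde{\bm{U}}_k^T)\bm{L}\|_2 = \|(\bm{I}-\widetilde{\bm{U}}_k\widetilde{\bm{U}}_k^T)\bm{U}\bm{\Sigma}\|_2 \ge \sigma_r^L \|(\bm{I}-\widetilde{\bm{U}}_k\widetilde{\bm{U}}_k^T)\bm{U}\|_2$. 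Combining the two chains and invoking the standard identity $\|\bm{U}\bm{U}^T-\widetilde{\bm{U}}_k\widetilde{\bm{U}}_k^T\|_2 = \|(\bm{I}-\widetilde{\bm{U}}_k\widetilde{\bm{U}}_k^T)\bm{U}\|_2$ for same-rank orthogonal projectors gives the first bound; the $\bm{V}$ bound follows by symmetry.

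For the quadratic bound \eqref{eq:secound_order_bound_I-P 2}, I start from formula \eqref{eq:projection onto perpendicular space tilde k}, which gives $(\mathcal{I}-\mathcal{P}_{\widetilde{T}_k})\bm{L} = (\bm{I}-\widetilde{\bm{U}}_k\widetilde{\bm{U}}_k^T)\bm{L}(\bm{I}-\widetilde{\bm{V}}_k\widetilde{\bm{V}}_k^T)$. Using the identity $\bm{L}=\bm{L}\bm{V}\bm{V}^T$ to insert a copy of the row-space projector of $\bm{L}$ in the middle, I regroup as $[(\bm{I}-\widetilde{\bm{U}}_k\widetilde{\bm{U}}_k^T)\bm{L}\bm{V}]\cdot[\bm{V}^T(\bm{I}-\widetilde{\bm{V}}_k\widetilde{\bm{V}}_k^T)]$ and apply submultiplicativity of the spectral norm. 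The first bracket has spectral norm at most $\|(\bm{I}-\widetilde{\bm{U}}_k\widetilde{\bm{U}}_k^T)\bm{L}\|_2 \le \|\bm{L}-\widetilde{\bm{L}}_k\|_2$ by the same cancellation used above (absorbing $\bm{V}$ costs nothing since it has orthonormal columns); the second bracket is bounded by $\|\bm{L}-\widetilde{\bm{L}}_k\|_2/\sigma_r^L$ directly from the $\bm{V}$-side of \eqref{eq:secound_order_bound_I-P 1}. Multiplying the two yields the claim.

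The main obstacle, and the place where a naive argument would fail, is the choice of grouping in the quadratic bound. If one instead expands $\bm{L}=\bm{U}\bm{\Sigma}\bm{V}^T$ into three separate spectral-norm factors $\|(\bm{I}-\widetilde{\bm{U}}_k\widetilde{\bm{U}}_k^T)\bm{U}\|_2\cdot\|\bm{\Sigma}\|_2\cdot\|(\bm{I}-\widetilde{\bm{V}}_k\widetilde{\bm{V}}_k^T)\bm{V}\|_2$, then $\|\bm{\Sigma}\|_2=\sigma_1^L$ appears and the bound is degraded by a spurious factor of $\kappa$. The trick of absorbing $\bm{\Sigma}$ back into $\bm{L}$ on one side via $\bm{U}\bm{\Sigma}=\bm{L}\bm{V}$, followed by the vanishing identity $(\bm{I}-\widetilde{\bm{U}}_k\widetilde{\bm{U}}_k^T)\widetilde{\bm{L}}_k = 0$, keeps the $\bm{U}$-side factor at the level $\|\bm{L}-\widetilde{\bm{L}}_k\|_2$ rather than $\sigma_1^L$ times a sin-theta quantity, so that only a single $1/\sigma_r^L$ from the $\bm{V}$-side enters, producing the advertised quadratic dependence.
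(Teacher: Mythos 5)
Your proof is correct and takes essentially the same route as the paper's: for the quadratic bound, both arguments factor $(\bm{I}-\widetilde{\bm{U}}_k\widetilde{\bm{U}}_k^T)\bm{L}(\bm{I}-\widetilde{\bm{V}}_k\widetilde{\bm{V}}_k^T)$ so that one factor cancels against $\widetilde{\bm{L}}_k$ and contributes $\|\bm{L}-\widetilde{\bm{L}}_k\|_2$ while the other is a sine-theta factor contributing $\|\bm{L}-\widetilde{\bm{L}}_k\|_2/\sigma_r^L$ --- you merely insert $\bm{V}\bm{V}^T$ in the middle and draw the $1/\sigma_r^L$ from the $\bm{V}$-side, whereas the paper inserts $\bm{U}\bm{U}^T$ on the left and draws it from the $\bm{U}$-side. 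The only other difference is that you prove the projector bounds from scratch (correctly, via the sandwich $\sigma_r^L\|(\bm{I}-\widetilde{\bm{U}}_k\widetilde{\bm{U}}_k^T)\bm{U}\|_2\leq\|(\bm{I}-\widetilde{\bm{U}}_k\widetilde{\bm{U}}_k^T)(\bm{L}-\widetilde{\bm{L}}_k)\|_2\leq\|\bm{L}-\widetilde{\bm{L}}_k\|_2$ together with the equal-rank projector identity), while the paper simply cites them from an external lemma.
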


\begin{proof}
The proof of (\ref{eq:secound_order_bound_I-P 1}) can be found in \citep[Lemma 4.2]{wei2016guarantees_recovery}. The Frobenius norm version of (\ref{eq:secound_order_bound_I-P 2}) can also be found in  \citep{wei2016guarantees_completion, wei2016guarantees_recovery}. Here we only need to prove the spectral norm version, i.e., (\ref{eq:secound_order_bound_I-P 2}). 
Since $\bm{L}=\bm{U}\bm{U}^T\bm{L}$ and $\widetilde{\bm{L}}_k(\bm{I}-\widetilde{\bm{V}}_k\widetilde{\bm{V}}_k^T)=\bm{0}$,  we have
\begin{align*}
\|(\mathcal{I}-\mathcal{P}_{\widetilde{T}_k})\bm{L}\|_2&=\|(\bm{I}-\widetilde{\bm{U}}_k\widetilde{\bm{U}}_k^T) \bm{L}(\bm{I}-\widetilde{\bm{V}}_k\widetilde{\bm{V}}_k^T)\|_2 \cr
												      &=\|(\bm{I}-\widetilde{\bm{U}}_k\widetilde{\bm{U}}_k^T) \bm{U}\bm{U}^T\bm{L}(\bm{I}-\widetilde{\bm{V}}_k\widetilde{\bm{V}}_k^T)\|_2 \cr
												      &=\|(\bm{U}\bm{U}^T-\widetilde{\bm{U}}_k\widetilde{\bm{U}}_k^T) \bm{U}\bm{U}^T\bm{L}(\bm{I}-\widetilde{\bm{V}}_k\widetilde{\bm{V}}_k^T)\|_2 \cr
												      &=\|(\bm{U}\bm{U}^T-\widetilde{\bm{U}}_k\widetilde{\bm{U}}_k^T) \bm{L}(\bm{I}-\widetilde{\bm{V}}_k\widetilde{\bm{V}}_k^T)\|_2 \cr
												      &=\|(\bm{U}\bm{U}^T-\widetilde{\bm{U}}_k\widetilde{\bm{U}}_k^T) (\bm{L}-\widetilde{\bm{L}}_k)(\bm{I}-\widetilde{\bm{V}}_k\widetilde{\bm{V}}_k^T)\|_2 \cr
												      &\leq\|(\bm{U}\bm{U}^T-\widetilde{\bm{U}}_k\widetilde{\bm{U}}_k^T)\|_2 \|(\bm{L}-\widetilde{\bm{L}}_k)\|_2 \|(\bm{I}-\widetilde{\bm{V}}_k\widetilde{\bm{V}}_k^T)\|_2 \cr
												      &\leq\frac{\|\widetilde{\bm{L}}_{k}-\bm{L}\|_2^2}{\sigma_r^L},
\end{align*}
where the last inequality follows from (\ref{eq:secound_order_bound_I-P 1}).
\end{proof}

\begin{lemma}  \label{lemma:error_of_projected_max_norm}
Let  $\bm{S}\in\mathbb{R}^{n\times n}$ be a symmetric matrix satisfying Assumption~\nameref{assume:Sparse}. Let $\widetilde{\bm{L}}_k\in\mathbb{R}^{n\times n}$ be a  rank-$r$ matrix with $\frac{100}{81}\mu$-incoherence. That is,
\begin{equation*}
\max_i \|\bm{e}_i^T \widetilde{\bm{U}}_k\|_2\leq \frac{10}{9} \sqrt{\frac{\mu r}{n}} \quad\textnormal{and}\quad \max_j \|\bm{e}_j^T \widetilde{\bm{V}}_k\|_2\leq \frac{10}{9}\sqrt{\frac{\mu r}{n}},
\end{equation*}
where $\widetilde{\bm{L}}=\widetilde{\bm{U}}_k\widetilde{\bm{\Sigma}}_k\widetilde{\bm{V}}_k^T$ is the SVD of $\widetilde{\bm{L}}_k$. 
 If $supp(\bm{S}_k)\subset\Omega$, then
\begin{equation}
\|\mathcal{P}_{\widetilde{T}_k} (\bm{S}-\bm{S}_k)\|_\infty \leq 4\alpha \mu r \|\bm{S}-\bm{S}_k\|_\infty .
\end{equation}
\end{lemma}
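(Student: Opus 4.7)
The plan is to let $\BZ := \BS - \BS_k$ and exploit two facts: (i) since $\mathrm{supp}(\BS_k) \subset \Omega = \mathrm{supp}(\BS)$, we have $\mathrm{supp}(\BZ) \subset \Omega$, so $\BZ$ inherits the $\alpha$-sparsity structure of $\BS$ (at most $\alpha n$ nonzero entries in each row and in each column); and (ii) $\widetilde{\bm{U}}_k, \widetilde{\bm{V}}_k$ satisfy the $(10/9)\sqrt{\mu r/n}$ row-norm bound from the hypothesis. Using the explicit formula \eqref{eq:projection onto tangent space tilde k}, I would estimate the $(i,j)$ entry
\[
[\mathcal{P}_{\widetilde{T}_k}\BZ]_{ij} = [\widetilde{\bm{U}}_k\widetilde{\bm{U}}_k^T\BZ]_{ij} + [\BZ\widetilde{\bm{V}}_k\widetilde{\bm{V}}_k^T]_{ij} - [\widetilde{\bm{U}}_k\widetilde{\bm{U}}_k^T\BZ\widetilde{\bm{V}}_k\widetilde{\bm{V}}_k^T]_{ij}
\]
term by term and then invoke the triangle inequality.

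For the first term, I would write $[\widetilde{\bm{U}}_k\widetilde{\bm{U}}_k^T\BZ]_{ij} = \sum_l (\widetilde{\bm{U}}_k\widetilde{\bm{U}}_k^T)_{il}\,\BZ_{lj}$. The Cauchy--Schwarz inequality applied to rows of $\widetilde{\bm{U}}_k$ gives $|(\widetilde{\bm{U}}_k\widetilde{\bm{U}}_k^T)_{il}| \leq \|\widetilde{\bm{U}}_k^T\be_i\|_2\|\widetilde{\bm{U}}_k^T\be_l\|_2 \leq (100/81)\mu r/n$, while the $j$th column of $\BZ$ has at most $\alpha n$ nonzero entries each bounded by $\|\BZ\|_\infty$. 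Hence this term is at most $(100/81)\,\alpha\mu r\,\|\BZ\|_\infty$. The second term is bounded identically using the row sparsity of $\BZ$ and the row-norm bound on $\widetilde{\bm{V}}_k$.

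The third term is the main obstacle, because the naive entrywise bound loses an extra factor of $r$ or $n$. My plan is to route it through the spectral norm: write
\[
[\widetilde{\bm{U}}_k\widetilde{\bm{U}}_k^T\BZ\widetilde{\bm{V}}_k\widetilde{\bm{V}}_k^T]_{ij} = (\widetilde{\bm{U}}_k^T\be_i)^T\,(\widetilde{\bm{U}}_k^T\BZ\widetilde{\bm{V}}_k)\,(\widetilde{\bm{V}}_k^T\be_j),
\]
so that this entry is bounded by $\|\widetilde{\bm{U}}_k^T\be_i\|_2\,\|\widetilde{\bm{U}}_k^T\BZ\widetilde{\bm{V}}_k\|_2\,\|\widetilde{\bm{V}}_k^T\be_j\|_2 \leq \|\BZ\|_2\,(10/9)^2\,\mu r/n$. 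Since $\BZ$ is $\alpha$-sparse, Lemma~\ref{lemma:bound of sparse matrix} applied to $\BZ$ (which is symmetric if we appeal to the symmetric setup, or can be handled by a direct row/column argument in general) yields $\|\BZ\|_2 \leq \alpha n\,\|\BZ\|_\infty$, and so the third term is also bounded by $(100/81)\,\alpha\mu r\,\|\BZ\|_\infty$.

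Summing the three contributions gives $|[\mathcal{P}_{\widetilde{T}_k}\BZ]_{ij}| \leq 3\cdot(100/81)\,\alpha\mu r\,\|\BZ\|_\infty < 4\alpha\mu r\,\|\BZ\|_\infty$, which proves the lemma. The only subtle point in executing this plan is the third-term estimate: the entrywise argument used for terms one and two wastes the sparsity because two separate indices of $\BZ$ are being summed against incoherent vectors, so passing through the spectral norm (and invoking Lemma~\ref{lemma:bound of sparse matrix}) is what recovers the single factor of $\alpha\mu r$.
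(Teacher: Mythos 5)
Your proposal is correct and follows essentially the same route as the paper: the same three-term decomposition of $\mathcal{P}_{\widetilde{T}_k}(\bm{S}-\bm{S}_k)$, the same column/row-sparsity count against the incoherence bound $|[\widetilde{\bm{U}}_k\widetilde{\bm{U}}_k^T]_{il}|\leq\frac{100}{81}\frac{\mu r}{n}$ for the first two terms, and the same passage through $\|\bm{S}-\bm{S}_k\|_2\leq\alpha n\|\bm{S}-\bm{S}_k\|_\infty$ (Lemma~\ref{lemma:bound of sparse matrix}) for the doubly-projected cross term, which the paper phrases equivalently as $|\langle \bm{S}-\bm{S}_k,\widetilde{\bm{U}}_k\widetilde{\bm{U}}_k^T\bm{e}_a\bm{e}_b^T\widetilde{\bm{V}}_k\widetilde{\bm{V}}_k^T\rangle|\leq\|\bm{S}-\bm{S}_k\|_2\,\|\widetilde{\bm{U}}_k\widetilde{\bm{U}}_k^T\bm{e}_a\bm{e}_b^T\widetilde{\bm{V}}_k\widetilde{\bm{V}}_k^T\|_*$ for the rank-one factor. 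Your total of $\frac{300}{81}\alpha\mu r<4\alpha\mu r$ matches the paper's accounting, and your identification of the third term as the step that forces the spectral-norm detour is exactly the point of the paper's argument.
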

\begin{proof}
By the incoherence assumption of $\widetilde{\bm{L}}_{k}$ and the sparsity assumption of $\bm{S}-\bm{S}_k$, we have
\begin{align*}
[\mathcal{P}_{\widetilde{T}_k}(\bm{S}-\bm{S}_k)]_{ab} &=\langle \mathcal{P}_{\widetilde{T}_k}(\bm{S}-\bm{S}_k),\bm{e}_a\bm{e}_b^T \rangle \cr
                           &=\langle \bm{S}-\bm{S}_k,\mathcal{P}_{\widetilde{T}_k}(\bm{e}_a\bm{e}_b^T) \rangle  \cr
                           &=\langle \bm{S}-\bm{S}_k,\widetilde{\bm{U}}_k\widetilde{\bm{U}}_k^T\bm{e}_a\bm{e}_b^T   + \bm{e}_a\bm{e}_b^T\widetilde{\bm{V}}_k\widetilde{\bm{V}}_k^T   -  \widetilde{\bm{U}}_k\widetilde{\bm{U}}_k^T\bm{e}_a\bm{e}_b^T \widetilde{\bm{V}}_k\widetilde{\bm{V}}_k^T \rangle  \cr
                           &=  \langle (\bm{S}-\bm{S}_k)\bm{e}_b, \widetilde{\bm{U}}_k\widetilde{\bm{U}}_k^T \bm{e}_a\rangle + \langle \bm{e}_a^T(\bm{S}-\bm{S}_k),\bm{e}_b^T \widetilde{\bm{V}}_k\widetilde{\bm{V}}_k^T\rangle  - \langle \bm{S}-\bm{S}_k,\widetilde{\bm{U}}_k\widetilde{\bm{U}}_k^T\bm{e}_a\bm{e}_b^T \widetilde{\bm{V}}_k\widetilde{\bm{V}}_k^T \rangle  \cr
                           &\leq \|\bm{S}-\bm{S}_k\|_\infty\left(\sum_{i|(i,b)\in\Omega} |\bm{e}_i^T \widetilde{\bm{U}}_k\widetilde{\bm{U}}_k^T\bm{e}_a| + \sum_{j|(a,j)\in\Omega} |\bm{e}_b^T \widetilde{\bm{V}}_k\widetilde{\bm{V}}_k^T\bm{e}_j|\right)\\
                           &\quad+  \|\bm{S}-\bm{S}_k\|_2~\|\widetilde{\bm{U}}_k\widetilde{\bm{U}}_k^T\bm{e}_a\bm{e}_b^T \widetilde{\bm{V}}_k\widetilde{\bm{V}}_k^T\|_\ast       \cr
                           &\leq 2\alpha n\frac{100\mu r}{81n}\|\bm{S}-\bm{S}_k\|_\infty  + \alpha n \|\bm{S}-\bm{S}_k\|_\infty\|\widetilde{\bm{U}}_k\widetilde{\bm{U}}_k^T\bm{e}_a\bm{e}_b^T \widetilde{\bm{V}}_k\widetilde{\bm{V}}_k^T\|_F \cr
                           &\leq \frac{200}{81}\alpha \mu r\|\bm{S}-\bm{S}_k\|_\infty  + \alpha n \frac{\mu r}{n} \|\bm{S}-\bm{S}_k\|_\infty  \cr
                           &= 4\alpha \mu r \|\bm{S}-\bm{S}_k\|_\infty,
\end{align*}
where the first inequality uses H\"older's inequality and the second inequality uses Lemma \ref{lemma:bound of sparse matrix}. We also use the fact $\widetilde{\bm{U}}_k\widetilde{\bm{U}}_k^T\bm{e}_a\bm{e}_b^T\widetilde{\bm{V}}_k\widetilde{\bm{V}}_k^T$ is a rank-$1$  matrix to bound its nuclear norm. 
\end{proof}


\begin{lemma}   \label{lemma:P_T X 2-norm ineq}
Under the symmetric setting, i.e., $\bm{U}\bm{U}^T=\bm{V}\bm{V}^T$ where $\bm{U}\in\R^{n\times r}$ and $\bm{V}\in\R^{n\times r}$ are two orthogonal matrices,  we have
\[
\|\mathcal{P}_T \bm{Z}\|_2 \leq \sqrt{\frac{4}{3}}\|\bm{Z}\|_2
\]
for any symmetric matrix $\bm{Z}\in\mathbb{R}^{n\times n}$. Moreover, the upper bound  is tight.
\end{lemma}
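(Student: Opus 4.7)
\textbf{Proof plan for Lemma~\ref{lemma:P_T X 2-norm ineq}.}

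The plan is to reduce the inequality to a small two-dimensional optimization problem and then solve it in closed form. Let $\bm{P}=\bm{U}\bm{U}^T=\bm{V}\bm{V}^T$, so that in the symmetric setting
\[
\mathcal{P}_T \bm{Z} = \bm{P}\bm{Z} + \bm{Z}\bm{P} - \bm{P}\bm{Z}\bm{P}.
\]
Since $\bm{Z}$ and $\bm{P}$ are both symmetric, so is $\mathcal{P}_T\bm{Z}$. Hence its spectral norm is attained as $|\bm{x}^T(\mathcal{P}_T\bm{Z})\bm{x}|$ for some unit vector $\bm{x}$. First I would split $\bm{x}=\bm{x}_1+\bm{x}_2$ with $\bm{x}_1=\bm{P}\bm{x}$ and $\bm{x}_2=(\bm{I}-\bm{P})\bm{x}$, and compute
\[
\bm{x}^T(\mathcal{P}_T\bm{Z})\bm{x} = 2\bm{x}_1^T\bm{Z}\bm{x} - \bm{x}_1^T\bm{Z}\bm{x}_1 = \bm{x}_1^T\bm{Z}\bm{x}_1 + 2\bm{x}_1^T\bm{Z}\bm{x}_2.
\]

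Next, setting $\alpha=\|\bm{x}_1\|_2$, $\beta=\|\bm{x}_2\|_2$ (so $\alpha^2+\beta^2=1$) and choosing unit vectors $\bm{e}_1=\bm{x}_1/\alpha$, $\bm{e}_2=\bm{x}_2/\beta$ (handling the degenerate cases $\alpha\beta=0$ separately, where the bound is immediate), I would introduce
\[
a=\bm{e}_1^T\bm{Z}\bm{e}_1,\qquad b=\bm{e}_1^T\bm{Z}\bm{e}_2,\qquad c=\bm{e}_2^T\bm{Z}\bm{e}_2,
\]
so that $\bm{x}^T(\mathcal{P}_T\bm{Z})\bm{x}=\alpha^2 a+2\alpha\beta b$. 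Because $\bm{e}_1\perp\bm{e}_2$, the $2\times2$ symmetric matrix $\bm{M}=\bigl(\begin{smallmatrix}a&b\\b&c\end{smallmatrix}\bigr)$ is a principal submatrix of $\bm{Z}$ (in any orthonormal basis extending $\{\bm{e}_1,\bm{e}_2\}$), so $\|\bm{M}\|_2\le\|\bm{Z}\|_2=:\lambda$.

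At this point the problem is purely scalar: maximize $|\alpha^2 a+2\alpha\beta b|$ subject to $\alpha^2+\beta^2=1$ and $\|\bm{M}\|_2\le\lambda$. For fixed $a$, the constraint $\lambda\bm{I}\pm\bm{M}\succeq\bm{0}$ is most relaxed by choosing $c=-a$, which yields $b^2\le\lambda^2-a^2$. A Cauchy--Schwarz step then gives
\[
\alpha^2 a+2\alpha\beta b \le \lambda\sqrt{\alpha^4+4\alpha^2\beta^2}.
\]
Substituting $t=\alpha^2\in[0,1]$, I would then maximize the univariate function $t^2+4t(1-t)=-3t^2+4t$, whose maximum $4/3$ is attained at $t=2/3$. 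This produces $\|\mathcal{P}_T\bm{Z}\|_2\le\sqrt{4/3}\,\|\bm{Z}\|_2$.

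For tightness, I would exhibit an explicit $2\times2$ example: take $\bm{U}=\bm{V}=\bm{e}_1$ and
\[
\bm{Z}=\begin{pmatrix}\sin\theta&\cos\theta\\ \cos\theta&-\sin\theta\end{pmatrix},
\]
so that $\|\bm{Z}\|_2=1$, and check that $\|\mathcal{P}_T\bm{Z}\|_2=\tfrac12(\sqrt{1+3\cos^2\theta}+|\sin\theta|)$ reaches $2/\sqrt{3}=\sqrt{4/3}$ at $\cos^2\theta=2/3$. The main (but mild) obstacle is the justification of the reduction from the operator-norm constraint on $\bm{Z}$ to the $2\times2$ spectral-norm constraint on $\bm{M}$, and to carry out the one-variable optimization without missing edge cases.
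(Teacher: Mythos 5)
Your proposal is correct and follows essentially the same route as the paper: both reduce $\|\mathcal{P}_T\bm{Z}\|_2$ to the Rayleigh quotient $|\bm{x}^T(\mathcal{P}_T\bm{Z})\bm{x}|$, split $\bm{x}$ into its components in the range of $\bm{U}\bm{U}^T$ and its orthogonal complement, and end up maximizing $-3t^2+4t$ at $t=2/3$; your finishing step (compressing $\bm{Z}$ to a $2\times 2$ principal submatrix and applying Cauchy--Schwarz) is just the dual view of the paper's bound $|\langle \bm{y}_1\bm{y}_1^T+\bm{y}_1\bm{y}_2^T+\bm{y}_2\bm{y}_1^T,\bm{Z}\rangle|\leq\|\bm{y}_1\bm{y}_1^T+\bm{y}_1\bm{y}_2^T+\bm{y}_2\bm{y}_1^T\|_*\|\bm{Z}\|_2$. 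Your tightness example is the paper's example (up to scaling by $\sqrt{3}$, with $\cos^2\theta=2/3$).
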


\begin{proof}
First notice that $$\mathcal{P}_T \bm{Z} = \bm{U}\bm{U}^T\bm{Z}+ \bm{Z}\bm{U}\bm{U}^T-\bm{U}\bm{U}^T\bm{Z}\bm{U}\bm{U}^T$$ is symmetric. 
Let $\bm{y}\in\mathbb{R}^n$ be 
a unit vector such that $\|\mathcal{P}_{T}\bm{Z}\|_2=|\bm{y}^T(\mathcal{P}_T\bm{Z})\bm{y}|$.
Denote $\bm{y}_1=\bm{U}\bm{U}^T\bm{y}$ and $\bm{y}_2=(\bm{I}-\bm{U}\bm{U}^T)\bm{y}$. Then,
\begin{align*}
\|\mathcal{P}_T \bm{Z}\|_2&=|\bm{y}^T(\mathcal{P}_T \bm{Z}) \bm{y}| \cr
           &=|\bm{y}_1^T\bm{Z}\bm{y}+\bm{y}^T\bm{Z}\bm{y}_1-\bm{y}_1^T\bm{Z}\bm{y}_1| \cr
           &=|\bm{y}_1^T\bm{Z}\bm{y}_1+\bm{y}_1^T\bm{Z}\bm{y}_2+\bm{y}_2^T\bm{Z}\bm{y}_1| \cr
           &=|\langle \bm{y}_1\bm{y}_1^T+\bm{y}_1\bm{y}_2^T+\bm{y}_2\bm{y}_1^T , \bm{Z} \rangle| \cr
           &\leq  \| \bm{y}_1\bm{y}_1^T+\bm{y}_1\bm{y}_2^T+\bm{y}_2\bm{y}_1^T\|_\ast \|\bm{Z}\|_2.
\end{align*}
Let $a=\|\bm{y}_1\|_2^2$. Since $\bm{y}_1 \perp \bm{y}_2$, we have $\|\bm{y}_1\|_2^2+\|\bm{y}_2\|_2^2=1$, which  implies $\|\bm{y}_2\|_2^2=1-a$ and
\begin{align*}
\bm{y}_1\bm{y}_1^T+\bm{y}_1\bm{y}_2^T+\bm{y}_2\bm{y}_1^T &= \begin{bmatrix} \bm{y}_1 & \bm{y}_2\end{bmatrix} \begin{bmatrix} 1  & 1 \\ 1 & 0\end{bmatrix} \begin{bmatrix} \bm{y}_1 & \bm{y}_2\end{bmatrix}^T \cr
                           &= \begin{bmatrix} \frac{\bm{y}_1}{\sqrt{a}} & \frac{\bm{y}_2}{\sqrt{1-a}}\end{bmatrix} \begin{bmatrix} a  & \sqrt{a(1-a)} \\ \sqrt{a(1-a)} & 0\end{bmatrix} \begin{bmatrix} \frac{\bm{y}_1}{\sqrt{a}} & \frac{\bm{y}_2}{\sqrt{1-a}}\end{bmatrix}^T.
\end{align*}
Since $\begin{bmatrix} \frac{\bm{y}_1}{\sqrt{a}} & \frac{\bm{y}_2}{\sqrt{1-a}}\end{bmatrix}$ is an orthogonal matrix,  one has
\begin{align*}
\|\bm{y}_1\bm{y}_1^T+\bm{y}_1\bm{y}_2^T+\bm{y}_2\bm{y}_1^T\|_\ast &= \norm { \begin{bmatrix} a  & \sqrt{a(1-a)} \\ \sqrt{a(1-a)} & 0\end{bmatrix} }_\ast  \cr
                                    &=\sqrt{a^2+4a(1-a)}  \cr
                                    &=\sqrt{\frac{4}{3}-3\left(a-\frac{2}{3}\right)^2}   \cr
                                    &\leq  \sqrt{\frac{4}{3}},
\end{align*}
which complete the proof for the upper bound.

To show the tightness of the bound, let $\bm{U}=\bm{V}= \begin{bmatrix} 1 \\ 0\end{bmatrix}$ and $\bm{Z}=\begin{bmatrix} 1  & \sqrt{2} \\ \sqrt{2} & -1\end{bmatrix}$. It can be easily verified  that  $\|\mathcal{P}_T \bm{\bm{Z}}\|_2 = \sqrt{\frac{4}{3}}\|\bm{\bm{Z}}\|_2$.
\end{proof}

\begin{lemma} \label{lemma:bound_power_vector_norm_with_incoherence}
Let $\bm{U}\in\mathbb{R}^{n\times r}$ be an orthogonal matrix with $\mu$-incoherence, i.e., $\|\bm{e}_i^T\bm{U}\|_2\leq\sqrt{\frac{\mu r}{n}}$ for all $i$. Then, for any $\bm{Z}\in\mathbb{R}^{n\times n}$, the inequality 
\[
\|\bm{e}_i^T\bm{Z}^a\bm{U}\|_2\leq \max_l\sqrt{\frac{\mu r}{n}}(\sqrt{n}\|\bm{e}_l^T\bm{Z}\|_2)^a
\]
holds for all $i$ and $a\geq 0$.
\end{lemma}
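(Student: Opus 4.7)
The plan is to prove this by induction on the nonnegative integer $a$. The base case $a=0$ is immediate from the $\mu$-incoherence assumption itself, since the right-hand side reduces to $\sqrt{\mu r/n}$ and the left-hand side is exactly $\|\bm{e}_i^T\bm{U}\|_2$, which is bounded by $\sqrt{\mu r/n}$ by hypothesis.

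For the inductive step, I would write $\bm{e}_i^T \bm{Z}^a \bm{U} = \sum_j [\bm{Z}]_{ij}\,\bm{e}_j^T \bm{Z}^{a-1}\bm{U}$, then apply the triangle inequality to obtain
\[
\|\bm{e}_i^T \bm{Z}^a \bm{U}\|_2 \le \sum_j |[\bm{Z}]_{ij}|\,\|\bm{e}_j^T \bm{Z}^{a-1}\bm{U}\|_2 \le \|\bm{e}_i^T\bm{Z}\|_1 \cdot \max_j \|\bm{e}_j^T \bm{Z}^{a-1}\bm{U}\|_2.
\]
The row $\ell_1$ norm is then converted to the row $\ell_2$ norm via Cauchy--Schwarz, giving $\|\bm{e}_i^T\bm{Z}\|_1 \le \sqrt{n}\,\|\bm{e}_i^T\bm{Z}\|_2 \le \max_l \sqrt{n}\,\|\bm{e}_l^T\bm{Z}\|_2$. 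Combining this with the inductive hypothesis applied uniformly over $j$ yields the claimed bound with exponent $a$.

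There is no real obstacle here; the argument is a clean one-line induction, and the $\sqrt{n}$ factor in the statement is precisely what the Cauchy--Schwarz step introduces at each level. The only thing to be careful about is to pass to the maximum over the row index inside the induction, so that the induction hypothesis $\|\bm{e}_j^T \bm{Z}^{a-1}\bm{U}\|_2 \le \max_l \sqrt{\mu r/n}\,(\sqrt{n}\,\|\bm{e}_l^T\bm{Z}\|_2)^{a-1}$ can be applied with $j$ in place of $i$ without changing the right-hand side.
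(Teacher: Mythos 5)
Your proof is correct and follows essentially the same route as the paper's: an induction on $a$ that peels off one factor of $\bm{Z}$, applies the inductive hypothesis uniformly over the row index, and converts $\|\bm{e}_i^T\bm{Z}\|_1$ to $\sqrt{n}\,\|\bm{e}_i^T\bm{Z}\|_2$ via Cauchy--Schwarz. The only cosmetic difference is that the paper expands $\|\bm{e}_i^T\bm{Z}^{a+1}\bm{U}\|_2^2$ as a double sum of inner products and bounds it termwise, whereas you apply the triangle inequality directly to the vector sum, which is slightly cleaner but mathematically equivalent.
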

\begin{proof}
This proof is done by mathematical induction.\\
\textbf{Base case:} When $a=0$, $\|\bm{e}_i^T\bm{U}\|\leq\sqrt{\frac{\mu r}{n}}$ is satisfied following from the assumption. \\
\textbf{Induction Hypothesis:} $\|\bm{e}_i^T(\bm{Z})^a\bm{U}\|_2\leq \max_l\sqrt{\frac{\mu r}{n}}(\sqrt{n}\|\bm{e}_l^T\bm{Z}\|_2)^a $ for all $i$ at the $a^{th}$ power. \\
\textbf{Induction Step:} We have
\begin{align*}
\|\bm{e}_i^T\bm{Z}^{a+1}\bm{U}\|_2^2 &= \|\bm{e}_i^T\bm{Z}\bm{Z}^a\bm{U}\|_2^2 \cr
                        &=\sum_j\left(\sum_k [\bm{Z}]_{ik}[\bm{Z}^a\bm{U}]_{kj}\right)^2 \cr
                        &=\sum_{k_1k_2} [\bm{Z}]_{ik_1}[\bm{Z}]_{ik_2}\sum_j [\bm{Z}^a\bm{U}]_{k_1j} [\bm{Z}^a\bm{U}]_{k_2j}  \cr
                        &=\sum_{k_1k_2} [\bm{Z}]_{ik_1}[\bm{Z}]_{ik_2} \langle \bm{e}_{k_1}^T\bm{Z}^a\bm{U},\bm{e}_{k_2}^T\bm{Z}^a\bm{U} \rangle \cr
                        &\leq \sum_{k_1k_2} |[\bm{Z}]_{ik_1}[\bm{Z}]_{ik_2}|\|\bm{e}_{k_1}^T\bm{Z}^a\bm{U}\|_2~\|\bm{e}_{k_2}^T\bm{Z}^a\bm{U}\|_2  \cr
                        &\leq \max_l\frac{\mu r}{n} (\sqrt{n}\|\bm{e}_l^T\bm{Z}\|_2)^{2a}\sum_{k_1k_2} |[\bm{Z}]_{ik_1}[\bm{Z}]_{ik_2}|  \cr
                        &\leq \max_l\frac{\mu r}{n} (\sqrt{n}\|\bm{e}_l^T\bm{Z}\|_2)^{2a}(\sqrt{n}\|\bm{e}_i^T\bm{Z}\|_2)^2  \cr
                        &\leq \max_l\frac{\mu r}{n} (\sqrt{n}\|\bm{e}_l^T\bm{Z}\|_2)^{2a+2}.
\end{align*}
The proof is complete by taking a square root from both sides. 
\end{proof}

\begin{lemma} \label{lemma:norm_of_Z}
Let $\bm{L}\in\mathbb{R}^{n\times n}$ and $\bm{S}\in\mathbb{R}^{n\times n}$ be two symmetric matrices satisfying Assumptions \nameref{assume:Inco} and \nameref{assume:Sparse}, respectively. 
Let $\widetilde{\bm{L}}_k\in\mathbb{R}^{n\times n}$  be the trim output of $\bm{L}_k$. If
\[
\|\bm{L}-\bm{L}_k\|_2 \leq 8\alpha \mu r \gamma^k\sigma_1^L,\quad
\|\bm{S}-\bm{S}_k\|_\infty \leq \frac{\mu r}{n} \gamma^k\sigma_1^L,\textnormal{\ and }
supp(\bm{S}_k)\subset \Omega,
\]
then  
\begin{equation}  \label{eq:norm_of_Z 1}
\|(\mathcal{P}_{\widetilde{T}_k}-\mathcal{I})\bm{L}+\mathcal{P}_{\widetilde{T}_k}(\bm{S}-\bm{S}_k)\|_2 \leq \tau\gamma^{k+1}\sigma_r^L
\end{equation}
and
\begin{equation} \label{eq:norm_of_Z 2}
\max_l\sqrt{n}\|\bm{e}_l^T[(\mathcal{P}_{\widetilde{T}_k}-\mathcal{I})\bm{L}+\mathcal{P}_{\widetilde{T}_k}(\bm{S}-\bm{S}_k)]\|_2 \leq \upsilon\gamma^{k}\sigma_r^L
\end{equation}
hold for all $k\geq 0$, provided $1>\gamma\geq512\tau r \kappa^2+\frac{1}{\sqrt{12}}$. Here recall that $\tau=4\alpha \mu r\kappa$ and $\upsilon=\tau(48\sqrt{\mu}r\kappa+\mu r)$.
\end{lemma}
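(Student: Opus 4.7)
The strategy is to apply the triangle inequality to split $Z := (\mathcal{P}_{\widetilde{T}_k}-\mathcal{I})\bm{L} + \mathcal{P}_{\widetilde{T}_k}(\bm{S}-\bm{S}_k)$ into its tangent-space-projection error on $\bm{L}$ and its tangent-projection of the sparse residual, then to bound each piece with lemmas already at hand, and finally to use the carefully chosen rate $\gamma$ to assemble the pieces.

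For the spectral bound \eqref{eq:norm_of_Z 1}, I would control $\|(\mathcal{I}-\mathcal{P}_{\widetilde{T}_k})\bm{L}\|_2$ via Lemma~\ref{lemma: secound_order_bound_I-P}, which delivers the quadratic estimate $\|\widetilde{\bm{L}}_k-\bm{L}\|_2^2/\sigma_r^L$; Lemma~\ref{lemma:trimmed bound} then replaces $\widetilde{\bm{L}}_k$ by $\bm{L}_k$ at the cost of a factor $8\sqrt{2r}\kappa$, and the hypothesis $\|\bm{L}-\bm{L}_k\|_2 \leq 8\alpha\mu r \gamma^k\sigma_1^L$ together with $\tau=4\alpha\mu r\kappa$ turn this into a term of order $512 r \tau^2 \kappa^2 \gamma^{2k}\sigma_r^L$. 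For the sparse piece, Lemma~\ref{lemma:P_T X 2-norm ineq} contributes the factor $\sqrt{4/3}$, and because $\mathrm{supp}(\bm{S}_k)\subset\Omega$ ensures $\bm{S}-\bm{S}_k$ is $\alpha$-sparse, Lemma~\ref{lemma:bound of sparse matrix} gives $\|\bm{S}-\bm{S}_k\|_2 \leq \alpha n\|\bm{S}-\bm{S}_k\|_\infty \leq \tfrac{\tau}{4}\gamma^k\sigma_r^L$. The condition $\gamma\geq 512\tau r\kappa^2+1/\sqrt{12}$ is precisely tuned to close the argument: $512 r\tau \kappa^2 \leq \gamma-1/\sqrt{12}$ absorbs the quadratic term into $(\gamma-1/\sqrt{12})\tau\gamma^{2k}\sigma_r^L \leq (\gamma-1/\sqrt{12})\tau\gamma^{k}\sigma_r^L$, while $\sqrt{4/3}/4 = 1/\sqrt{12}$ absorbs the sparse contribution, and the sum collapses to $\tau\gamma^{k+1}\sigma_r^L$.

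For the row-wise bound \eqref{eq:norm_of_Z 2}, the same split is analyzed entry-wise. On the sparse side, Lemma~\ref{lemma:error_of_projected_max_norm} controls $\|\mathcal{P}_{\widetilde{T}_k}(\bm{S}-\bm{S}_k)\|_\infty$, and since any single row of an $n\times n$ matrix has $\ell_2$-norm at most $\sqrt{n}$ times its $\ell_\infty$-norm, I obtain $\sqrt{n}\|\bm{e}_l^T\mathcal{P}_{\widetilde{T}_k}(\bm{S}-\bm{S}_k)\|_2 \leq \mu r\tau \gamma^k \sigma_r^L$. On the low-rank-projection side, I would exploit the rank-$r$ factorization $\bm{L} = \bm{U}\bm{\Sigma}\bm{V}^T$ together with the identities $(\bm{I}-\widetilde{\bm{U}}_k\widetilde{\bm{U}}_k^T)\bm{U} = (\bm{U}\bm{U}^T-\widetilde{\bm{U}}_k\widetilde{\bm{U}}_k^T)\bm{U}$ and its $\bm{V}$-analogue to rewrite $\bm{e}_l^T(\mathcal{I}-\mathcal{P}_{\widetilde{T}_k})\bm{L} = \bm{e}_l^T(\bm{U}\bm{U}^T-\widetilde{\bm{U}}_k\widetilde{\bm{U}}_k^T)\bm{U}\bm{\Sigma}\bm{V}^T(\bm{V}\bm{V}^T-\widetilde{\bm{V}}_k\widetilde{\bm{V}}_k^T)$. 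The row factor is bounded by $\|\bm{e}_l^T\bm{U}\bm{U}^T\|_2+\|\bm{e}_l^T\widetilde{\bm{U}}_k\widetilde{\bm{U}}_k^T\|_2 \leq (1+\tfrac{10}{9})\sqrt{\mu r/n}$ using Assumption A1 and the trim-guaranteed incoherence from Lemma~\ref{lemma:trimmed bound}; the $\bm{V}$-side factor is bounded by $\|\widetilde{\bm{L}}_k-\bm{L}\|_2/\sigma_r^L$ via Lemma~\ref{lemma: secound_order_bound_I-P}; and $\|\bm{\Sigma}\|_2 = \sigma_1^L$. Reapplying Lemma~\ref{lemma:trimmed bound} to estimate $\|\widetilde{\bm{L}}_k-\bm{L}\|_2$ yields the $\sqrt{\mu}\,r\kappa\,\tau$-order contribution, and summing with the sparse contribution produces exactly $\upsilon=\tau(48\sqrt{\mu}r\kappa+\mu r)$ (up to the stated constants).

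The main obstacle will be the row-wise analysis of the first term: the naive estimate $\sqrt{n}\|\bm{e}_l^T\bm{A}\|_2 \leq n\|\bm{A}\|_2$ is hopelessly weak, so the decomposition must simultaneously preserve the incoherence factor $\sqrt{\mu r/n}$ from the $\bm{U}$-side (which is what cancels the explicit $\sqrt{n}$ in front) and extract the smallness $\|\widetilde{\bm{L}}_k-\bm{L}\|_2/\sigma_r^L$ from the $\bm{V}$-side. Splitting the expression using both the identities on projectors and the rank-$r$ structure of $\bm{L}$ is what allows these two distinct mechanisms -- incoherence versus spectral proximity -- to act on the two sides separately, which is the only way the quoted form of $\upsilon$ can emerge.
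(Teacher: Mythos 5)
Your treatment of \eqref{eq:norm_of_Z 1} and of the sparse contribution to \eqref{eq:norm_of_Z 2} matches the paper's proof essentially step for step: the same triangle-inequality split, the same use of Lemmas~\ref{lemma: secound_order_bound_I-P}, \ref{lemma:trimmed bound}, \ref{lemma:P_T X 2-norm ineq}, \ref{lemma:bound of sparse matrix} and \ref{lemma:error_of_projected_max_norm}, and the same bookkeeping by which $512\tau r\kappa^2$ and $1/\sqrt{12}$ add up to $\gamma$.

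The one place your plan diverges is the low-rank term of \eqref{eq:norm_of_Z 2}, and there it loses a factor of $\kappa$. You write $\bm{e}_l^T(\mathcal{I}-\mathcal{P}_{\widetilde{T}_k})\bm{L}=\bm{e}_l^T(\bm{U}\bm{U}^T-\widetilde{\bm{U}}_k\widetilde{\bm{U}}_k^T)\,\bm{U}\bm{\Sigma}\bm{V}^T\,(\bm{V}\bm{V}^T-\widetilde{\bm{V}}_k\widetilde{\bm{V}}_k^T)$ and bound the three factors by $\frac{19}{9}\sqrt{\mu r/n}$, $\sigma_1^L$, and $\|\widetilde{\bm{L}}_k-\bm{L}\|_2/\sigma_r^L$ respectively. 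The product of the last two is $\kappa\|\widetilde{\bm{L}}_k-\bm{L}\|_2$, so after invoking Lemma~\ref{lemma:trimmed bound} your low-rank contribution is of order $48\sqrt{\mu}r\kappa^2\tau\gamma^k\sigma_r^L$; that is, you would prove the lemma only with $\upsilon$ replaced by $\tau(48\sqrt{\mu}r\kappa^2+\mu r)$ instead of the stated $\tau(48\sqrt{\mu}r\kappa+\mu r)$. This is not a harmless constant: $\upsilon$ feeds into the lower bound on $\gamma$ in Lemmas~\ref{lemma:Bound_of_L-L_k_infty_norm} and \ref{lemma:Bound_of_S-S_k} and hence into the sparsity condition \eqref{eq:condition_on_p}. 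The fix is to extract the smallness from the middle factor rather than the right one: since $\widetilde{\bm{L}}_k(\bm{I}-\widetilde{\bm{V}}_k\widetilde{\bm{V}}_k^T)=\bm{0}$, one has $\bm{L}(\bm{I}-\widetilde{\bm{V}}_k\widetilde{\bm{V}}_k^T)=(\bm{L}-\widetilde{\bm{L}}_k)(\bm{I}-\widetilde{\bm{V}}_k\widetilde{\bm{V}}_k^T)$, and the paper accordingly bounds $\|\bm{e}_l^T(\bm{U}\bm{U}^T-\widetilde{\bm{U}}_k\widetilde{\bm{U}}_k^T)\|_2\cdot\|\bm{L}-\widetilde{\bm{L}}_k\|_2\cdot\|\bm{I}-\widetilde{\bm{V}}_k\widetilde{\bm{V}}_k^T\|_2\le\frac{19}{9}\sqrt{\mu r/n}\;\|\bm{L}-\widetilde{\bm{L}}_k\|_2$, the rightmost factor costing $1$ rather than $\kappa$. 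With that substitution the rest of your accounting goes through and reproduces $\upsilon$ exactly.
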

\begin{proof}
For all $k\geq 0$, we get
\begin{align*}
\|(\mathcal{P}_{\widetilde{T}_k}-\mathcal{I})\bm{L}+\mathcal{P}_{\widetilde{T}_k}(\bm{S}-\bm{S}_k)\|_2 &\leq \|(\mathcal{P}_{\widetilde{T}_k}-\mathcal{I})\bm{L}\|_2+   \| \mathcal{P}_{\widetilde{T}_k} (\bm{S}-\bm{S}_k)\|_2\cr
                                                          &\leq \frac{\|\bm{L}-\widetilde{\bm{L}}_k\|_2^2}{\sigma_r^L} + \sqrt{\frac{4}{3}}\|\bm{S}-\bm{S}_k\|_2  \cr
                                                          &\leq \frac{(8\sqrt{2r}\kappa)^2\|\bm{L}-\bm{L}_k\|_2^2}{\sigma_r^L} + \sqrt{\frac{4}{3}}\alpha n\|\bm{S}-\bm{S}_k\|_\infty  \KW{\alpha \lesssim\frac{1}{\mu r^{3/2}\kappa}}\cr
                                                          &\leq 128\cdot 8\alpha \mu r^2 \kappa^3\|\bm{L}-\bm{L}_k\|_2+ \sqrt{\frac{4}{3}}\alpha n\|\bm{S}-\bm{S}_k\|_\infty  \cr
                                                          &\leq \left( 512 \tau r \kappa^2 + \frac{1}{4}\sqrt{\frac{4}{3}}\right) 4\alpha \mu r\gamma^k\sigma_1^L \cr
                                                          &\leq 4\alpha \mu r\gamma^{k+1}\sigma_1^L, \KW{\alpha \lesssim\frac{1}{\mu r^2\kappa^3}} \cr
                                                          &= \tau\gamma^{k+1}\sigma_r^L
\end{align*}
where the second inequality uses Lemma \ref{lemma: secound_order_bound_I-P} and \ref{lemma:P_T X 2-norm ineq}, the third inequality uses Lemma \ref{lemma:bound of sparse matrix} and \ref{lemma:trimmed bound}, the fourth inequality follows from $\frac{\|\bm{L}-\bm{L}_k\|_2}{\sigma_r^L}\leq 8\alpha \mu r \kappa$, and the last inequality uses the bound of $\gamma$. 

To compute the bound of $\max_l \sqrt{n}\|\bm{e}_l^T[(\mathcal{P}_{\widetilde{T}_k}-\mathcal{I})\bm{L}+\mathcal{P}_{\widetilde{T}_k}(\bm{S}-\bm{S}_k)]\|_2$, first note that
\begin{align*}
\max_l \|\bm{e}_l^T(\mathcal{I}-\mathcal{P}_{\widetilde{T}_k})\bm{L}\|_2 &= \max_l \|\bm{e}_l^T(\bm{U}\bm{U}^T-\widetilde{\bm{U}}_k\widetilde{\bm{U}}^T_k)(\bm{L}-\widetilde{\bm{L}}_k)(\bm{I}-\widetilde{\bm{U}}_k\widetilde{\bm{U}}^T_k)\|_2  \cr
								 &\leq \max_l \|\bm{e}_l^T(\bm{U}\bm{U}^T-\widetilde{\bm{U}}_k\widetilde{\bm{U}}^T_k)\|_2\|\bm{L}-\widetilde{\bm{L}}_k\|_2\|\bm{I}-\widetilde{\bm{U}}_k\widetilde{\bm{U}}^T_k\|_2 \cr
								 &\leq \left(\frac{19}{9}\sqrt{\frac{\mu r}{n}}\right)\|\bm{L}-\widetilde{\bm{L}}_k\|_2,
\end{align*}
where the last inequality follows from the fact $\bm{L}$ is $\mu$-incoherent and $\widetilde{\bm{L}}_k$ is $\frac{100}{81}\mu$-incoherent.
Hence, for all $k\geq 0$, we have
\begin{align*}
\max_l\sqrt{n}\|\bm{e}_l^T((\mathcal{P}_{\widetilde{T}_k}-\mathcal{I})\bm{L}+\mathcal{P}_{\widetilde{T}_k}(\bm{S}-\bm{S}_k))\|_2 &\leq \max_l\sqrt{n}\|\bm{e}_l^T(\mathcal{I}-\mathcal{P}_{\widetilde{T}_k})\bm{L}\|_2+\sqrt{n}\|\bm{e}_l^T\mathcal{P}_{\widetilde{T}_k}(\bm{S}-\bm{S}_k)\|_2 \cr
							&\leq \frac{19\sqrt{n}}{9}\sqrt{\frac{\mu r}{n}}\|\bm{L}-\widetilde{\bm{L}}_k\|_2 + n\|\mathcal{P}_{\widetilde{T}_k}(\bm{S}-\bm{S}_k)\|_\infty\cr
							&\leq \frac{19}{9}8\sqrt{2\mu}r\kappa\|\bm{L}-\bm{L}_k\|_2 + 4n\alpha \mu r\|\bm{S}-\bm{S}_k\|_\infty\cr
							&\leq 24\sqrt{\mu}r\kappa\cdot 8\alpha \mu r \gamma^k\sigma_1^L + 4n\alpha \mu r\cdot\frac{\mu r}{n} \gamma^k\sigma_1^L \cr
							&= \upsilon\gamma^{k}\sigma_r^L,  \KW{\alpha \lesssim\frac{1}{u^{3/2}r^2\kappa}} \KW{, \alpha \lesssim\frac{1}{\mu^2r^2}}
\end{align*}
where the third inequality uses Lemma \ref{lemma:trimmed bound} and \ref{lemma:error_of_projected_max_norm}. 
\end{proof}

\begin{lemma} \label{lemma:Bound_eigenvalues}
Let $\bm{L}\in\mathbb{R}^{n\times n}$ and $\bm{S}\in\mathbb{R}^{n\times n}$ be two symmetric matrices satisfying Assumptions \nameref{assume:Inco} and \nameref{assume:Sparse}, respectively. Let $\widetilde{\bm{L}}_k\in\mathbb{R}^{n\times n}$  be the trim output of $\bm{L}_k$.  
If \[
\|\bm{L}-\bm{L}_k\|_2 \leq 8\alpha \mu r \gamma^k\sigma_1^L,\quad
\|\bm{S}-\bm{S}_k\|_\infty \leq \frac{\mu r}{n} \gamma^k\sigma_1^L,\textnormal{\ and }
supp(\bm{S}_k)\subset \Omega,
\]
then
\begin{equation}  \label{eq:Bound_eigenvalues 1}
|\sigma^L_i-|\lambda^{(k)}_i|| \leq \tau\sigma_r^L
\end{equation}
and
\begin{equation}  \label{eq:Bound_eigenvalues 2}
(1-2\tau)\gamma^j \sigma^L_1\leq |\lambda^{(k)}_{r+1}| +\gamma^j|\lambda^{(k)}_1|\leq(1+2\tau)\gamma^j \sigma^L_1
\end{equation}
hold for all $k\geq 0$ and $j \leq k+1$,  provided $1>\gamma\geq512\tau r \kappa^2+\frac{1}{\sqrt{12}}$. Here $|\lambda^{(k)}_{i}|$ is the $i^{th}$ singular value of $\mathcal{P}_{\widetilde{T}_k}(\bm{D}-\bm{S}_k)$.
\end{lemma}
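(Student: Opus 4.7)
My plan is to rewrite the matrix whose singular values appear in the statement as a small perturbation of $\bm{L}$ and then apply Weyl's inequality. Since $\bm{D}=\bm{L}+\bm{S}$,
\[
\mathcal{P}_{\widetilde{T}_k}(\bm{D}-\bm{S}_k)=\bm{L}+\bm{E}_k,\qquad \bm{E}_k:=(\mathcal{P}_{\widetilde{T}_k}-\mathcal{I})\bm{L}+\mathcal{P}_{\widetilde{T}_k}(\bm{S}-\bm{S}_k).
\]
The hypotheses on $\bm{L}_k$, $\bm{S}_k$, and $supp(\bm{S}_k)$ are exactly those of Lemma~\ref{lemma:norm_of_Z}, which yields $\|\bm{E}_k\|_2\leq\tau\gamma^{k+1}\sigma_r^L$. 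In the symmetric setting $\widetilde{\bm{U}}_k=\widetilde{\bm{V}}_k$, so $\mathcal{P}_{\widetilde{T}_k}$ preserves symmetry and $\mathcal{P}_{\widetilde{T}_k}(\bm{D}-\bm{S}_k)$ is symmetric; Lemma~\ref{lemma:Weyls_inequality} therefore applies to its singular values $|\lambda^{(k)}_i|$.

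For \eqref{eq:Bound_eigenvalues 1} I would simply write
\[
\bigl||\lambda^{(k)}_i|-\sigma_i^L\bigr|\leq \|\bm{E}_k\|_2\leq \tau\gamma^{k+1}\sigma_r^L \leq \tau\sigma_r^L.
\]

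For \eqref{eq:Bound_eigenvalues 2} the key observation is that $\sigma_{r+1}^L=0$ since $\bm{L}$ has rank $r$, so Weyl at index $r+1$ gives $|\lambda^{(k)}_{r+1}|\leq \tau\gamma^{k+1}\sigma_r^L \leq \tau\gamma^{k+1}\sigma_1^L$. The previous display also yields $(1-\tau)\sigma_1^L\leq|\lambda^{(k)}_1|\leq(1+\tau)\sigma_1^L$. Using $\gamma^{k+1}\leq\gamma^j$ for $j\leq k+1$, the upper bound becomes
\[
|\lambda^{(k)}_{r+1}|+\gamma^j|\lambda^{(k)}_1|\leq \tau\gamma^j\sigma_1^L + \gamma^j(1+\tau)\sigma_1^L = (1+2\tau)\gamma^j\sigma_1^L,
\]
and since $|\lambda^{(k)}_{r+1}|\geq 0$,
\[
|\lambda^{(k)}_{r+1}|+\gamma^j|\lambda^{(k)}_1|\geq \gamma^j(1-\tau)\sigma_1^L \geq (1-2\tau)\gamma^j\sigma_1^L.
\]

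There is essentially no obstacle here: all the nontrivial work has already been done in Lemma~\ref{lemma:norm_of_Z}, which converts the inductive hypotheses on $\bm{L}_k$ and $\bm{S}_k$ into a clean spectral-norm bound on $\bm{E}_k$. The only subtlety worth verifying is that the trim step preserves symmetry so that $\widetilde{\bm{U}}_k=\widetilde{\bm{V}}_k$ and the symmetric form of Weyl's inequality stated in Lemma~\ref{lemma:Weyls_inequality} is applicable; this is immediate from the construction of Algorithm~\ref{Algo:Trim}, which applies identical row-scaling to $\bm{U}$ and $\bm{V}$ whenever they coincide.
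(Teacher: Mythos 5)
Your proposal is correct and follows essentially the same route as the paper: decompose $\mathcal{P}_{\widetilde{T}_k}(\bm{D}-\bm{S}_k)=\bm{L}+\bm{E}_k$, invoke Lemma~\ref{lemma:norm_of_Z} for $\|\bm{E}_k\|_2\leq\tau\gamma^{k+1}\sigma_r^L$, and apply Weyl's inequality, using $\sigma_{r+1}^L=0$ and $\gamma^{k+1}\leq\gamma^j$ for the second claim. The paper packages the second claim as a single bound on $\bigl||\lambda^{(k)}_{r+1}|+\gamma^j|\lambda^{(k)}_1|-\gamma^j\sigma_1^L\bigr|$ while you split the upper and lower bounds, but this is a cosmetic difference.
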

\begin{proof}
Since $\bm{D}=\bm{L}+\bm{S}$,  we have
\begin{align*}
\mathcal{P}_{\widetilde{T}_k}(\bm{D}-\bm{S}_k)&=\mathcal{P}_{\widetilde{T}_k}(\bm{L}+\bm{S}-\bm{S}_k)  \cr
&=\bm{L}+(\mathcal{P}_{\widetilde{T}_k}-\mathcal{I})\bm{L}+\mathcal{P}_{\widetilde{T}_k}(\bm{S}-\bm{S}_k).
\end{align*}
Hence, by Weyl's inequality and (\ref{eq:norm_of_Z 1}) in Lemma \ref{lemma:norm_of_Z}, we can see that
\begin{align*}
|\sigma^L_i-|\lambda^{(k)}_i|| &\leq \|(\mathcal{P}_{\widetilde{T}_k}-\mathcal{I})\bm{L}+\mathcal{P}_{\widetilde{T}_k}(\bm{S}-\bm{S}_k)\|_2 \cr
                               &\leq \tau\gamma^{k+1} \sigma_r^L  \KW{\alpha \lesssim\frac{1}{\mu r\kappa}} 
\end{align*}
hold for all $i$ and $k\geq0$.
So the first claim is proved since $\gamma<1$.

Notice that $\bm{L}$ is a rank-$r$ matrix, which implies $\sigma_{r+1}^L=0$, so we have
\[
\begin{split}
||\lambda^{(k)}_{r+1}| +\gamma^{j}|\lambda^{(k)}_1| -\gamma^{j}\sigma_1^L|
						&= ||\lambda^{(k)}_{r+1}| -\sigma_{r+1}^L+\gamma^{j}|\lambda^{(k)}_1|  -\gamma^{j}\sigma_1^L| \cr
						&\leq \tau \gamma^{k+1}\sigma_r^L + \tau \gamma^{j+k+1}\sigma_r^L\cr
						&\leq \left(1+\gamma^{k+1}\right) \tau \gamma^j\sigma_r^L \cr
						&\leq 2\tau \gamma^j\sigma_1^L 
\end{split}
\]
for all $j\leq k+1$. This completes the proof of  the second claim.
\end{proof}

\begin{lemma} \label{lemma:Bound_of_L-L_k_2_norm}
Let $\bm{L}\in\mathbb{R}^{n\times n}$ and $\bm{S}\in\mathbb{R}^{n\times n}$ be two symmetric matrices satisfying Assumptions \nameref{assume:Inco} and \nameref{assume:Sparse}, respectively. Let $\widetilde{\bm{L}}_k\in\mathbb{R}^{n\times n}$  be the trim output of $\bm{L}_k$. 
 If
\[
\|\bm{L}-\bm{L}_k\|_2 \leq 8\alpha \mu r \gamma^k\sigma_1^L,\quad
\|\bm{S}-\bm{S}_k\|_\infty \leq \frac{\mu r}{n} \gamma^k\sigma_1^L,\textnormal{\ and }
supp(\bm{S}_k)\subset \Omega,
\]
then we have 
\[
\|\bm{L}-\bm{L}_{k+1}\|_2 \leq 8\alpha \mu r \gamma^{k+1}\sigma_1^L,
\]
provided $1>\gamma\geq512\tau r \kappa^2+\frac{1}{\sqrt{12}}$.
\end{lemma}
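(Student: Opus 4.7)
The plan is to decompose the projected residual matrix and then apply the triangle inequality with two bounds already available from earlier lemmas.

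First I would write
\[
\mathcal{P}_{\widetilde{T}_k}(\bm{D}-\bm{S}_k) = \bm{L} + \bm{Z}_k,\qquad \bm{Z}_k := (\mathcal{P}_{\widetilde{T}_k}-\mathcal{I})\bm{L} + \mathcal{P}_{\widetilde{T}_k}(\bm{S}-\bm{S}_k),
\]
using $\bm{D}=\bm{L}+\bm{S}$. Since $\bm{L}_{k+1}$ is the best rank-$r$ approximation of $\mathcal{P}_{\widetilde{T}_k}(\bm{D}-\bm{S}_k)$ via the truncated SVD $\mathcal{H}_r$, we have the elementary identity
\[
\|\bm{L}_{k+1}-\mathcal{P}_{\widetilde{T}_k}(\bm{D}-\bm{S}_k)\|_2 = \sigma_{r+1}\bigl(\mathcal{P}_{\widetilde{T}_k}(\bm{D}-\bm{S}_k)\bigr) = |\lambda^{(k)}_{r+1}|.
\]

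Next, by the triangle inequality,
\[
\|\bm{L}-\bm{L}_{k+1}\|_2 \leq \|\bm{L} - \mathcal{P}_{\widetilde{T}_k}(\bm{D}-\bm{S}_k)\|_2 + \|\mathcal{P}_{\widetilde{T}_k}(\bm{D}-\bm{S}_k)-\bm{L}_{k+1}\|_2 = \|\bm{Z}_k\|_2 + |\lambda^{(k)}_{r+1}|.
\]
Lemma~\ref{lemma:norm_of_Z} gives $\|\bm{Z}_k\|_2 \leq \tau\gamma^{k+1}\sigma_r^L$ under the hypotheses and the lower bound on $\gamma$. For the second term, since $\bm{L}$ has rank $r$ we have $\sigma_{r+1}^L=0$, so Lemma~\ref{lemma:Bound_eigenvalues} (inspecting its proof, which yields the sharper bound $\tau\gamma^{k+1}\sigma_r^L$) gives $|\lambda^{(k)}_{r+1}| \leq \tau\gamma^{k+1}\sigma_r^L$.

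Combining,
\[
\|\bm{L}-\bm{L}_{k+1}\|_2 \leq 2\tau\gamma^{k+1}\sigma_r^L = 8\alpha\mu r\kappa\gamma^{k+1}\sigma_r^L = 8\alpha\mu r\gamma^{k+1}\sigma_1^L,
\]
using $\tau=4\alpha\mu r\kappa$ and $\kappa\sigma_r^L=\sigma_1^L$, which is the desired inequality. There is no real obstacle here beyond unpacking the definitions: all the technical work has been done in Lemmas~\ref{lemma:norm_of_Z} and~\ref{lemma:Bound_eigenvalues}, and the only subtlety is noticing that the rank-$r$ truncation error equals the $(r+1)$-st singular value, so it can be controlled through Weyl's inequality via the perturbation $\bm{Z}_k$ around the rank-$r$ matrix $\bm{L}$.
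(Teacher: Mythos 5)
Your proof is correct and follows essentially the same route as the paper: the same triangle inequality through $\mathcal{P}_{\widetilde{T}_k}(\bm{D}-\bm{S}_k)$, with the first term controlled by Lemma~\ref{lemma:norm_of_Z}. The only cosmetic difference is in the second term, where the paper bounds the truncation error by $\|\bm{L}-\mathcal{P}_{\widetilde{T}_k}(\bm{D}-\bm{S}_k)\|_2$ directly via the best rank-$r$ approximation property (with $\bm{L}$ as the rank-$r$ competitor), while you identify it as $\sigma_{r+1}$ and invoke Weyl's inequality --- both reduce to the same quantity $\|\bm{Z}_k\|_2$ and give the identical constant.
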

\begin{proof} A direct calculation yields 
\begin{align*}
\|\bm{L}-\bm{L}_{k+1}\|_2 &\leq\|\bm{L}-\mathcal{P}_{\widetilde{T}_k}(\bm{D}-\bm{S}_k)\|_2+\|\mathcal{P}_{\widetilde{T}_k}(\bm{D}-\bm{S}_k)-\bm{L}_{k+1}\|_2\cr
                          &\leq 2\|\bm{L}-\mathcal{P}_{\widetilde{T}_k}(\bm{D}-\bm{S}_k)\|_2\cr
                          &=2\|\bm{L}-\mathcal{P}_{\widetilde{T}_k}(\bm{L}+\bm{S}-\bm{S}_k)\|_2\cr
                          &= 2\|(\mathcal{P}_{\widetilde{T}_k}-\mathcal{I})\bm{L}+\mathcal{P}_{\widetilde{T}_k}(\bm{S}-\bm{S}_k)\|_2 \cr
                          &\leq 2 \cdot \tau\gamma^{k+1}\sigma_r^L \cr
                          &= 8\alpha \mu r\gamma^{k+1}\sigma_1^L,
\end{align*}
where the second inequality follows from the fact $\bm{L}_{k+1}=\mathcal{H}_r(\mathcal{P}_{{\widetilde{T}_k}}(\bm{D}-\bm{S}_k))$ is the best rank-$r$ approximation of $\mathcal{P}_{{\widetilde{T}_k}}(\bm{D}-\bm{S}_k)$, and the last inequality uses (\ref{eq:norm_of_Z 1}) in Lemma \ref{lemma:norm_of_Z}. 
\end{proof}

\begin{lemma}  \label{lemma:Bound_of_L-L_k_infty_norm}
Let $\bm{L}\in\mathbb{R}^{n\times n}$ and $\bm{S}\in\mathbb{R}^{n\times n}$ be two symmetric matrices satisfying Assumptions \nameref{assume:Inco} and \nameref{assume:Sparse}, respectively. Let $\widetilde{\bm{L}}_k\in\mathbb{R}^{n\times n}$  be the trim output of $\bm{L}_k$. If\[
\|\bm{L}-\bm{L}_k\|_2 \leq 8\alpha \mu r \gamma^k\sigma_1^L,\quad
\|\bm{S}-\bm{S}_k\|_\infty \leq \frac{\mu r}{n} \gamma^k\sigma_1^L,\textnormal{\ and }
supp(\bm{S}_k)\subset \Omega,
\]
then we have 
\[
\|\bm{L}-\bm{L}_{k+1}\|_\infty \leq \left(\frac{1}{2}-\tau\right)\frac{\mu r}{n}\gamma^{k+1} \sigma_1^L,
\]
provided $1>\gamma\geq\max\{512\tau r \kappa^2+\frac{1}{\sqrt{12}},\frac{2\upsilon}{(1-12\tau)(1-\tau-\upsilon)^2}\}$ and $\tau<\frac{1}{12}$.
\end{lemma}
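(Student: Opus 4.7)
Set $\bm{Z}:=\mathcal{P}_{\widetilde{T}_k}(\bm{D}-\bm{S}_k)=\bm{L}+\bm{E}$ with $\bm{E}:=(\mathcal{P}_{\widetilde{T}_k}-\mathcal{I})\bm{L}+\mathcal{P}_{\widetilde{T}_k}(\bm{S}-\bm{S}_k)$, and let $\bm{L}_{k+1}=\bm{Q}\bm{\Lambda}\bm{Q}^T$ be the top-$r$ eigendecomposition of $\bm{Z}$ with spectral projector $\bm{P}:=\bm{Q}\bm{Q}^T$; write $\bm{P}_L:=\bm{U}\bm{U}^T$. Since the top-$r$ eigenspace is $\bm{Z}$-invariant, $\bm{L}_{k+1}=\bm{P}\bm{Z}$; combined with $\bm{L}=\bm{P}_L\bm{L}$ and $(\bm{I}-\bm{P}_L)\bm{U}=\bm{0}$, this produces the decomposition
\begin{equation*}
[\bm{L}-\bm{L}_{k+1}]_{ab}=\bm{e}_a^T(\bm{P}_L-\bm{P})\bm{U}\bm{\Sigma}\bm{U}^T\bm{e}_b-\bm{e}_a^T\bm{P}\bm{E}\bm{e}_b.
\end{equation*}
By Lemmas~\ref{lemma:norm_of_Z} and~\ref{lemma:Bound_eigenvalues}, $\|\bm{E}\|_2\leq \tau\gamma^{k+1}\sigma_r^L$, $\max_l\sqrt{n}\|\bm{e}_l^T\bm{E}\|_2\leq \upsilon\gamma^k\sigma_r^L$, $|\lambda_r^{(k)}|\geq(1-\tau)\sigma_r^L$, and the eigengap $|\lambda_r^{(k)}|-|\lambda_{r+1}^{(k)}|\geq(1-2\tau)\sigma_r^L$, all of which feed into the estimates below.

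\textbf{Controlling each piece.} Cauchy--Schwarz gives $|\bm{e}_a^T\bm{P}\bm{E}\bm{e}_b|\leq \|\bm{Q}^T\bm{e}_a\|_2\|\bm{E}\bm{e}_b\|_2$. Splitting $\bm{Q}^T\bm{e}_a=(\bm{Q}^T\bm{U})(\bm{U}^T\bm{e}_a)+\bm{Q}^T\bm{P}_L^\perp\bm{e}_a$, the first summand is bounded by $\sqrt{\mu r/n}$ via the incoherence of $\bm{U}$; the second is controlled using the iteration $\bm{Q}=\bm{Z}\bm{Q}\bm{\Lambda}^{-1}$, the lower bound on $|\lambda_r^{(k)}|$, and Lemma~\ref{lemma:bound_power_vector_norm_with_incoherence}. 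For the first piece of the decomposition, I expand the spectral projector as a Neumann series
\begin{equation*}
\bm{P}-\bm{P}_L=-\bm{U}\bm{\Sigma}^{-1}\bm{U}^T\bm{E}\bm{P}_L^\perp-\bm{P}_L^\perp\bm{E}\bm{U}\bm{\Sigma}^{-1}+O\!\left(\|\bm{E}\|_2^2/(\sigma_r^L)^2\right),
\end{equation*}
so that $\|\bm{e}_a^T(\bm{P}_L-\bm{P})\bm{U}\|_2$ is controlled via $\|\bm{e}_a^T\bm{P}_L^\perp\bm{E}\|_2\leq \|\bm{e}_a^T\bm{E}\|_2+\sqrt{\mu r/n}\,\|\bm{U}^T\bm{E}\|_2$. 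Substituting the row-norm bound on $\bm{E}$ and multiplying by $\sigma_1^L\|\bm{U}^T\bm{e}_b\|_2\leq \sigma_1^L\sqrt{\mu r/n}$ supplies the second $\sqrt{\mu r/n}$ and yields the desired $\mu r/n$ scaling. Summing the two contributions and invoking $\tau<1/12$ together with the hypothesis on $\gamma$ closes the bound at $(1/2-\tau)\frac{\mu r}{n}\gamma^{k+1}\sigma_1^L$.

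\textbf{Main obstacle.} The delicate step is the entrywise estimate of $\bm{e}_a^T(\bm{P}_L-\bm{P})\bm{U}$: a na\"ive Davis--Kahan bound $\|\bm{P}_L-\bm{P}\|_2\lesssim \tau\gamma^{k+1}$ only delivers one factor of $\sqrt{\mu r/n}$ (from $\|\bm{U}^T\bm{e}_b\|_2$) and would leave a spurious $\sqrt{n}$ in the final estimate. The second $\sqrt{\mu r/n}$ must therefore be extracted from the \emph{row-norm} control on $\bm{E}$ rather than from its spectral norm, which is precisely why $\upsilon$ enters the hypothesis on $\gamma$. Pushing the Neumann expansion to the order at which the residual is absorbable then accounts for the $(1-\tau-\upsilon)^{-2}$ denominator, while the factor $(1-12\tau)^{-1}$ comes from the eigengap of Lemma~\ref{lemma:Bound_eigenvalues} after aggregating the lower-order corrections.
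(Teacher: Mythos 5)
Your strategy is essentially the paper's in a different wrapper: both arguments rest on a Neumann-type expansion of the truncated eigendecomposition of $\mathcal{P}_{\widetilde{T}_k}(\bm{D}-\bm{S}_k)=\bm{L}+\bm{E}$, entrywise control of every term by pairing the incoherence of $\bm{U}$ with the row-norm bound $\max_l\sqrt{n}\|\bm{e}_l^T\bm{E}\|_2\leq\upsilon\gamma^k\sigma_r^L$ of Lemma~\ref{lemma:norm_of_Z}, and a geometric series producing the $(1-\tau-\upsilon)^{-2}$ factor; you have correctly diagnosed the crux, namely that a spectral-norm (Davis--Kahan) bound leaves a spurious $\sqrt{n}$ and the second $\sqrt{\mu r/n}$ must be harvested from row-norm control of $\bm{E}$. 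The one substantive organizational difference is your decomposition $\bm{L}-\bm{L}_{k+1}=(\bm{P}_L-\bm{P})\bm{L}-\bm{P}\bm{E}$. The paper instead writes each eigenvector as $\bm{u}_i=\sum_{a\geq 0}(\bm{E}/\lambda_i)^a(\bm{L}/\lambda_i)\bm{u}_i$ and obtains the closed-form double series $\bm{L}_{k+1}=\sum_{a,b\geq 0}\bm{E}^a\bm{L}\bm{U}_{k+1}\bm{\Lambda}^{-(a+b+1)}\bm{U}_{k+1}^T\bm{L}\bm{E}^b$, so every term is sandwiched by $\bm{L}=\bm{U}\bm{U}^T\bm{L}$ on both sides and only the incoherence of the \emph{true} factor $\bm{U}$ is ever used. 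Your route keeps an explicit $\bm{P}\bm{E}$ term and therefore needs a row-norm (incoherence) bound on the \emph{computed} factor $\bm{Q}=\bm{U}_{k+1}$, which you supply via $\bm{Q}=\bm{Z}\bm{Q}\bm{\Lambda}^{-1}$; this works, but it is an extra piece of analysis the paper's bookkeeping avoids entirely.

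Two cautions before this can be called a proof. First, the displayed projector expansion with an $O\bigl(\|\bm{E}\|_2^2/(\sigma_r^L)^2\bigr)$ remainder controlled only in spectral norm cannot stand: multiplied by $\|\bm{U}^T\bm{e}_b\|_2\,\sigma_1^L$ it delivers a single $\sqrt{\mu r/n}$, which is exactly the $\sqrt{n}$ loss you yourself identify for Davis--Kahan. The remainder must be expanded to all orders with each order bounded row-wise via Lemma~\ref{lemma:bound_power_vector_norm_with_incoherence}; that computation is the bulk of the paper's proof (the $\sum_{a+b>0}\bm{Y}_{ab}$ estimate) and is precisely the part your sketch defers with ``pushed to the order at which the residual is absorbable.'' Second, the constant $\tfrac12-\tau$ is not cosmetic: Lemma~\ref{lemma:Bound_of_S-S_k} needs exactly $\|\bm{L}-\bm{L}_{k+1}\|_\infty\leq\zeta_{k+1}$ through the identity $(\tfrac12-\tau)/(1-2\tau)=\tfrac12$, so the final aggregation must be explicit. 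In the paper it is $5\tau$ from the zeroth-order term plus $\upsilon\gamma^{-1}/(1-\tau-\upsilon)^2\leq\tfrac12-6\tau$ from the series; the factor $(1-12\tau)^{-1}$ in the hypothesis on $\gamma$ comes from this budgeting, not from the eigengap of Lemma~\ref{lemma:Bound_eigenvalues} as you suggest. Your orders of magnitude all check out, but the asserted endpoint needs this arithmetic carried through.
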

\begin{proof}
Let $\mathcal{P}_{\widetilde{T}_k}(\bm{D}-\bm{S}_k)=\begin{bmatrix}\bm{U}_{k+1}& \ddot{\bm{U}}_{k+1} \end{bmatrix} \begin{bmatrix}\bm{\Lambda}  & \bm{0}\\ \bm{0} &\ddot{\bm{\Lambda}}\end{bmatrix} \begin{bmatrix}\bm{U}_{k+1}^T \\ \ddot{\bm{U}}_{k+1}^T\end{bmatrix} =\bm{U}_{k+1}\bm{\Lambda} \bm{U}_{k+1}^T+\ddot{\bm{U}}_{k+1}\ddot{\bm{\Lambda}}\ddot{\bm{U}}_{k+1}^T$ be its eigenvalue decomposition. We use the lighter notation  $\lambda_i$ ($1\leq i\leq n$) for the eigenvalues of $\mathcal{P}_{\widetilde{T}_k}(\bm{D}-\bm{S}_k)$ at the $k$-th iteration and assume they are ordered by $|\lambda_1|\geq|\lambda_2|\geq\cdots\geq|\lambda_n|$. Moreover, $\bm{\Lambda} $ has its $r$ largest eigenvalues in magnitude, $\bm{U}_{k+1}$ contains the first $r$ eigenvectors, and $\ddot{\bm{U}}_{k+1}$ has the rest. It follows that  $ \bm{L}_{k+1}=\mathcal{H}_r(\mathcal{P}_{\widetilde{T}_k}(\bm{D}-\bm{S}_k))=\bm{U}_{k+1}\bm{\Lambda} \bm{U}_{k+1}^T $.

Denote $\bm{Z}=\mathcal{P}_{\widetilde{T}_k}(\bm{D}-\bm{S}_k)-\bm{L}=(\mathcal{P}_{\widetilde{T}_k}-\mathcal{I})\bm{L}+\mathcal{P}_{\widetilde{T}_k}(\bm{S}-\bm{S}_k)$. Let $\bm{u}_i$ be the $i^{th}$ eigenvector of $\mathcal{P}_{\widetilde{T}_k}(\bm{D}-\bm{S}_k)$.  Noting that $(\lambda_i\bm{I}-\bm{Z})\bm{u}_i=\bm{L}\bm{u}_i$, we have 
\begin{align*}
\bm{u}_i = \lb\bm{I}-\frac{\bm{Z}}{\lambda_i}\rb^{-1}  \frac{\bm{L}}{\lambda_i}\bm{u}_i=  \left(\bm{I}+\frac{\bm{Z}}{\lambda_i}+\left(\frac{\bm{Z}}{\lambda_i}\right)^2+\cdots\right)\frac{\bm{L}}{\lambda_i}\bm{u}_i
\end{align*}
for all $\bm{u}_i$ with $1\leq i\leq r$, where the expansion is valid because 
$$\frac{\ln\bm{Z}\rn_2}{\lambda_i}\leq\frac{\ln\bm{Z}\rn_2}{\lambda_r}\leq\frac{\tau}{1-\tau}<1 $$
following from (\ref{eq:norm_of_Z 1}) in Lemma \ref{lemma:norm_of_Z} and (\ref{eq:Bound_eigenvalues 1}) in Lemma \ref{lemma:Bound_eigenvalues}.
 This implies
\begin{align*}
\bm{U}_{k+1}\bm{\Lambda} \bm{U}_{k+1}^T &= \sum_{i=1}^r \bm{u}_i\lambda_i\bm{u}_i^T   \cr
										&= \sum_{i=1}^r \left( \sum_{a\geq0} \left(\frac{\bm{Z}}{\lambda_i}\right)^a\frac{\bm{L}}{\lambda_i} \right)\bm{u}_i\lambda_i\bm{u}_i^T \left( \sum_{b\geq0} \left(\frac{\bm{Z}}{\lambda_i}\right)^b\frac{\bm{L}}{\lambda_i} \right)^T \cr
										&= \sum_{a\geq0}  \bm{Z}^a \bm{L} \sum_{i=1}^r\left(\bm{u}_i\frac{1}{\lambda_i^{a+b+1}}\bm{u}_i^T \right) \bm{L} \sum_{b\geq0} \bm{Z}^b  \cr
										&= \sum_{a,b\geq0}  \bm{Z}^a\bm{L}\bm{U}_{k+1}\bm{\Lambda} ^{-(a+b+1)}\bm{U}_{k+1}^T\bm{L}\bm{Z}^b.
\end{align*}
Thus, we have
\begin{align*}
\|\bm{L}_{k+1}-\bm{L}\|_\infty &= \|\bm{U}_{k+1}\bm{\Lambda} \bm{U}_{k+1}^T -\bm{L}\|_\infty \cr
		             &= \|\bm{L}\bm{U}_{k+1}\bm{\Lambda} ^{-1}\bm{U}_{k+1}^T\bm{L}-\bm{L}  + \sum_{a+b>0} \bm{Z}^a\bm{L}\bm{U}_{k+1}\bm{\Lambda} ^{-(a+b+1)}\bm{U}_{k+1}^T\bm{L}\bm{Z}^b \|_\infty  \cr
		             &\leq \|\bm{L}\bm{U}_{k+1}\bm{\Lambda} ^{-1}\bm{U}_{k+1}^T\bm{L}-\bm{L}\|_\infty  + \sum_{a+b>0} \|\bm{Z}^a\bm{L}\bm{U}_{k+1}\bm{\Lambda} ^{-(a+b+1)}\bm{U}_{k+1}^T\bm{L}\bm{Z}^b \|_\infty \cr
		             &:= \bm{Y}_0 + \sum_{a+b>0} \bm{Y}_{ab}.
\end{align*}

We will handle $\bm{Y}_0$ first. Recall that $\bm{L}=\bm{U}\bm{\Sigma} \bm{V}^T$ is the SVD of the symmetric matrix $\bm{L}$ which obeys $\mu$-incoherence, i.e., $\bm{U}\bm{U}^T=\bm{V}\bm{V}^T$ and $\|\bm{e}_i^T\bm{U}\bm{U}^T\|_2\leq\sqrt{\frac{\mu r}{n}}$ for all $i$. So, for each $(i,j)$ entry of $\bm{Y}_0$, one has 
\begin{align*}
\bm{Y}_0 &= \max_{ij} |\bm{e}_i^T(\bm{L}\bm{U}_{k+1}\bm{\Lambda} ^{-1}\bm{U}_{k+1}^T\bm{L}-\bm{L})\bm{e}_j| \cr
												&=\max_{ij} |\bm{e}_i^T\bm{U}\bm{U}^T(\bm{L}\bm{U}_{k+1}\bm{\Lambda} ^{-1}\bm{U}_{k+1}^T\bm{L}-\bm{L})\bm{U}\bm{U}^T\bm{e}_j| \cr
												&\leq\max_{ij} \|\bm{e}_i^T\bm{U}\bm{U}^T\|_2~\|\bm{L}\bm{U}_{k+1}\bm{\Lambda} ^{-1}\bm{U}_{k+1}^T\bm{L}-\bm{L}\|_2~ \|\bm{U}\bm{U}^T\bm{e}_j\|_2 \cr
												&\leq \frac{\mu r}{n}\|\bm{L}\bm{U}_{k+1}\bm{\Lambda} ^{-1}\bm{U}_{k+1}^T\bm{L}-\bm{L}\|_2,
\end{align*}
where the second equation follows from the fact $\bm{U}\bm{U}^T\bm{L}=\bm{L}\bm{U}\bm{U}^T=\bm{L}$. Since $\bm{L}=\bm{U}_{k+1}\bm{\Lambda} \bm{U}_{k+1}^T+\ddot{\bm{U}}_{k+1}\ddot{\bm{\Lambda}}\ddot{\bm{U}}_{k+1}^T -\bm{Z}$, there hold
\begin{align*}
&~\|\bm{L}\bm{U}_{k+1}\bm{\Lambda} ^{-1}\bm{U}_{k+1}^T\bm{L}-\bm{L}\|_2 \cr
=&~ \|(\bm{U}_{k+1}\bm{\Lambda} \bm{U}_{k+1}^T+\ddot{\bm{U}}_{k+1}\ddot{\bm{\Lambda}}\ddot{\bm{U}}_{k+1}^T -\bm{Z})\bm{U}_{k+1}\bm{\Lambda} ^{-1}\bm{U}_{k+1}^T(\bm{U}_{k+1}\bm{\Lambda} \bm{U}_{k+1}^T+\ddot{\bm{U}}_{k+1}\ddot{\bm{\Lambda}}\ddot{\bm{U}}_{k+1}^T -\bm{Z})-\bm{L}\|_2  \cr
=&~\|\bm{U}_{k+1}\bm{\Lambda} \bm{U}_{k+1}^T-\bm{L}-\bm{U}_{k+1}\bm{U}_{k+1}^T\bm{Z}-\bm{Z}\bm{U}_{k+1}\bm{U}_{k+1}^T-\bm{Z}\bm{U}_{k+1}\bm{\Lambda} ^{-1}\bm{U}_{k+1}^T\bm{Z}\|_2  \cr
\leq&~ \|\bm{Z}-\ddot{\bm{U}}_{k+1}\ddot{\bm{\Lambda}}\ddot{\bm{U}}_{k+1}^T\|_2 +2\|\bm{Z}\|_2+\frac{\|\bm{Z}\|_2^2}{|\lambda_r|}  \cr
\leq&~ \|\ddot{\bm{U}}_{k+1}\ddot{\bm{\Lambda}}\ddot{\bm{U}}_{k+1}^T\|_2 +4\|\bm{Z}\|_2  \cr
\leq&~  |\lambda_{r+1}|+4\|\bm{Z}\|_2 \cr
\leq&~  5\|\bm{Z}\|_2  \cr
\leq&~  5\tau\gamma^{k+1}\sigma_1^L, \KW{\alpha \lesssim\frac{1}{\mu r}}
\end{align*}
where the fifth inequality uses (\ref{eq:norm_of_Z 1}) in Lemma \ref{lemma:norm_of_Z}, 
and notice that $\frac{\|\bm{Z}\|_2}{|\lambda_r|}\leq\frac{\tau}{1-\tau}<1$ since $\tau<\frac{1}{2}$ and  $|\lambda_{r+1}|\leq\|\bm{Z}\|_2$ since $\bm{L}$ is a rank-$r$ matrix. 
Thus, we have
\begin{equation}  \label{eq:Y0 bound}
\bm{Y}_0\leq \frac{\mu r}{n} 5\tau\gamma^{k+1}\sigma_1^L.
\end{equation}

Next, we derive an upper bound for the rest part. Note that
\begin{align*}
\bm{Y}_{ab}&=\max_{ij} |\bm{e}_i^T\bm{Z}^a\bm{L}\bm{U}_{k+1}\bm{\Lambda} ^{-(a+b+1)}\bm{U}_{k+1}^T\bm{L}\bm{Z}^b\bm{e}_j|  \cr
      &=\max_{ij} |(\bm{e}_i^T\bm{Z}^a\bm{U}\bm{U}^T)\bm{L}\bm{U}_{k+1}\bm{\Lambda} ^{-(a+b+1)}\bm{U}_{k+1}^T\bm{L}(\bm{U}\bm{U}^T\bm{Z}^b\bm{e}_j)|  \cr
      &\leq \max_{ij} \|\bm{e}_i^T\bm{Z}^a\bm{U}\|_2~\|\bm{L}\bm{U}_{k+1}\bm{\Lambda} ^{-(a+b+1)}\bm{U}_{k+1}^T\bm{L}\|_2~\|\bm{U}^T\bm{Z}^b\bm{e}_j\|_2  \cr
      &\leq \max_l\frac{\mu r}{n}( \sqrt{n}\|\bm{e}_l^T \bm{Z}\|_2)^{a+b} \|\bm{L}\bm{U}_{k+1}\bm{\Lambda} ^{-(a+b+1)}\bm{U}_{k+1}^T\bm{L}\|_2,
\end{align*}
where the last inequality uses Lemma \ref{lemma:bound_power_vector_norm_with_incoherence}.
Furthermore, by using $\bm{L}=\bm{U}_{k+1}\bm{\Lambda} \bm{U}_{k+1}^T+\ddot{\bm{U}}_{k+1}\ddot{\bm{\Lambda}}\ddot{\bm{U}}_{k+1}^T -\bm{Z}$ again, we get
\begin{align*}
&~\|\bm{L}\bm{U}_{k+1}\bm{\Lambda} ^{-(a+b+1)}\bm{U}_{k+1}^T\bm{L}\|_2 \cr
=&~ \|(\bm{U}_{k+1}\bm{\Lambda} \bm{U}_{k+1}^T+\ddot{\bm{U}}_{k+1}\ddot{\bm{\Lambda}}\ddot{\bm{U}}_{k+1}^T -\bm{Z})\bm{U}_{k+1}\bm{\Lambda} ^{-(a+b+1)}\bm{U}_{k+1}^T(\bm{U}_{k+1}\bm{\Lambda} \bm{U}_{k+1}^T+\ddot{\bm{U}}_{k+1}\ddot{\bm{\Lambda}}\ddot{\bm{U}}_{k+1}^T -\bm{Z})\|_2        \cr
=&~ \|\bm{U}_{k+1}\bm{\Lambda} ^{-(a+b-1)}\bm{U}_{k+1}^T-\bm{Z}\bm{U}_{k+1}\bm{\Lambda} ^{-(a+b)}\bm{U}_{k+1}^T-\bm{U}_{k+1}\bm{\Lambda} ^{-(a+b)}\bm{U}_{k+1}^T\bm{Z}+\bm{Z}\bm{U}_{k+1}\bm{\Lambda} ^{-(a+b+1)}\bm{U}_{k+1}^T\bm{Z}\|_2 \cr
\leq&~ |\lambda_r|^{-(a+b-1)} + |\lambda_r|^{-(a+b)}\|\bm{Z}\|_2+ |\lambda_r|^{-(a+b)}\|\bm{Z}\|_2+ |\lambda_r|^{-(a+b+1)}\|\bm{Z}\|_2^2 \cr
=&~ |\lambda_r|^{-(a+b-1)}\left( 1+ \frac{2\|\bm{Z}\|_2}{|\lambda_r|}+\left(\frac{\|\bm{Z}\|_2}{|\lambda_r|}\right)^2 \right)  \cr
=&~ |\lambda_r|^{-(a+b-1)}\left( 1+ \frac{\|\bm{Z}\|_2}{|\lambda_r|} \right)^2  \cr
\leq&~ |\lambda_r|^{-(a+b-1)}\left(  \frac{1}{1-\tau} \right)^2  \cr
\leq&~ \left( \frac{1}{1-\tau} \right)^2 \left((1-\tau)\sigma_r^L\right)^{-(a+b-1)},
\end{align*}
where the second inequality follows from $\frac{\|\bm{Z}\|_2}{|\lambda_r|}\leq\frac{\tau}{1-\tau}$, and the last inequality follows from Lemma \ref{lemma:Bound_eigenvalues}. Together with (\ref{eq:norm_of_Z 2}) in Lemma \ref{lemma:norm_of_Z}, we have
\begin{align*}
\sum_{a+b>0}\bm{Y}_{ab}
	&\leq  \sum_{a+b>0}\frac{\mu r}{n}\left( \frac{1}{1-\tau} \right)^2 \upsilon\gamma^{k}\sigma_r^L \left( \frac{\upsilon\gamma^{k}\sigma_r^L}{(1-\tau)\sigma_r^L}\right)^{a+b-1}\cr
    &\leq \frac{\mu r}{n}\left( \frac{1}{1-\tau} \right)^2 \upsilon \gamma^{k}\sigma_1^L  \sum_{a+b>0} \left(  \frac{\upsilon}{1-\tau}\right)^{a+b-1}\cr
    &\leq \frac{\mu r}{n}\left( \frac{1}{1-\tau} \right)^2 \upsilon \gamma^{k}\sigma_1^L  \left( \frac{1}{1-\frac{\upsilon}{1-\tau}}\right)^2\cr
    &\leq \frac{\mu r}{n}\left( \frac{1}{1-\tau-\upsilon} \right)^2 \upsilon \gamma^{k}\sigma_1^L. \numberthis\label{eq:sum Y_ab bound}
\end{align*}
Finally, combining (\ref{eq:Y0 bound}) and (\ref{eq:sum Y_ab bound}) together gives
\begin{align*}
\|\bm{L}_{k+1}-\bm{L}\|_\infty &= \bm{Y}_0 + \sum_{a+b>0} \bm{Y}_{ab}  \cr
                     &\leq \frac{\mu r}{n} 5\tau\gamma^{k+1}\sigma_1^L + \frac{\mu r}{n}\left( \frac{1}{1-\tau-\upsilon} \right)^2 \upsilon \gamma^{k}\sigma_1^L \cr
                     &\leq \left(\frac{1}{2}-\tau\right)\frac{\mu r}{n} \gamma^{k+1}\sigma_1^L,
\end{align*}
where the last inequality follows from $\gamma\geq\frac{2\upsilon}{(1-12\tau)(1-\tau-\upsilon)^2}$.
\end{proof}

\begin{lemma}   \label{lemma:Bound_of_S-S_k}
Let $\bm{L}\in\mathbb{R}^{n\times n}$ and $\bm{S}\in\mathbb{R}^{n\times n}$ be two symmetric matrices satisfying Assumptions \nameref{assume:Inco} and \nameref{assume:Sparse}, respectively. Let $\widetilde{\bm{L}}_k\in\mathbb{R}^{n\times n}$  be the trim output of $\bm{L}_k$. Recall that  $\beta=\frac{\mu r}{2n}$. If 
\[
\|\bm{L}-\bm{L}_k\|_2 \leq 8\alpha \mu r \gamma^k\sigma_1^L,\quad
\|\bm{S}-\bm{S}_k\|_\infty \leq \frac{\mu r}{n} \gamma^k\sigma_1^L,\textnormal{\ and }
supp(\bm{S}_k)\subset \Omega
\]
 then we have
\[
supp(\bm{S}_{k+1})\subset \Omega \quad and \quad \|\bm{S}-\bm{S}_{k+1}\|_\infty \leq  \frac{\mu r}{n} \gamma^{k+1} \sigma_1^L,
\]
provided $1>\gamma\geq\max\{512\tau r \kappa^2+\frac{1}{\sqrt{12}},\frac{2\upsilon}{(1-12\tau)(1-\tau-\upsilon)^2}\}$ and $\tau<\frac{1}{12}$.
\end{lemma}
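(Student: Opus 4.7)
The plan is to show both conclusions by case analysis on whether $(i,j)\in\Omega$ and, for $(i,j)\in\Omega$, on whether the hard-threshold zeros out the entry. The two main ingredients are $\|\bm{L}-\bm{L}_{k+1}\|_\infty \le (\tfrac12-\tau)\tfrac{\mu r}{n}\gamma^{k+1}\sigma_1^L$ from Lemma~\ref{lemma:Bound_of_L-L_k_infty_norm} and the two-sided bound $(1-2\tau)\gamma^{k+1}\sigma_1^L \le |\lambda^{(k)}_{r+1}|+\gamma^{k+1}|\lambda^{(k)}_1| \le (1+2\tau)\gamma^{k+1}\sigma_1^L$ from equation~\eqref{eq:Bound_eigenvalues 2} of Lemma~\ref{lemma:Bound_eigenvalues}, which translates (via $\beta=\tfrac{\mu r}{2n}$) into
\[
(1-2\tau)\tfrac{\mu r}{2n}\gamma^{k+1}\sigma_1^L \;\le\; \zeta_{k+1} \;\le\; (1+2\tau)\tfrac{\mu r}{2n}\gamma^{k+1}\sigma_1^L .
\]

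\textbf{Support containment.} Since $\bm{D}-\bm{L}_{k+1}=\bm{S}+(\bm{L}-\bm{L}_{k+1})$, for any $(i,j)\notin\Omega$ I have $[\bm{D}-\bm{L}_{k+1}]_{ij}=[\bm{L}-\bm{L}_{k+1}]_{ij}$, whose magnitude is at most $(\tfrac12-\tau)\tfrac{\mu r}{n}\gamma^{k+1}\sigma_1^L=(1-2\tau)\tfrac{\mu r}{2n}\gamma^{k+1}\sigma_1^L\le\zeta_{k+1}$. The hard-thresholding operator therefore zeros every such entry, giving $\mathrm{supp}(\bm{S}_{k+1})\subset\Omega$.

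\textbf{Entrywise error.} Outside $\Omega$ both $[\bm{S}]_{ij}$ and $[\bm{S}_{k+1}]_{ij}$ vanish. Inside $\Omega$, if $|[\bm{D}-\bm{L}_{k+1}]_{ij}|>\zeta_{k+1}$ then $[\bm{S}_{k+1}]_{ij}=[\bm{D}-\bm{L}_{k+1}]_{ij}$, so $[\bm{S}-\bm{S}_{k+1}]_{ij}=[\bm{L}_{k+1}-\bm{L}]_{ij}$ and its magnitude is controlled by $\|\bm{L}-\bm{L}_{k+1}\|_\infty\le(\tfrac12-\tau)\tfrac{\mu r}{n}\gamma^{k+1}\sigma_1^L<\tfrac{\mu r}{n}\gamma^{k+1}\sigma_1^L$. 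Otherwise $[\bm{S}_{k+1}]_{ij}=0$, so $[\bm{S}-\bm{S}_{k+1}]_{ij}=[\bm{S}]_{ij}=[\bm{D}-\bm{L}]_{ij}$, and writing this as $[\bm{D}-\bm{L}_{k+1}]_{ij}+[\bm{L}_{k+1}-\bm{L}]_{ij}$ I get
\[
|[\bm{S}-\bm{S}_{k+1}]_{ij}| \;\le\; \zeta_{k+1}+\|\bm{L}-\bm{L}_{k+1}\|_\infty \;\le\; (1+2\tau)\tfrac{\mu r}{2n}\gamma^{k+1}\sigma_1^L+(1-2\tau)\tfrac{\mu r}{2n}\gamma^{k+1}\sigma_1^L \;=\; \tfrac{\mu r}{n}\gamma^{k+1}\sigma_1^L .
\]

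No step is really subtle: the design of $\zeta_{k+1}$ is tailored so that its lower bound just absorbs $\|\bm{L}-\bm{L}_{k+1}\|_\infty$ (ensuring off-support entries get killed) while its upper bound together with $\|\bm{L}-\bm{L}_{k+1}\|_\infty$ adds up to exactly $\tfrac{\mu r}{n}\gamma^{k+1}\sigma_1^L$ (ensuring the entrywise error on $\Omega$ inherits the geometric decay). The only delicate point is confirming that the two sharp constants $(1-2\tau)$ and $(\tfrac12-\tau)$ from the prerequisite lemmas fit together with $\beta=\tfrac{\mu r}{2n}$ so that both inequalities are met simultaneously; once that arithmetic is checked, the case analysis above closes the argument.
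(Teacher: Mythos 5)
Your proposal is correct and follows essentially the same route as the paper's proof: both establish $\mathrm{supp}(\bm{S}_{k+1})\subset\Omega$ by showing $\|\bm{L}-\bm{L}_{k+1}\|_\infty=(1-2\tau)\tfrac{\mu r}{2n}\gamma^{k+1}\sigma_1^L$-type bounds are absorbed by the lower bound on $\zeta_{k+1}$ from Lemma~\ref{lemma:Bound_eigenvalues}, and then run the identical three-case analysis (off-support, kept entries, thresholded entries) with the upper bound $\zeta_{k+1}\le(1+2\tau)\tfrac{\mu r}{2n}\gamma^{k+1}\sigma_1^L$ closing the last case. The arithmetic of the constants matches the paper exactly.
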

\begin{proof}
We first notice that
\[
[\bm{S}_{k+1}]_{ij}=[\mathcal{T}_{\zeta_{k+1}} (\bm{D}-\bm{L}_{k+1})]_{ij}=[\mathcal{T}_{\zeta_{k+1}} (\bm{S}+\bm{L}-\bm{L}_{k+1})]_{ij}=
\begin{cases}
\mathcal{T}_{\zeta_{k+1}} ([\bm{S}+\bm{L}-\bm{L}_{k+1}]_{ij}) & (i,j)\in\Omega \cr
\mathcal{T}_{\zeta_{k+1}} ([\bm{L}-\bm{L}_{k+1}]_{ij})        & (i,j)\in\Omega^c  \cr
\end{cases}.
\]
Let $|\lambda_{i}^{(k)}|$ denote $i^{th}$ singular value of $\mathcal{P}_{\widetilde{T}_k}(\bm{D}-\bm{S}_k)$. By Lemmas \ref{lemma:Bound_eigenvalues} and \ref{lemma:Bound_of_L-L_k_infty_norm}, we have
\[
\begin{split}
|[\bm{L}-\bm{L}_{k+1}]_{ij}|&\leq\|\bm{L}-\bm{L}_{k+1}\|_\infty\leq  \left(\frac{1}{2}-\tau\right)\frac{\mu r}{n} \gamma^{k+1} \sigma_1^L \\
&\leq \left(\frac{1}{2}-\tau\right)\frac{\mu r}{n} \frac{1}{1-2\tau}\left(|\lambda_{r+1}^{(k)}| +\gamma^{k+1}|\lambda_1^{(k)}|\right) \\
&= \zeta_{k+1}
\end{split}
\]
for any entry of $\bm{L}-\bm{L}_{k+1}$.
Hence, $[\bm{S}_{k+1}]_{ij}=0$ for all $(i,j)\in \Omega^c$, i.e., $supp(\bm{S}_{k+1})\subset \Omega$. 

Denote $\Omega_{k+1}:=supp(\bm{S}_{k+1})=\{(i,j)~|~[(\bm{D}-\bm{L}_{k+1})]_{ij}>\zeta_k\}$. Then, for any entry of $\bm{S}-\bm{S}_{k+1}$, there hold
\[
[\bm{S}-\bm{S}_{k+1}]_{ij} =
\begin{cases}
0 &\cr
[\bm{L}_{k+1}-\bm{L}]_{ij} &   \cr
[\bm{S}]_{ij}         & \cr
\end{cases} \leq
\begin{cases}
0  &\cr
\|\bm{L}-\bm{L}_{k+1}\|_\infty       &  \cr
\|\bm{L}-\bm{L}_{k+1}\|_\infty +\zeta_{k+1} & \cr
\end{cases} \leq
\begin{cases}
0 & (i,j)\in\Omega^c  \cr
\left(\frac{1}{2}-\tau\right)\frac{\mu r}{n} \gamma^{k+1} \sigma_1^L       & (i,j)\in \Omega_{k+1}   \cr
\frac{\mu r}{n} \gamma^{k+1} \sigma_1^L & (i,j)\in \Omega\backslash\Omega_{k+1}. \cr
\end{cases}
\]
Here the last step follows from Lemma \ref{lemma:Bound_eigenvalues} which implies $\zeta_{k+1}=\frac{\mu r}{2n} (|\lambda_{r+1}^{(k)}| +\gamma^{k+1}|\lambda_1^{(k)}|)\leq \left(\frac{1}{2}+\tau\right)\frac{\mu r}{n}\gamma^{k+1}\sigma_1^L$.
Therefore, $\|\bm{S}-\bm{S}_{k+1}\|_\infty\leq \frac{\mu r}{n} \gamma^{k+1} \sigma_1^L$. 
\end{proof}

Now, we have all the ingredients for the proof of Theorem \ref{thm:local convergence}.
\begin{proof} [Proof of Theorem \ref{thm:local convergence}] This theorem will be proved by  mathematical induction.\\
\textbf{Base Case:} When $k=0$, the base case is satisfied by the assumption on the intialization.\\
\textbf{Induction Step:} Assume we have
\[
\|\bm{L}-\bm{L}_k\|_2 \leq 8\alpha \mu r \gamma^k\sigma_1^L,\quad
\|\bm{S}-\bm{S}_k\|_\infty \leq \frac{\mu r}{n} \gamma^k\sigma_1^L,\quad \textnormal{and} \quad
supp(\bm{S}_k)\subset \Omega
\]
at the $k^{th}$ iteration. At the $(k+1)^{th}$ iteration. If follows directly from Lemmas~\ref{lemma:Bound_of_L-L_k_2_norm} and \ref{lemma:Bound_of_S-S_k}  that
\[
\|\bm{L}-\bm{L}_{k+1}\|_2 \leq 8\alpha \mu r \gamma^{k+1}\sigma_1^L,\quad \|\bm{S}-\bm{S}_{k+1}\|_\infty \leq \frac{\mu r}{n} \gamma^{k+1}\sigma_1^L\quad \textnormal{and} \quad
supp(\bm{S}_{k+1})\subset \Omega,
\]
which completes the proof.

Additionally, notice that we overall require $1>\gamma\geq\max\{512\tau r \kappa^2+\frac{1}{\sqrt{12}},\frac{2\upsilon}{(1-12\tau)(1-\tau-\upsilon)^2}\}$. By the definition of $\tau$ and $\upsilon$, one can easily see that the lower bound approaches $\frac{1}{\sqrt{12}}$ when the constant hidden in \eqref{eq:condition_on_p} is sufficiently large. Therefore,  the theorem can be proved for any $\gamma\in\left(\frac{1}{\sqrt{12}},1\right)$.
\end{proof}


\subsection{Proof of Theorem \ref{thm:initialization bound}} \label{subsec:proof of initialization}
We first present a lemma which is a variant of Lemma~\ref{lemma:bound_power_vector_norm_with_incoherence} and also appears in \citep[Lemma 5]{netrapalli2014non}. The lemma can be similarly proved by mathematical induction. 
\begin{lemma} \label{init:lemma:bound_power_vector_norm_with_incoherence}
Let $\bm{S}\in\mathbb{R}^{n\times n}$ be a sparse matrix satisfying Assumption \nameref{assume:Sparse}. Let $\bm{U}\in\mathbb{R}^{n\times r}$ be an orthogonal matrix with $\mu$-incoherence, i.e., $\|\bm{e}_i^T\bm{U}\|_2\leq\sqrt{\frac{\mu r}{n}}$ for all $i$. Then
\[
\|\bm{e}_i^T\bm{S}^a\bm{U}\|_2\leq \sqrt{\frac{\mu r}{n}}(\alpha n\|\bm{S}\|_\infty)^a
\]
for all $i$ and $a\geq 0$.
\end{lemma}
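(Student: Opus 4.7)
The plan is to mirror the induction proof of Lemma~\ref{lemma:bound_power_vector_norm_with_incoherence}, adjusting the per-row bound of $\bm{S}$ to exploit sparsity rather than the generic $\sqrt{n}\|\bm{e}_l^T\bm{Z}\|_2$ scale. The base case $a=0$ is immediate from the $\mu$-incoherence of $\bm{U}$, so the whole argument comes down to the inductive step.

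For the inductive step, I would expand $\|\bm{e}_i^T\bm{S}^{a+1}\bm{U}\|_2^2$ exactly as in Lemma~\ref{lemma:bound_power_vector_norm_with_incoherence}, writing
\begin{align*}
\|\bm{e}_i^T\bm{S}^{a+1}\bm{U}\|_2^2
&= \sum_{k_1,k_2}[\bm{S}]_{ik_1}[\bm{S}]_{ik_2}\,\langle \bm{e}_{k_1}^T\bm{S}^a\bm{U},\bm{e}_{k_2}^T\bm{S}^a\bm{U}\rangle\\
&\leq \sum_{k_1,k_2}|[\bm{S}]_{ik_1}||[\bm{S}]_{ik_2}|\,\|\bm{e}_{k_1}^T\bm{S}^a\bm{U}\|_2\,\|\bm{e}_{k_2}^T\bm{S}^a\bm{U}\|_2,
\end{align*}
then apply the induction hypothesis uniformly in $k_1,k_2$ to pull out the factor $\frac{\mu r}{n}(\alpha n\|\bm{S}\|_\infty)^{2a}$, leaving $\bigl(\sum_{k}|[\bm{S}]_{ik}|\bigr)^{2}$.

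The only place this diverges from Lemma~\ref{lemma:bound_power_vector_norm_with_incoherence} is in how one bounds $\sum_k|[\bm{S}]_{ik}|$: there the entries of $\bm{Z}$ are unrestricted and one is forced into $\sqrt{n}\|\bm{e}_l^T\bm{Z}\|_2$ via Cauchy--Schwarz, but here Assumption~\nameref{assume:Sparse} says row $i$ of $\bm{S}$ has at most $\alpha n$ nonzeros, each of magnitude at most $\|\bm{S}\|_\infty$, so $\|\bm{e}_i^T\bm{S}\|_1\leq \alpha n\|\bm{S}\|_\infty$. Plugging this in yields $\|\bm{e}_i^T\bm{S}^{a+1}\bm{U}\|_2^2\leq \frac{\mu r}{n}(\alpha n\|\bm{S}\|_\infty)^{2(a+1)}$, and taking square roots closes the induction.

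There is no real obstacle here; the lemma is essentially a sparsity-aware rerun of Lemma~\ref{lemma:bound_power_vector_norm_with_incoherence}, and the only substantive step is noticing that the $\ell_1$-norm of a row of $\bm{S}$ is controlled by its sparsity times $\|\bm{S}\|_\infty$. I would keep the write-up short, pointing the reader to the earlier induction and highlighting only the replacement of the row-wise Cauchy--Schwarz bound by the $\ell_0$-based $\ell_1$ bound.
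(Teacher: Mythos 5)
Your proof is correct and follows exactly the route the paper intends: the paper omits the details, remarking only that the lemma ``can be similarly proved by mathematical induction'' as Lemma~\ref{lemma:bound_power_vector_norm_with_incoherence} (citing Lemma~5 of \citep{netrapalli2014non}), and your write-up fills in precisely that induction, with the row-wise $\ell_1$ bound $\|\bm{e}_i^T\bm{S}\|_1\leq\alpha n\|\bm{S}\|_\infty$ replacing the Cauchy--Schwarz step. Nothing further is needed.
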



Though the proposed initialization scheme (i.e., Algorithms~\ref{Algo:Init1}) basically consists of two steps of AltProj \citep{netrapalli2014non}, we provide an independent proof for Theorem~\ref{thm:initialization bound} here because we bound the approximation errors of the low rank matrices using the spectral norm rather than the infinity norm. The proof of Theorem~\ref{thm:initialization bound} follows a similar structure to that of Theorem~\ref{thm:local convergence}, but without the projection onto a low dimensional tangent space. Instead of 
first presenting several auxiliary lemmas, we give a single proof by putting all the elements together. 

\begin{proof} [Proof of Theorem \ref{thm:initialization bound}] The proof can be partitioned into several parts.

~\\
(i) Note that $\bm{L}_{-1}=0$ and
\[
\|\bm{L}-\bm{L}_{-1}\|_\infty=\|\bm{L}\|_\infty = \max_{ij} \lab\bm{e}_i^T\bm{U}\bm{\Sigma} \bm{U}^T\bm{e}_j\rab \leq \max_{ij} \|\bm{e}_i^T\bm{U}\|_2\|\bm{\Sigma}\|_2\|\bm{U}^T\bm{e}_j\|_2\leq\frac{\mu r}{n}\sigma_1^L,
\]
where the last inequality follows from Assumption \nameref{assume:Inco}, i.e., $\bm{L}$ is $\mu$-incoherent.
Thus, with the choice of $\beta_{init}\geq\frac{\mu r\sigma_1^L}{n\sigma_1^D}$, we have
\begin{equation} \label{eq:Init:step 1 result}
\|\bm{L}-\bm{L}_{-1}\|_\infty\leq \beta_{init}\sigma_1^D = \zeta_{-1}.
\end{equation}
Since
\[
[\bm{S}_{-1}]_{ij}=[\mathcal{T}_{\zeta_{-1}} (\bm{S}+\bm{L}-\bm{L}_{-1})]_{ij} =
\begin{cases}
\mathcal{T}_{\zeta_{-1}} ([\bm{S}+\bm{L}-\bm{L}_{-1}]_{ij}) & (i,j)\in\Omega \cr
\mathcal{T}_{\zeta_{-1}} ([\bm{L}-\bm{L}_{-1}]_{ij})        & (i,j)\in\Omega^c,  \cr
\end{cases}
\]
it follows that  $[\bm{S}_{-1}]_{ij}=0$ for all $(i,j)\in\Omega^c$, i.e. $\Omega_{-1}:=supp(\bm{S}_{-1})\subset \Omega$. Moreover, for any entries of $\bm{S}-\bm{S}_{-1}$, we have
\[
[\bm{S}-\bm{S}_{-1}]_{ij} =
\begin{cases}
0 &  \cr
[\bm{L}_{-1}-\bm{L}]_{ij} &   \cr
[\bm{S}]_{ij}         & \cr
\end{cases} \leq
\begin{cases}
0 &  \cr
\|\bm{L}-\bm{L}_{-1}\|_\infty       &  \cr
\|\bm{L}-\bm{L}_{-1}\|_\infty +\zeta_{-1} & \cr
\end{cases} \leq
\begin{cases}
0 &  (i,j)\in \Omega^c   \cr
\frac{\mu r}{n} \sigma_1^L       & (i,j)\in \Omega_{-1}   \cr
\frac{4\mu r}{n}  \sigma_1^L & (i,j)\in \Omega\backslash\Omega_{-1} \cr
\end{cases},
\]
where the last inequality follows from $\beta_{init}\leq\frac{3\mu r\sigma_1^L}{n\sigma_1^D}$, so that $\zeta_{-1}\leq \frac{3\mu r}{n}\sigma_1^L$. Therefore, if follows that
\begin{equation}\label{eq:SminusS}
supp(\bm{S}_{-1})\subset\Omega\quad \textnormal{and}\quad \|\bm{S}-\bm{S}_{-1}\|_\infty\leq\frac{4\mu r}{n}\sigma_1^L.
\end{equation}
By Lemma \ref{lemma:bound of sparse matrix}, we also have
\[
\|\bm{S}-\bm{S}_{-1}\|_2\leq \alpha n\|\bm{S}-\bm{S}_{-1}\|_\infty\leq4\alpha \mu r\sigma_1^L.
\]

~\\
(ii) To bound the approximation error of $\bm{L}_0$  to $\bm{L}$ in terms of the spectral norm, note that
\[
\begin{split}
\|\bm{L}-\bm{L}_0\|_2&\leq\|\bm{L}-(\bm{D}-\bm{S}_{-1})\|_2 + \|(\bm{D}-\bm{S}_{-1})-\bm{L}_0\|_2 \cr
           &\leq 2\|\bm{L}-(\bm{D}-\bm{S}_{-1})\|_2\cr
           &= 2\|\bm{L}-(\bm{L}+\bm{S}-\bm{S}_{-1})\|_2\cr
           &= 2\|\bm{S}-\bm{S}_{-1}\|_2,
\end{split}
\]
where the second inequality follows from the fact $\bm{L}_{0}=\mathcal{H}_r(\bm{D}-\bm{S}_{-1})$ is the best rank-$r$ approximation of $\bm{D}-\bm{S}_{-1}$. It follows immediately that
\begin{equation}\label{eq:norm:L-L0}
\|\bm{L}-\bm{L}_0\|_2\leq 8\alpha \mu r\sigma_1^L  .
\end{equation}

~\\
(iii) Since $\bm{D}=\bm{L}+\bm{S}$, we have $\bm{D}-\bm{S}_{-1}=\bm{L}+\bm{S}-\bm{S}_{-1}$. Let $\lambda_i$ denotes the $i^{th}$ eigenvalue of $\bm{D}-\bm{S}_{-1}$ ordered by $|\lambda_1|\geq|\lambda_2|\geq\cdots\geq|\lambda_n|$. The application of  Weyl's inequality together with the bound of $\alpha $ in Assumption \nameref{assume:Sparse} implies that
\begin{equation}   \label{init:eq:eigenvalues bound 0}
|\sigma^L_i-|\lambda_i|| \leq \|\bm{S}-\bm{S}_{-1}\|_2 \leq \frac{\sigma_r^L}{8} \KW{\alpha \lesssim\frac{1}{\mu r\kappa}}
\end{equation}
holds for all $i$.
Consequently, we have 
\begin{align*} 
&\frac{7}{8}\sigma^L_i \leq |\lambda_i| \leq \frac{9}{8}\sigma^L_i,\qquad \forall 1\leq i\leq r,\numberthis  \label{init:eq:eigenvalues bound 1}\\
&\frac{\|\bm{S}-\bm{S}_{-1}\|_2}{\lab\lambda_r\rab}\leq \frac{\frac{\sigma_r^L}{8}}{\frac{7\sigma_r^L}{8}}=\frac{1}{7}.\numberthis\label{init:eq:eigenvalues bound 2}
\end{align*}

Let $\bm{D}-\bm{S}_{-1}=[\bm{U}_0, \ddot{\bm{U}}_0 ]\begin{bmatrix}\bm{\Lambda}  & \bm{0}\\\bm{0} &\ddot{\bm{\Lambda}}\end{bmatrix}[\bm{U}_0, \ddot{\bm{U}}_0]^T =\bm{U}_0\bm{\Lambda} \bm{U}_0^T+\ddot{\bm{U}}_0\ddot{\bm{\Lambda}}\ddot{\bm{U}}_0^T$ be its eigenvalue decomposition, where $\bm{\Lambda} $ has the $r$ largest eigenvalues in magnitude and $\ddot{\bm{\Lambda}}$ contains the rest eigenvalues. Also, $\bm{U}_0$ contains the first $r$ eigenvectors, and $\ddot{\bm{U}}_0$ has the rest. Notice that  $ \bm{L}_0=\mathcal{H}_r(\bm{D}-\bm{S}_{-1})=\bm{U}_0\bm{\Lambda} \bm{U}_0^T $ due to the symmetric setting.
Denote $\bm{E}=\bm{D}-\bm{S_{-1}}-\bm{L}=\bm{S}-\bm{S}_{-1}$. Let $\bm{u}_i$ be the $i^{th}$ eigenvector of $\bm{D}-\bm{S}_{-1}=\bm{L}+\bm{E}$. For $1\leq i\leq r$, since $(\bm{L}+\bm{E})\bm{u}_i =  \lambda_i \bm{u}_i$, we have
\begin{align*}
\bm{u}_i =   \left(\bm{I}-\frac{\bm{E}}{\lambda_i}\right)^{-1}\frac{\bm{L}}{\lambda_i}\bm{u}_i=\left(\bm{I}+\frac{\bm{E}}{\lambda_i}+\left(\frac{\bm{E}}{\lambda_i}\right)^2+\cdots\right)\frac{\bm{L}}{\lambda_i}\bm{u}_i
\end{align*}
for each $\bm{u}_i$, where the expansion in the last equality is valid because $ \frac{\ln\bm{E}\rn_2}{\lab\lambda_i\rab}\leq\frac{1}{7}$ for all $1\leq i\leq r$ following from \eqref{init:eq:eigenvalues bound 2}. Therefore, $\KW{E \mbox{ SYMMETRIC}}$
\begin{align*}
\|\bm{L}_0-\bm{L}\|_\infty &= \|\bm{U}_0\bm{\Lambda} \bm{U}_0^T -\bm{L}\|_\infty \cr
		             &= \|\bm{L}\bm{U}_0\bm{\Lambda} ^{-1}\bm{U}_0^T\bm{L}-\bm{L}  + \sum_{a+b>0} \bm{E}^a\bm{L}\bm{U}_0\bm{\Lambda} ^{-(a+b+1)}\bm{U}_0^T\bm{L}\bm{E}^b \|_\infty  \cr
		             &\leq \|\bm{L}\bm{U}_0\bm{\Lambda} ^{-1}\bm{U}_0^T\bm{L}-\bm{L}\|_\infty  + \sum_{a+b>0} \|\bm{E}^a\bm{L}\bm{U}_0\bm{\Lambda} ^{-(a+b+1)}\bm{U}_0^T\bm{L}\bm{E}^b \|_\infty \cr
		             &:= \bm{Y}_0 + \sum_{a+b>0} \bm{Y}_{ab}.
\end{align*}

We will handle $\bm{Y}_0$ first. Recall that $\bm{L}=\bm{U}\bm{\Sigma} \bm{V}^T$ is the SVD of the symmetric matrix $\bm{L}$ which is $\mu$-incoherence, i.e., $\bm{U}\bm{U}^T=\bm{V}\bm{V}^T$ and $\|\bm{e}_i^T\bm{U}\bm{U}^T\|_2\leq\sqrt{\frac{\mu r}{n}}$ for all $i$. For each $(i,j)$ entry of $\bm{Y}_0$, we have
\begin{align*}
\bm{Y}_0 &= \max_{ij} |\bm{e}_i^T(\bm{L}\bm{U}_0\bm{\Lambda} ^{-1}\bm{U}_0^T\bm{L}-\bm{L})\bm{e}_j| \cr
												&=\max_{ij} |\bm{e}_i^T\bm{U}\bm{U}^T(\bm{L}\bm{U}_0\bm{\Lambda} ^{-1}\bm{U}_0^T\bm{L}-\bm{L})\bm{U}\bm{U}^T\bm{e}_j| \cr
												&\leq\max_{ij} \|\bm{e}_i^T\bm{U}\bm{U}^T\|_2~\|\bm{L}\bm{U}_0\bm{\Lambda} ^{-1}\bm{U}_0^T\bm{L}-\bm{L}\|_2~ \|\bm{U}\bm{U}^T\bm{e}_j\|_2 \cr		
											&\leq \frac{\mu r}{n}\|\bm{L}\bm{U}_0\bm{\Lambda} ^{-1}\bm{U}_0^T\bm{L}-\bm{L}\|_2,
\end{align*}
where the second equation follows from the fact $\bm{L}=\bm{U}\bm{U}^T\bm{L}=\bm{L}\bm{U}\bm{U}^T$. Since $\bm{L}=\bm{U}_0\bm{\Lambda} \bm{U}_0^T+\ddot{\bm{U}}_0\ddot{\bm{\Lambda}}\ddot{\bm{U}}_0^T -\bm{E}$,
\begin{align*}
&~\|\bm{L}\bm{U}_0\bm{\Lambda} ^{-1}\bm{U}_0^T\bm{L}-\bm{L}\|_2 \cr
=&~ \|(\bm{U}_0\bm{\Lambda} \bm{U}_0^T+\ddot{\bm{U}}_0\ddot{\bm{\Lambda}}\ddot{\bm{U}}_0^T -\bm{E})\bm{U}_0\bm{\Lambda} ^{-1}\bm{U}_0^T(\bm{U}_0\bm{\Lambda} \bm{U}_0^T+\ddot{\bm{U}}_0\ddot{\bm{\Lambda}}\ddot{\bm{U}}_0^T -\bm{E})-\bm{L}\|_2  \cr
=&~\|\bm{U}_0\bm{\Lambda} \bm{U}_0^T-\bm{L}-\bm{U}_0\bm{U}_0^T\bm{E}-\bm{E}\bm{U}_0\bm{U}_0^T-\bm{E}\bm{U}_0\bm{\Lambda} ^{-1}\bm{U}_0^T\bm{E}\|_2  \cr
\leq&~ \|\bm{E}-\ddot{\bm{U}}_0\ddot{\bm{\Lambda}}\ddot{\bm{U}}_0^T\|_2 +2\|\bm{E}\|_2+\frac{\|\bm{E}\|_2^2}{|\lambda_r|}  \cr
\leq&~ \|\ddot{\bm{U}}_0\ddot{\bm{\Lambda}}\ddot{\bm{U}}_0^T\|_2 +4\|\bm{E}\|_2  \cr
\leq&~  |\lambda_{r+1}|+4\|\bm{E}\|_2 \cr
\leq&~  5\|\bm{E}\|_2,
\end{align*}
where the first and fourth inequality follow from (\ref{init:eq:eigenvalues bound 0}) and (\ref{init:eq:eigenvalues bound 2}), and $|\lambda_{r+1}|\leq\|\bm{E}\|_2$ since $\sigma_{r+1}^L=0$. Together, we have
\begin{equation}  \label{init:eq:Y0 bound}
\bm{Y}_0\leq \frac{5\mu r}{n} \|\bm{E}\|_2 \leq 5\alpha \mu r \|\bm{E}\|_\infty,
\end{equation}
where the last inequality follows from Lemma \ref{lemma:bound of sparse matrix}.

Next, we will find an upper bound for the rest part. Note that
\begin{align*}
\bm{Y}_{ab}&=\max_{ij} |\bm{e}_i^T\bm{E}^a\bm{L}\bm{U}_0\bm{\Lambda} ^{-(a+b+1)}\bm{U}_0^T\bm{L}\bm{E}^b\bm{e}_j|  \cr
      &=\max_{ij} |(\bm{e}_i^T\bm{E}^a\bm{U}\bm{U}^T)\bm{L}\bm{U}_0\bm{\Lambda} ^{-(a+b+1)}\bm{U}_0^T\bm{L}(\bm{U}\bm{U}^T\bm{E}^b\bm{e}_j)|  \cr
      &\leq \max_{ij} \|\bm{e}_i^T\bm{E}^a\bm{U}\|_2~\|\bm{L}\bm{U}_0\bm{\Lambda} ^{-(a+b+1)}\bm{U}_0^T\bm{L}\|_2~\|\bm{U}^T\bm{E}^b\bm{e}_j\|_2  \cr
      &\leq \frac{\mu r}{n}( \alpha n\|\bm{E}\|_\infty)^{a+b} \|\bm{L}\bm{U}_0\bm{\Lambda} ^{-(a+b+1)}\bm{U}_0^T\bm{L}\|_2 \cr
      &\leq \alpha \mu r \|\bm{E}\|_\infty \left(\frac{\sigma_r^L}{8}\right)^{a+b-1} \|\bm{L}\bm{U}_0\bm{\Lambda} ^{-(a+b+1)}\bm{U}_0^T\bm{L}\|_2, \KW{ \alpha \lesssim\frac{1}{\mu r\kappa}}
\end{align*}
where the second inequality uses Lemma \ref{init:lemma:bound_power_vector_norm_with_incoherence}. 
Furthermore, by using $\bm{L}=\bm{U}_0\bm{\Lambda} \bm{U}_0^T+\ddot{\bm{U}}_0\ddot{\bm{\Lambda}}\ddot{\bm{U}}_0^T -\bm{E}$ again, we have
\begin{align*}
&~\|\bm{L}\bm{U}_0\bm{\Lambda} ^{-(a+b+1)}\bm{U}_0^T\bm{L}\|_2 \cr
=&~ \|(\bm{U}_0\bm{\Lambda} \bm{U}_0^T+\ddot{\bm{U}}_0\ddot{\bm{\Lambda}}\ddot{\bm{U}}_0^T -\bm{E})\bm{U}_0\bm{\Lambda} ^{-(a+b+1)}\bm{U}_0^T(\bm{U}_0\bm{\Lambda} \bm{U}_0^T+\ddot{\bm{U}}_0\ddot{\bm{\Lambda}}\ddot{\bm{U}}_0^T -\bm{E})\|_2        \cr
=&~ \|\bm{U}_0\bm{\Lambda} ^{-(a+b-1)}\bm{U}_0^T-\bm{E}\bm{L}\bm{U}_0\bm{\Lambda} ^{-(a+b)}\bm{U}_0^T-\bm{L}\bm{U}_0\bm{\Lambda} ^{-(a+b)}\bm{U}_0^T\bm{E}+\bm{E}\bm{L}\bm{U}_0\bm{\Lambda} ^{-(a+b+1)}\bm{U}_0^T\bm{E}\|_2 \cr
\leq&~ |\lambda_r|^{-(a+b-1)} + |\lambda_r|^{-(a+b)}\|\bm{E}\|_2+ |\lambda_r|^{-(a+b)}\|\bm{E}\|_2+ |\lambda_r|^{-(a+b+1)}\|\bm{E}\|_2^2 \cr
=&~ |\lambda_r|^{-(a+b-1)}\left( 1+ \frac{2\|\bm{E}\|_2}{|\lambda_r|}+\left(\frac{\|\bm{E}\|_2}{|\lambda_r|}\right)^2 \right)  \cr
=&~ |\lambda_r|^{-(a+b-1)}\left( 1+ \frac{\|\bm{E}\|_2}{|\lambda_r|} \right)^2  \cr
\leq&~ 2|\lambda_r|^{-(a+b-1)} \cr
\leq&~ 2\left(\frac{7}{8}\sigma_r^L\right)^{-(a+b-1)},
\end{align*}
where the second inequality follows from \eqref{init:eq:eigenvalues bound 2} and the last inequality follows from \eqref{init:eq:eigenvalues bound 1}. Together, we have
\begin{align*}
\sum_{a+b>0}\bm{Y}_{ab}&\leq  \sum_{a+b>0} 2\alpha \mu r \|\bm{E}\|_\infty \left(\frac{\frac{1}{8}\sigma_r^L}{\frac{7}{8}\sigma_r^L }\right)^{a+b-1}  \cr
                  &\leq 2\alpha \mu r \|\bm{E}\|_\infty  \sum_{a+b>0} \left(  \frac{1}{7}\right)^{a+b-1}\cr
                  &\leq 2\alpha \mu r \|\bm{E}\|_\infty  \left(  \frac{1}{1-\frac{1}{7}}\right)^2\cr
                  &\leq 3\alpha \mu r \|\bm{E}\|_\infty.  \numberthis\label{init:eq:sum Y_ab bound}
\end{align*}
Finally, combining \eqref{init:eq:Y0 bound}) and \eqref{init:eq:sum Y_ab bound}) together yields
\begin{align*}
\|\bm{L}_0-\bm{L}\|_\infty &= \bm{Y}_0 + \sum_{a+b>0} \bm{Y}_{ab}  \cr
                     &\leq 5\alpha \mu r \|\bm{E}\|_\infty + 3\alpha \mu r \|\bm{E}\|_\infty \cr
                     &\leq \frac{\mu r}{4n}\sigma_1^L, \KW{\alpha \lesssim\frac{1}{\mu r}}\numberthis\label{init:eq: L-L0 inf norm}
 \end{align*}
where the last step uses \eqref{eq:SminusS}  and the bound of $\alpha $ in Assumption \nameref{assume:Sparse}.

~\\
(iv) From the thresholding rule, we know that
\[
[\bm{S}_{0}]_{ij}=[\mathcal{T}_{\zeta_{0}} (\bm{S}+\bm{L}-\bm{L}_{0})]_{ij} =
\begin{cases}
\mathcal{T}_{\zeta_{0}} ([\bm{S}+\bm{L}-\bm{L}_{0}]_{ij}) & (i,j)\in\Omega \cr
\mathcal{T}_{\zeta_{0}} ([\bm{L}-\bm{L}_{0}]_{ij})        & (i,j)\in\Omega^c  \cr
\end{cases}.
\]
So (\ref{init:eq:eigenvalues bound 1}), (\ref{init:eq: L-L0 inf norm}) and $\zeta_{0}=\frac{\mu r}{2n}\lambda_1$ imply $[\bm{S}_{0}]_{ij}=0$ for all $(i,j)\in\Omega^c$, i.e., $supp(\bm{S}_{0}):=\Omega_{0}\subset \Omega$. Also, for any entries of $\bm{S}-\bm{S}_{0}$, there hold
\[
[\bm{S}-\bm{S}_{0}]_{ij} =
\begin{cases}
0 &  \cr
[\bm{L}_{0}-\bm{L}]_{ij} &   \cr
[\bm{S}]_{ij}         & \cr
\end{cases} \leq
\begin{cases}
0 &  \cr
\|\bm{L}-\bm{L}_{0}\|_\infty       &  \cr
\|\bm{L}-\bm{L}_{0}\|_\infty +\zeta_{0} & \cr
\end{cases} \leq
\begin{cases}
0 &  (i,j)\in \Omega^c   \cr
\frac{\mu r}{4n} \sigma_1^L       & (i,j)\in \Omega_0   \cr
\frac{\mu r}{n}  \sigma_1^L & (i,j)\in \Omega\backslash\Omega_0. \cr
\end{cases}
\]
Here the last inequality follows from (\ref{init:eq:eigenvalues bound 1}) which implies $\zeta_{0}=\frac{\mu r}{2n}\lambda_1\leq \frac{3\mu r}{4n}\sigma_1^L$.
Therefore, we have
\[
supp(\bm{S}_{0})\subset\Omega\quad \textnormal{and}\quad \|\bm{S}-\bm{S}_{0}\|_\infty\leq\frac{\mu r}{n}\sigma_1^L.
\]
The proof is compete by  noting \eqref{eq:norm:L-L0} and the above results. 
\end{proof}

\section{Discussion and Future Direction} \label{sec:discussion}



We have presented a highly efficient algorithm AccAltProj for robust principal component analysis. The algorithm is developed by introducing a novel subspace projection step before the SVD truncation, which reduces the per iteration computational complexity of the algorithm of alternating projections significantly. Theoretical recovery guarantee has been established for the new algorithm, while numerical simulations show that our algorithm is superior to other state-of-the-art algorithms. 

There are three lines of research for future work. Firstly, the theoretical number of the non-zero entries in a sparse matrix below which AccAltProj can achieve successful recovery is highly pessimistic compared with our numerical findings. This suggests the possibility of improving the theoretical result. Secondly, recovery stability of the proposed algorithm to additive noise will be investigated in the future. Finally, this paper focuses on the fully observed setting. The proposed algorithm might be similarly extended to the partially observed setting where only partial entries of a matrix are observed. It is also interesting to study the recovery guarantee of  the proposed algorithm under this partial observed setting. 



\bibliographystyle{unsrt}
\bibliography{ref}

\begin{thebibliography}{10}

\bibitem{netrapalli2014non}
Praneeth Netrapalli, UN~Niranjan, Sujay Sanghavi, Animashree Anandkumar, and
  Prateek Jain.
\newblock Non-convex robust {PCA}.
\newblock In {\em Advances in Neural Information Processing Systems}, pages
  1107--1115, 2014.

\bibitem{li2004statistical}
Liyuan Li, Weimin Huang, Irene Yu-Hua Gu, and Qi~Tian.
\newblock Statistical modeling of complex backgrounds for foreground object
  detection.
\newblock {\em IEEE Transactions on Image Processing}, 13(11):1459--1472, 2004.

\bibitem{huang2012singing}
Po-Sen Huang, Scott~Deeann Chen, Paris Smaragdis, and Mark Hasegawa-Johnson.
\newblock Singing-voice separation from monaural recordings using robust
  principal component analysis.
\newblock In {\em Acoustics, Speech and Signal Processing (ICASSP), 2012 IEEE
  International Conference on}, pages 57--60. IEEE, 2012.

\bibitem{chen2012clustering}
Yudong Chen, Sujay Sanghavi, and Huan Xu.
\newblock Clustering sparse graphs.
\newblock In {\em Advances in neural information processing systems}, pages
  2204--2212, 2012.

\bibitem{mobahi2011holistic}
Hossein Mobahi, Zihan Zhou, Allen~Y Yang, and Yi~Ma.
\newblock Holistic 3{D} reconstruction of urban structures from low-rank
  textures.
\newblock In {\em Computer Vision Workshops (ICCV Workshops), 2011 IEEE
  International Conference on}, pages 593--600. IEEE, 2011.

\bibitem{tharrault2008fault}
Yvon Tharrault, Gilles Mourot, Jos{\'e} Ragot, and Didier Maquin.
\newblock Fault detection and isolation with robust principal component
  analysis.
\newblock {\em International Journal of Applied Mathematics and Computer
  Science}, 18(4):429--442, 2008.

\bibitem{wright2009robust}
John Wright, Arvind Ganesh, Shankar Rao, Yigang Peng, and Yi~Ma.
\newblock Robust principal component analysis: {E}xact recovery of corrupted
  low-rank matrices via convex optimization.
\newblock In {\em Advances in neural information processing systems}, pages
  2080--2088, 2009.

\bibitem{candes2011robust}
Emmanuel~J Cand{\`e}s, Xiaodong Li, Yi~Ma, and John Wright.
\newblock Robust principal component analysis?
\newblock {\em Journal of the ACM (JACM)}, 58(3):11, 2011.

\bibitem{chandrasekaran2011rank}
Venkat Chandrasekaran, Sujay Sanghavi, Pablo~A Parrilo, and Alan~S Willsky.
\newblock Rank-sparsity incoherence for matrix decomposition.
\newblock {\em SIAM Journal on Optimization}, 21(2):572--596, 2011.

\bibitem{candes2009exact}
Emmanuel~J Cand{\`e}s and Benjamin Recht.
\newblock Exact matrix completion via convex optimization.
\newblock {\em Foundations of Computational mathematics}, 9(6):717, 2009.

\bibitem{wang2008new}
Yilun Wang, Junfeng Yang, Wotao Yin, and Yin Zhang.
\newblock A new alternating minimization algorithm for total variation image
  reconstruction.
\newblock {\em SIAM Journal on Imaging Sciences}, 1(3):248--272, 2008.

\bibitem{chan2000convergence}
Tony~F Chan and Chiu-Kwong Wong.
\newblock Convergence of the alternating minimization algorithm for blind
  deconvolution.
\newblock {\em Linear Algebra and its Applications}, 316(1-3):259--285, 2000.

\bibitem{o2007alternating}
Joseph~A O'Sullivan and Jasenka Benac.
\newblock Alternating minimization algorithms for transmission tomography.
\newblock {\em IEEE Transactions on Medical Imaging}, 26(3):283--297, 2007.

\bibitem{keshavan2012efficient}
Raghunandan~Hulikal Keshavan et~al.
\newblock {\em Efficient algorithms for collaborative filtering}.
\newblock PhD thesis, Stanford University, 2012.

\bibitem{jain2013low}
Prateek Jain, Praneeth Netrapalli, and Sujay Sanghavi.
\newblock Low-rank matrix completion using alternating minimization.
\newblock In {\em Proceedings of the forty-fifth annual ACM symposium on Theory
  of computing}, pages 665--674. ACM, 2013.

\bibitem{hardt2013provable}
Moritz Hardt.
\newblock On the provable convergence of alternating minimization for matrix
  completion.
\newblock {\em arXiv preprint arXiv:1312.0925}, 2013.

\bibitem{tannerwei2016asd}
Jared Tanner and Ke~Wei.
\newblock Low rank matrix completion by alternating steepest descent methods.
\newblock {\em Applied and Computational Harmonic Analysis}, 40(2):417--429,
  2016.

\bibitem{netrapalli2013phase}
Praneeth Netrapalli, Prateek Jain, and Sujay Sanghavi.
\newblock Phase retrieval using alternating minimization.
\newblock In {\em Advances in Neural Information Processing Systems}, pages
  2796--2804, 2013.

\bibitem{cai2017fast}
Jian-Feng Cai, Haixia Liu, and Yang Wang.
\newblock Fast rank one alternating minimization algorithm for phase retrieval.
\newblock {\em arXiv preprint arXiv:1708.08751}, 2017.

\bibitem{zhang2017phase}
Teng Zhang.
\newblock Phase retrieval using alternating minimization in a batch setting.
\newblock {\em arXiv preprint arXiv:1706.08167}, 2017.

\bibitem{peters2009interference}
Steven~W Peters and Robert~W Heath.
\newblock Interference alignment via alternating minimization.
\newblock In {\em Acoustics, Speech and Signal Processing, 2009. ICASSP 2009.
  IEEE International Conference on}, pages 2445--2448. IEEE, 2009.

\bibitem{agarwal2014learning}
Alekh Agarwal, Animashree Anandkumar, Prateek Jain, Praneeth Netrapalli, and
  Rashish Tandon.
\newblock Learning sparsely used overcomplete dictionaries.
\newblock In {\em Conference on Learning Theory}, pages 123--137, 2014.

\bibitem{yu2016alternating}
Xianghao Yu, Juei-Chin Shen, Jun Zhang, and Khaled~B Letaief.
\newblock Alternating minimization algorithms for hybrid precoding in
  millimeter wave {MIMO} systems.
\newblock {\em IEEE Journal of Selected Topics in Signal Processing},
  10(3):485--500, 2016.

\bibitem{pu2017complexity}
Ye~Pu, Melanie~N Zeilinger, and Colin~N Jones.
\newblock Complexity certification of the fast alternating minimization
  algorithm for linear {MPC}.
\newblock {\em IEEE Transactions on Automatic Control}, 62(2):888--893, 2017.

\bibitem{vandereycken2013low}
Bart Vandereycken.
\newblock Low-rank matrix completion by {R}iemannian optimization.
\newblock {\em SIAM Journal on Optimization}, 23(2):1214--1236, 2013.

\bibitem{wei2016guarantees_completion}
Ke~Wei, Jian-Feng Cai, Tony~F Chan, and Shingyu Leung.
\newblock Guarantees of {R}iemannian optimization for low rank matrix
  completion.
\newblock {\em arXiv preprint arXiv:1603.06610}, 2016.

\bibitem{wei2016guarantees_recovery}
Ke~Wei, Jian-Feng Cai, Tony~F Chan, and Shingyu Leung.
\newblock Guarantees of {R}iemannian optimization for low rank matrix recovery.
\newblock {\em SIAM Journal on Matrix Analysis and Applications},
  37(3):1198--1222, 2016.

\bibitem{ngo2012scaled}
Thanh Ngo and Yousef Saad.
\newblock Scaled gradients on {G}rassmann manifolds for matrix completion.
\newblock In {\em Advances in Neural Information Processing Systems}, pages
  1412--1420, 2012.

\bibitem{mishra2012riemannian}
Bamdev Mishra, K~Adithya Apuroop, and Rodolphe Sepulchre.
\newblock A {R}iemannian geometry for low-rank matrix completion.
\newblock {\em arXiv preprint arXiv:1211.1550}, 2012.

\bibitem{mishra2014r3mc}
Bamdev Mishra and Rodolphe Sepulchre.
\newblock {R3MC}: {A} {R}iemannian three-factor algorithm for low-rank matrix
  completion.
\newblock In {\em Decision and Control (CDC), 2014 IEEE 53rd Annual Conference
  on}, pages 1137--1142. IEEE, 2014.

\bibitem{mishra2014fixed}
Bamdev Mishra, Gilles Meyer, Silv{\`e}re Bonnabel, and Rodolphe Sepulchre.
\newblock Fixed-rank matrix factorizations and {R}iemannian low-rank
  optimization.
\newblock {\em Computational Statistics}, 29(3-4):591--621, 2014.

\bibitem{absil2009optimization}
P-A Absil, Robert Mahony, and Rodolphe Sepulchre.
\newblock {\em Optimization algorithms on matrix manifolds}.
\newblock Princeton University Press, 2009.

\bibitem{yi2016fast}
Xinyang Yi, Dohyung Park, Yudong Chen, and Constantine Caramanis.
\newblock Fast algorithms for robust {PCA} via gradient descent.
\newblock In {\em Advances in neural information processing systems}, pages
  4152--4160, 2016.

\bibitem{xu2010robust}
Huan Xu, Constantine Caramanis, and Sujay Sanghavi.
\newblock Robust {PCA} via outlier pursuit.
\newblock In {\em Advances in Neural Information Processing Systems}, pages
  2496--2504, 2010.

\bibitem{WSLer2013}
Yi~Wang, Arthur Szlam, and Gilad Lerman.
\newblock Robust locally linear analysis with applications to image denosing
  and blind inpainting.
\newblock {\em SIAM Journal on Imaging Sciences}, 6(1):526--562, 2013.

\bibitem{gu2016low}
Quanquan Gu, Zhaoran Wang, and Han Liu.
\newblock Low-rank and sparse structure pursuit via alternating minimization.
\newblock In {\em Artificial Intelligence and Statistics}, pages 600--609,
  2016.

\bibitem{chen2015fast}
Yudong Chen and Martin~J Wainwright.
\newblock Fast low-rank estimation by projected gradient descent: {G}eneral
  statistical and algorithmic guarantees.
\newblock {\em arXiv preprint arXiv:1509.03025}, 2015.

\bibitem{bhatia2013matrix}
Rajendra Bhatia.
\newblock {\em Matrix analysis}, volume 169.
\newblock Springer Science \& Business Media, 2013.

\end{thebibliography}
\end{document}